\documentclass[11pt]{article}
\usepackage{fullpage}
%\documentclass[prodmode,acmec]{ec-acmsmall}
%\acmVolume{X}
%\acmNumber{X}
%\acmArticle{X}
%\acmYear{2013}
%\acmMonth{2}
\usepackage[numbers]{natbib}

\usepackage{amsfonts}
\usepackage{algorithmic}
\usepackage[boxed]{algorithm}
\usepackage{amsmath,amssymb}
\usepackage{amsmath,amsthm,amssymb}
\usepackage{latexsym}
\usepackage{epic}
\usepackage{epsfig}
\usepackage{amscd}
\usepackage{url}
\usepackage{verbatim}
\usepackage{setspace}

\newtheorem{theorem}{Theorem}[section]
 \newtheorem{corollary}[theorem]{Corollary}
 \newtheorem{lemma}[theorem]{Lemma}

 \newtheorem{definition}[theorem]{Definition}
 \newtheorem{remark}[theorem]{Remark}
 
 \newtheorem{observation}[theorem]{Observation}
 \newtheorem{fact}[theorem]{Fact}

 \newtheorem{note}[theorem]{Note}

\newcommand{\cross}{\times}

\newcommand{\set}[1]{\left\{ #1 \right\}}
\newcommand{\union}{\cup}
\newcommand{\intersect}{\cap}

\newcommand{\sm}{\setminus}

\renewcommand{\hat}{\widehat}
\renewcommand{\tilde}{\widetilde}

%Operators: These operators are such that a subscript appears below
%in \[ \] math mode, and to the bottom right in regular $ $ math mode

%regular version

\def\ex{\qopname\relax n{E}}
\def\min{\qopname\relax n{min}}
\def\max{\qopname\relax n{max}}
\def\maxtwo{\qopname\relax n{max2}}

\def\argmax{\qopname\relax n{argmax}}

%bold version
\def\Pr{\qopname\relax n{\mathbf{Pr}}}
\def\Ex{\qopname\relax n{\mathbf{E}}}

\newcommand{\RR}{\mathbb{R}}
\newcommand{\RRp}{\RR_+}

\def\D{\mathcal{D}}

\def\F{\mathcal{F}}

\def\I{\mathcal{I}}

\def\P{\mathcal{P}}

\def\S{\mathcal{S}}

\def\sse{\subseteq}

\newcommand{\eat}[1]{}

%Plain eps or pdf figure. Use IPE to embed tex in it.

%Combined PS/Latex figure. This is option of choice for including tex
%code from xfig. Remember to export from xfig using "combined ps/latex" option

%Combined PDF/Latex figure. This is option of choice for including tex
%code from xfig. Remember to export from xfig using "combined pdf/latex" option

%Algorithmic Environment stuff
\newcommand{\INPUT}{\item[\textbf{Input:}]}
\newcommand{\OUTPUT}{\item[\textbf{Output:}]}

%LP environment stuff

\newenvironment{lp*}{\begin{equation*}  \begin{array}{lll}}{\end{array}\end{equation*}}

\begin{document}

\title{Constrained Signaling in Auction Design}

\author{Shaddin Dughmi \thanks{University of Southern California} \and
 Nicole Immorlica \thanks{Microsoft Research.  This work was partially supported by NSF CAREER Grant CCF-1055020, the Alfred P.\ Sloan Research Fellowship, and the Microsoft New Faculty Fellowship.} \and
 Aaron Roth \thanks{University of Pennsylvania. This work was partially supported by an NSF CAREER Grant and NSF Grant CCF-1101389}
}

\maketitle

\begin{abstract}\small\baselineskip=9pt
We consider the problem of an auctioneer who faces the task of selling a good (drawn from a known distribution) to a set of buyers, when the auctioneer does not have the capacity to describe to the buyers the exact identity of the good that he is selling. Instead, he must come up with a constrained signalling scheme: a (non injective) mapping from goods to signals, that satisfies the constraints of his setting. For example, the auctioneer may be able to communicate only a bounded length message for each good, or he might be legally constrained in how he can advertise the item being sold. Each candidate signaling scheme induces an incomplete-information game among the buyers, and the goal of the auctioneer is to choose the signaling scheme and accompanying auction format that optimizes welfare. In this paper, we use techniques from submodular function maximization and no-regret learning to give algorithms for computing constrained signaling schemes for a variety of constrained signaling problems.
\end{abstract}
%\vfill
%\thispagestyle{empty}
%\setcounter{page}{0}
%\pagebreak

 \section{Introduction}

% cute story
At a cafe in {\it Portlandia}~\cite{Portlandia}, customers about to order the chicken ask the waitress for more details regarding its source. She informs them that the chicken is a heritage breed, woodland raised, and has been fed a diet of sheep's milk, soy, and hazelnuts, and she assures them the chicken is indeed local, free range, and so on.  Ah but, the customers ask, is that USDA Organic, or Oregon Organic, or Portland Organic?

% introduce the idea of signaling schemes, give some examples
Organic certifications are a means by which a seller can communicate parameters of a product to potential buyers.  The certification system creates a simplified and practical, yet sufficiently expressive, set of signals as the basis for this communication.\footnote{Unfortunately for the waitress in {\it Portlandia}, the language was not sufficiently expressive for her customers.  They decided to drive to the farm to visit the chicken's home.} Signaling is common throughout complex markets.  The USDA classifies meat into a small number of discrete grades.  Graduating high-school students signal their potential to employers and colleges through SAT scores and transcripts.  Targeted advertising sales in both online and offline media describe viewers to ad buyers through a fixed and (comparatively) small set of demographic information.\footnote{See Milgrom~\cite{Milgrom10} and Levin and Milgrom~\cite{LevinM10} for further discussion of signaling in various markets including online advertising, wheat sales, diamond sales, and spectrum auctions.}

% define signaling schemes abstractly
We consider a market with one seller, one or more buyers, and multiple potential items.  A {\it signaling scheme} for the market maps each potential item to a signal.  The seller commits to a signaling scheme up front and sells the signal induced by the item through a sale mechanism (auction, posted price, etc.).
This choice of signaling scheme induces a game among buyers: the buyers find themselves looking to buy an item in the mechanism without knowing what item is for sale, but rather only knowing that it is some item that induced a particular signal from a known signaling scheme. In other words, a signaling scheme is a fixed bundling of the goods that the seller commits to up front. In this work, we focus on designing signaling schemes and corresponding auction mechanisms that maximize social welfare. Once the seller commits to a particular signaling scheme, the second price auction maximizes welfare with respect to the agreed upon bundling, and so without loss of generality, the ``auction mechanism'' can be taken to be the simple second price auction. Our goal is to (subject to various constraints) define signaling schemes that maximize the expected welfare in the induced bundle auction.

%Thus, the choice of signaling scheme by the seller corresponds to a refinement of the information set presented to the buyers.  A deterministic scheme that groups potential items into categories, for example, induces a game in which buyers buy ``bundles'' of potential items. Different choices of signaling scheme can result in different welfare or revenue.

%In this work, we focus on designing signaling schemes and corresponding auction mechanisms that maximize social welfare.  In the absence of any constraints on the signaling scheme, an auctioneer maximizes welfare by precisely identifying the item and running a second-price auction.
%
% discuss desirable properties - constraints, welfare, revenue, posted price, second price
Practical settings impose a variety of constraints on feasible signaling schemes.
%For example, for a signaling scheme to be useful as a means of communication, it must satisfy certain informational constraints.
In many markets, the {\it amount} of information that can be conveyed is highly constrained.  In online advertising auctions, the sheer volume of sales and diversity of viewers make it impractical to communicate precise details of every viewer.  In addition, various reputational and legal constraints may restrict a seller from announcing certain signals for certain items: e.g., organic products must pass a certification process from the corresponding agency to be sold under that label.
The main question we address is a computational one: {\it how should an auctioneer compute a welfare-maximizing signaling scheme in the presence of exogenously-defined signaling constraints?}
%deterministic opt
We first observe that welfare maximization in constrained signaling is a convex function maximization problem, and so there is always a deterministic signaling scheme which maximizes expected welfare.  Furthermore, fixing the signaling scheme, the optimal mechanism for welfare maximization is the second-price auction.   These two facts significantly simplify our analysis, as we now only have to search over deterministic signaling schemes.  We also observe that for second-price auctions revenue and welfare are intimately related: the revenue is at most the welfare after excluding an arbitrary player.  This allows us to extend some of our welfare results to the revenue setting.

As our primary signaling constraint, we focus on communication-bounded signaling in which the constraint is the amount of communication which can be used to send the signal. We consider settings in which goods are points in a very high dimensional space $\Omega = \mathbb{R}^d$, where $d$ is much too large to communicate the entire vectors (e.g. $d$ might be exponential in the number of bidders $n$). In these settings, we first consider agents who have {\it geometric valuations}, defined in terms of either the distance or the angle from the realized good to some target good or set of goods. Thus if $d$ represents, for example, the vector of potential features of an item, then the distance or angle represents how close the actual item is to the agent's ideal item (see Section~\ref{sec:geometric} for a more precise motivation). We use techniques from no regret learning and metric embeddings to give highly space efficient signaling schemes that achieve nearly the optimal welfare (compared even to the optimal unconstrained signaling scheme), when values are drawn according to a known prior, and in one instance, even against adversarially chosen goods, and without knowledge of the prior distribution over valuations. We make a novel use of no-regret learning algorithms as compression schemes. Specifically, we view the item for sale as a function labeling each bidder with their valuation for the item. Given a realized item, we learn this function using the multiplicative weights update rule, and then communicate the learned function by sending only the identity of the (very small number of) update vectors, which results in a low-communication scheme.  The communication required by these schemes depends only logarithmic on the dimension $d$ and allows for an infinite set of potential goods.

%bipartite welfare
We next study agents with {\it arbitrary valuations} and show how to completely eliminate the dependence on the dimension for a bounded number of potential goods with a constant loss in the approximation factor.
Our technique here extends to other exogenously imposed constraints, called bipartite signaling, in which we are given a set of feasible signals, and each item is allowed to be mapped to any one of a subset of these signals.\footnote{For example, a signaling scheme for online advertising auctions may consist of all potential tuples of attributes: age, location, ethnicity, gender, and income, say.  A viewer can be matched to any tuple that matches his true characteristics.}
For these results, we make the unrealistic assumption that the seller knows the set of values of the buyers precisely.  Our motivation for studying this setting is two-fold: for one, we hope that it will lead to mechanisms for the unknown valuations case, and indeed some of our results carry through when the amount of uncertainty in the unknown valuations case is bounded in some sense; for another, in large markets with limited numbers of valuation types, an auctioneer can be relatively confident regarding the set of valuations in the market.
When valuations are known to the auctioneer, we are able to reduce welfare maximization in communication-bounded bipartite signaling to submodular function maximization subject to a matroid constraint, thus implying a $(1-1/e)$-approximation for welfare maximization with respect to the optimal communication-bounded signaling scheme. Starting from our approximately welfare-maximizing signaling scheme, and using the connection between revenue and welfare described above as well as careful ``mixing'' of different signals,  we also devise  a constant approximation for revenue maximization in the communication-bounded signaling setting for auctioneers that are constrained to use second-price auctions.  We further show how to extend some of our results to the unknown valuation case when the prior has constant-size support.
%cardinality constrained revenue
We couple our results with a hardness result: even in the communication-bounded signaling setting with known valuations, it is {\sf NP}-hard to approximate either objective to a factor better than $(1-1/e)$ via a reduction from max-cover.

\paragraph{Related Work}
 The study of markets with information asymmetries between sellers and buyers was first introduced by Akerlof~\cite{Akerlof70}. Since then, a rich literature has examined the effects of information revelation, i.e. signaling, in markets and auctions. Most notably, the ``Linkage principle'' of Milgrom and Weber~\cite{MilgromW82} shows that, under some conditions, a seller always increases his revenue by signalling more information about the good for sale. However the linkage principle requires fairly strong assumptions regarding the joint distribution of player valuations for the good. In particular, it rarely holds in settings where players come from different demographics with negatively correlated values for the good. A review of the literature studying the limits of the linkage principle can be found in Emek et al~\cite{EmekFGLT12}. The non-applicability of the linkage principle to the settings we consider goes even beyond these limitations --- even when full transparency is optimal for an auctioneer, the presence of constraints on the amount and nature of information revealed by our signalling schemes introduces intricate tradeoffs in choosing  \emph{which} information to reveal. Quantifying those tradeoffs inevitably requires examination of these settings with an optimization lens, as we do here.

Our work is inspired by, and builds on, recent papers that examine optimal signalling schemes in \emph{unconstrained settings}. Specifically, Emek et al ~\cite{EmekFGLT12} and Miltersen and Sheffet~\cite{MiltersenS12} examine signalling for revenue maximization in a second price auction,  where no constraints are placed on the number or nature of signals. In such settings, full information revelation is optimal for an auctioneer interested in maximizing welfare, and both works show that a revenue-maximizing scheme can be computed efficiently when player valuations are known. Emek et al ~\cite{EmekFGLT12} also obtain partial results for revenue maximization when player valuations are drawn from a Bayesian prior. Our results can be thought of as the extension of these works to settings where social, legal, or practical constraints are placed on the auctioneer's signaling policy.

We also mention some results that are related to ours in the techniques used. In the geometric setting considered in Section \ref{sec:geometric}, in which players value goods $y \in \mathbb{R}^d$ according to \emph{inner product valuations}, our  communication bounded signaling scheme applies the multiplicative weights framework of Arora, Hazan, and Kale \cite{AHK}. We use multiplicative weights in an unusual way, both as a no-regret learning algorithm and as a \emph{compression scheme}. That is, we use both the fact that multiplicative weights can quickly learn a hypothesis $\hat{y}$ that closely approximates a vector $y$ with respect to a fixed number of inner product valuations, and the fact that the hypothesis $\hat{y}$ can be concisely communicated by transmitting only the update operations of multiplicative weights, rather than the vector $\hat{y}$ itself. We are not aware of multiplicative weights being used explicitly as a compression scheme elsewhere, although this is related to the use that no-regret algorithms have recently found in differential privacy \cite{RR10,HR10,GRU12}.

 In Section \ref{sec:bipartite}, we point out that the special case of bipartite signaling where the graph is complete is technically equivalent to the clustering problem considered in \cite{mahdianclustering}. Moreover, the problem of computing a revenue-maximizing bundling of goods  considered in~Ghosh et al.~\cite{GhoshNS07} is both conceptually and technically similar to the optimization problem we face in our revenue-maximizing scheme (Section \ref{sec:revenue}). The idea of ``merging'' signals in our setting is inspired by their algorithm, though the constraint on the number of signals in our setting poses additional technical hurdles. The results of Section \ref{sec:bipartite} also heavily use techniques from the combinatorial auctions literature  (\cite{Feige06, DS06, DNS05}), as well as the submodular function maximization result of Vondrak \cite{vondrak08}.

%%% Local Variables:
%%% mode: latex
%%% TeX-master: "signalling"
%%% End:

\section{Preliminaries}
\label{sec:prelim}

%\subsection{Constrained Signalling Problems}
We consider a setting in which there is a (possibly infinite) set $\Omega$ of possible items for sale, and a single item $\omega \in \Omega$ is drawn from a distribution $p \in \Delta_\Omega$. There is a set of $n$ players, each of whom is equipped with a \emph{valuation} $v_i : \Omega \to \RR$ mapping items to the real numbers. The valuation profile $(v_1,\ldots,v_n)$ is drawn from a distribution $\D$. We assume that $\Omega$, $p$, and $\D$ are common knowledge, while each player's valuation $v_i$ is private to player $i$.  It will be useful to interpret the possible items $\Omega$ as subsets of $\mathbb{R}^d$, where $d$ is a vector of possible item features. %(these could be binary or real-valued features, e.g., gender or income of a viewer in advertising auctions).

We assume that the realization of item $\omega$ is ex-ante unknown to the players, but known to  the auctioneer.  We consider an auctioneer who first observes the drawn item $\omega$, and then announces a string $s$, known as a \emph{signal}. The (possibly randomized) policy by which the auctioneer chooses his signal, which we refer to as a \emph{signaling scheme}, is common knowledge. After players observe the signal $s$, which can be thought of as a random variable correlated with the realization of the item $\omega$, an auction for item $\omega$ is run.

We adopt the perspective of an auctioneer seeking to optimize his choice of signaling scheme and corresponding auction, with the goal of maximizing the expected welfare.  With no constraints on the signal, an auctioneer could generate optimal welfare by announcing the item $\omega$ and running a second-price auction. Our focus is on a constrained auctioneer. The class of constrained signalling problems is defined as follows. %Such a solution may be infeasible.  For example, communicating the identity of the item requires $\min(d,\log|\Omega|)$ bits which may be prohibitively large.  Thus we focus on settings where, for practical, social, or legal reasons, the auctioneer is constrained in their choice of signaling scheme. We define the class of constrained signaling problems, as follows.

\begin{definition}
  A \emph{constrained signaling problem}  is a family of instances, each given by:
  \begin{itemize}%[noitemsep,nolistsep]
  \item  A set $\Omega$ of items, and a distribution $p \in \Delta_{\Omega}$ over these items.
  \item A set $[n]$ of players, where each player $i$ is equipped by a private valuation $v_i: \Omega \to \RRp$. The tuple of valuations $(v_1,\ldots,v_n)$ is drawn from a common prior $\D$.
  \item A set of signals $\S$, and a set $\F \sse \S^\Omega$ of \emph{valid signaling maps.}
  \end{itemize}
\end{definition}

When $\D$ is a trivial prior, i.e. $(v_1,\ldots,v_n)$ are deterministic, we say our signaling problem has \emph{known valuations}; otherwise we say it has \emph{unknown valuations}. A solution to a constrained signaling problem is a \emph{valid  signaling scheme}, defined as a  distribution $x \in \Delta_\F$ over valid signaling maps $\F$, and corresponding auction. We note that given a constrained signaling problem, any valid signaling scheme induces a set of information states, namely the pre-image of the mapping.  Fixing these information states, the well-known second-price auction maximizes welfare.  Therefore, we can assume, without loss of generality, that the auctioneer runs a second-price auction.

When the signaling scheme $x$ is a point distribution, we say the scheme is \emph{deterministic}. Given  $x$ and an item $\omega \in \Omega$, we use $x(\omega)$ to denote the random variable $f(\omega)$ for $f \sim x$. Moreover, given  item $\omega \in \Omega$ and signal $s \in \S$, we abuse notation and use $x(\omega,s)$ to denote the probability that $f(\omega)=s$ for $f \sim x$. Similarly, given a signal $s \in \S$ we use $x(s)$ to denote the probability that $f(\omega)=s$ for $f \sim x$ and $\omega \sim p$; it is easy to see that $x(s) = \sum_{\omega \in \Omega} p_\omega x(\omega,s)$.

A signaling scheme $x$ induces, for each signal $s$, a second-price auction where players have independent private values.%\footnote{This crucially relies on the fact that our signaling schemes are ``symmetric,'' in the sense that the auctioneer announces the same information --- i.e. signal --- to all players. An ``asymmetric'' signaling scheme, which releases different information to different players, would induce a second price auction with interdependent values, which we avoid in this paper.} 
Specifically, it is a dominant strategy for each  player $i$ to bid his value for item $\omega \sim p$ conditioned on signal $s$ --- namely
$ v_i|s,x := \Ex_{f \sim x, \omega \sim \D} [ v_i(\omega) | f(\omega) = s ] = \frac{\sum_{\omega \in \Omega} x(\omega,s) p_\omega v_i(\omega)}{x(s)}$
The winning player for signal $s$ is the player maximizing $v_i | s,x$. Parametrized by the valuation profile $v$, the resulting welfare of the auction, in expectation over all draws of the item $\omega$, is given by
%\begin{equation*}
$welfare(x,v) := \sum_{s \in \S} x(s) \max_{i=1}^n v_i | s,x = \sum_{s \in \S}  \max_{i=1}^n  \sum_{\omega \in \Omega} x(\omega,s) p_\omega v_i(\omega).$
%\end{equation*}
Using $\hat{v}_i(\omega) := p_\omega v_i(\omega)$ to denote player $i$'s value for item $\omega$ weighted by $\omega$'s probability, gives
\begin{equation}
 welfare(x,v)=\sum_{s \in \S}  \max_{i=1}^n  \sum_{\omega \in \Omega} x(\omega,s) \hat{v}_i(\omega). \label{eq:welfare}
\end{equation}
In the case of unknown valuations, the expected welfare of the auction over draws of the players' valuations, which we denote by $welfare(x)$, is the expectation of $welfare(x,v)$ over $v \sim \D$.

%\iffalse
%And similarly, the revenue in the second price auction is given by
%\begin{equation}
%rev(x,v) := \sum_{s \in \S} x(s) \maxtwo_{i=1}^n v_i | s,x = \sum_{s \in \S}  \maxtwo_{i=1}^n  \sum_{j \in \Omega} x(j,s) \hat{v}_i(j), \label{eq:revenue}
%\end{equation}
%Where $\maxtwo$ denotes selecting the second largest value. In the case of unknown valuations, the expected welfare and revenue of the auction over draws of the players' valuations, which we denote by $welfare(x)$ and $rev(x)$, are the expectations of $welfare(x,v)$ and $rev(x,v)$, respectively, over $v \sim \D$.
%\fi

%\subsection{Examples of Constrained Signaling Problems}

Because the dimension of the items $d$ is so large, it is not possible to exactly describe them to buyers, and it is instead necessary to employ a \emph{communication bounded signaling scheme} which places a limit on the amount of communication that can be invested in transmitting the signal to the agents.
%
%Consider for example the sale of wine. Wine is a complicated, high dimensional product, but rather than being described by a full molecular description, it is sold with a description that is at most a paragraph in length. Similarly, consider selling an ad slot to an advertiser.  The potential viewer has an intricate collection of characteristics including gender, age, income, browsing history, etc.  For practical reasons and privacy concerns, advertisers in a sponsored search auction are only informed of a small subset of these potential characteristics.
%
Informally, we say that a signalling scheme has $b$-bounded length of the total set of signals it can generate (over all possible items) is of size at most $k=2^b$ -- note that any signal from such a set can be indexed by using at most $b$ bits. We would like to compute the optimal signaling scheme subject to these communication constraints.  We focus on two instances of this problem: we first study {\it geometric valuations} in which we assume a particular form of the valuations of the agents and show how to get very close approximations to the unconstrained optimal welfare.  We then consider arbitrary valuations and show how to get constant-factor approximations to the optimal welfare of constrained schemes (which, in turn, are $(k/n)$-approximations to the unconstrained optimum).  Our techniques for this approximation allow us to handle further {\it bipartite signal constraints} in which signals are labeled by subsets of features and an item can be mapped to a signal only if the item's features are a superset of the features in the signal's label.  
%For example, in sponsored search auctions, a viewer corresponds to a high-dimensional vector of characteristics and a signal corresponds to a small subset of these.  

%\section{Structural Properties}
%\label{sec:structural}
We discuss and derive several basic structural results regarding constrained signaling schemes in Appendix \ref{app:structural}. We here summarize several of the main findings:
\begin{lemma}\label{lem:det_welfare}
  For any constrained signaling problem with unknown valuations, there is valid deterministic signaling scheme which maximizes expected welfare.
\end{lemma}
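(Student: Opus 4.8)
The plan is to realize $welfare(\cdot)$ as a convex function on the probability simplex $\Delta_\F$ and then invoke the elementary principle that a convex function on a polytope is maximized at a vertex, observing that the vertices of $\Delta_\F$ are exactly the deterministic signaling schemes.

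First I would fix a valuation profile $v$ and check that $x \mapsto welfare(x,v)$ is convex on $\Delta_\F$. Using the closed form \eqref{eq:welfare}, for each signal $s \in \S$ and each player $i$ the map $x \mapsto \sum_{\omega \in \Omega} x(\omega,s)\,\hat{v}_i(\omega)$ is linear in the coordinate vector $(x_f)_{f \in \F}$, since $x(\omega,s) = \sum_{f \in \F :\, f(\omega) = s} x_f$. Hence $x \mapsto \max_{i=1}^n \sum_{\omega \in \Omega} x(\omega,s)\,\hat{v}_i(\omega)$ is a pointwise maximum of linear functions, so it is convex (indeed piecewise linear), and $welfare(\cdot,v)$, being a sum of such functions over $s \in \S$, is convex. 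Taking the expectation over $v \sim \D$, the function $x \mapsto welfare(x) = \Ex_{v \sim \D}[welfare(x,v)]$ is an average of convex functions and therefore convex on $\Delta_\F$.

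Now write $\delta_f \in \Delta_\F$ for the point distribution supported on a valid map $f \in \F$; the deterministic valid signaling schemes are precisely the $\delta_f$, and these are exactly the extreme points of $\Delta_\F$. Since a convex function on a polytope attains its maximum at an extreme point, some $\delta_f$ maximizes $welfare$, which is the claim. The main obstacle, and the only delicate point, is that $\F$ — hence $\Delta_\F$ — may be infinite, so the ``maximum at a vertex'' statement is not literally a theorem. I would handle this directly: convexity (Jensen) gives $welfare(x) \le \sum_{f \in \F} x_f\, welfare(\delta_f) \le \sup_{f \in \F} welfare(\delta_f)$ for every $x \in \Delta_\F$, with the finite sum replaced by an integral against $x$ when $\F$ is uncountable; thus the supremum of expected welfare over all valid schemes equals its supremum over deterministic schemes. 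And if the overall supremum is attained by some scheme $x^\ast$, then the nonnegative quantity $welfare(x^\ast) - welfare(\delta_f)$ has zero average under $x^\ast$, forcing $welfare(\delta_f) = welfare(x^\ast)$ for $x^\ast$-almost every $f$ and so exhibiting a deterministic optimum. In the finite case this subtlety evaporates and the vertex argument applies verbatim; everything else is routine.
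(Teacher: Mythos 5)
Your proof is correct and follows essentially the same route as the paper's: welfare is a sum over signals of pointwise maxima of functions linear in the scheme's probabilities, hence convex, and a convex function over the feasible polytope is maximized at a vertex, i.e., at a deterministic scheme. The only differences are cosmetic --- you parametrize by the simplex $\Delta_\F$ where the paper uses the marginal vectors $x(j,s)$ and the convex hull of $\{x^f : f \in \F\}$ --- and your explicit Jensen-based treatment of the case of infinite $\F$ is an extra bit of care that the paper's appeal to a finite-dimensional ``polytope'' fact glosses over.
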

\begin{lemma}\label{lem:scalingsignals}
  Consider an $n$-player and $m$-item signaling problem with known valuations. For every integer $k$, there is a signaling scheme with $k$ signals and welfare at least a $\frac{k}{\min(n,m)}$ fraction of that of optimal (unconstrained) scheme.
\end{lemma}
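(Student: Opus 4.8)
The plan is to build the desired scheme by pruning an explicit optimal \emph{unconstrained} scheme that already happens to use only $\min(n,m)$ signals, and then to bound the welfare lost in the pruning by a short averaging argument. Write $M=\min(n,m)$ and, as in \eqref{eq:welfare}, $\hat v_i(\omega)=p_\omega v_i(\omega)$. I will assume $k\le M$; if $k\ge M$ the construction below already produces a scheme with at most $k$ signals whose welfare equals the unconstrained optimum, and since no scheme can exceed that optimum there is nothing left to prove.

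First I would pin down the unconstrained optimum. For any scheme $x$, bounding each term of \eqref{eq:welfare} by $\max_i\sum_\omega x(\omega,s)\hat v_i(\omega)\le\sum_\omega x(\omega,s)\max_i\hat v_i(\omega)$ and summing over $s$ using $\sum_s x(\omega,s)=1$ gives $welfare(x)\le\sum_\omega p_\omega\max_i v_i(\omega)=:OPT$. I would then exhibit a \emph{deterministic} scheme with exactly $M$ signals attaining $OPT$: if $m\le n$, give each item its own signal; if $n<m$, use the $n$ bundles $B_1,\dots,B_n$, where $B_i$ collects the items on which player $i$ is the (tie-broken) pointwise maximizer of the $v_j(\omega)$. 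In the second case the only thing to check is that player $i$ really wins the auction induced by $B_i$: since $v_i(\omega)\ge v_j(\omega)$ for every $\omega\in B_i$, summing gives $\sum_{\omega\in B_i}\hat v_i(\omega)\ge\sum_{\omega\in B_i}\hat v_j(\omega)$, so signal $B_i$ contributes $\sum_{\omega\in B_i}p_\omega\max_j v_j(\omega)$, and these contributions sum to $OPT$. (Existence of \emph{some} deterministic optimum is Lemma~\ref{lem:det_welfare}; here I just want this particular small one.)

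Now fix such a deterministic scheme with item-bundles $T_1,\dots,T_M$, let $W_j:=\max_i\sum_{\omega\in T_j}\hat v_i(\omega)$ be the welfare contribution of signal $T_j$ (so $\sum_{j=1}^M W_j=OPT$), and relabel so that $W_1\ge\cdots\ge W_M\ge 0$. I would then form a $k$-signal scheme $x'$ by keeping $T_1,\dots,T_{k-1}$ unchanged and replacing $T_k,\dots,T_M$ by the single merged signal $T':=T_k\cup\cdots\cup T_M$; the signals $T_1,\dots,T_{k-1}$ retain their original winners and contributions $W_1,\dots,W_{k-1}$. The key observation, which uses only nonnegativity of valuations, is that merging is ``safe'': taking $i$ to be the winner of $T_k$, $\max_i\sum_{\omega\in T'}\hat v_i(\omega)\ge\sum_{\omega\in T_k}\hat v_i(\omega)=W_k$, so $welfare(x')\ge W_1+\cdots+W_{k-1}+W_k=\sum_{j=1}^k W_j$. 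It then remains to note that the top-$k$ partial sum of sorted nonnegative numbers is at least a $k/M$ fraction of the total: $\sum_{j\le k}W_j\ge kW_k$ while $\sum_{j>k}W_j\le(M-k)W_k$, so $(M-k)\sum_{j\le k}W_j\ge k(M-k)W_k\ge k\sum_{j>k}W_j$, which rearranges to $\sum_{j=1}^k W_j\ge\frac{k}{M}\sum_{j=1}^M W_j=\frac{k}{M}OPT$. Combining the two displays gives $welfare(x')\ge\frac{k}{\min(n,m)}OPT$.

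I do not expect a genuine obstacle. The two places that need a little care are (i) producing an unconstrained optimum that uses as few as $\min(n,m)$ signals — the player-partition construction is what handles the $n<m$ side, and it is the whole reason the bound improves from $1/n$ to $1/\min(n,m)$ — and (ii) the monotonicity of signal merging, which is immediate from nonnegativity of the $\hat v_i$ and the form of \eqref{eq:welfare}. Everything else is sorting and averaging.
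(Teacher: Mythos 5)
Your proof is correct and follows essentially the same route as the paper's (Appendix~\ref{app:structural}): first exhibit an optimal unconstrained scheme using only $\min(n,m)$ signals (announce the item if $m\le n$, else announce the pointwise-winning player), then sort signals by their welfare contribution, keep the top ones, and use the top-$k$ averaging bound together with nonnegativity of valuations to control the loss. The only cosmetic difference is that you merge the discarded bundles into a single $k$-th signal whereas the paper keeps the top $k$ bundles and reassigns the leftover items to them arbitrarily; both yield welfare at least $\sum_{j=1}^{k}W_j\ge\frac{k}{\min(n,m)}\,OPT$.
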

\begin{lemma}
  \label{lem:rev_ub}
Fix an arbitrary constrained signaling problem with unknown valuations. Let $i'$ be an arbitrary player. The revenue of the revenue-optimal signaling scheme is at most the welfare of the welfare-optimal signaling scheme for all players other than $i'$.
\end{lemma}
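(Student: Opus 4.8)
The plan is to establish the bound scheme by scheme: I will show that for \emph{every} valid signaling scheme $x$, its revenue under the induced second-price auctions is at most the welfare that the \emph{same} scheme $x$ achieves in the modified instance in which player $i'$ has been deleted. Crucially, deleting a player changes neither $\Omega$, $p$, nor the family $\F$ of valid signaling maps, so $x$ remains a feasible scheme for the $(n-1)$-player instance, and the two welfare-maximization problems range over the same feasible set $\Delta_\F$. Granting this per-scheme claim, the lemma follows by applying it to a revenue-optimal valid scheme $x^\star$ and then maximizing the right-hand side over all valid schemes.

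To prove the per-scheme claim, first fix a valuation profile $v$ and a scheme $x$, and write out the revenue in analogy with the welfare expression~\eqref{eq:welfare}. Since, conditioned on signal $s$, bidding the posterior value $v_i\mid s,x$ is a dominant strategy (as recorded in the Preliminaries), the revenue extracted on signal $s$ is the \emph{second-highest} of the numbers $v_1\mid s,x,\ldots,v_n\mid s,x$, so
\[
  rev(x,v) \;=\; \sum_{s\in\S} x(s)\cdot \maxtwo_{i=1}^n\, v_i\mid s,x .
\]
Now I invoke the elementary fact that for any reals $a_1,\ldots,a_n$ and any fixed index $i'$, the second-largest of $a_1,\ldots,a_n$ is at most $\max_{i\neq i'} a_i$: if the second-largest value is attained at an index $j\neq i'$ this is immediate, and otherwise it is attained at $i'$, in which case the maximum is attained at some index $\neq i'$ and the maximum dominates the second-largest. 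Applying this termwise over $s$ with $a_i = v_i\mid s,x$ gives
\[
  rev(x,v) \;\le\; \sum_{s\in\S} x(s)\cdot \max_{i\neq i'} v_i\mid s,x \;=\; welfare_{-i'}(x,v),
\]
where $welfare_{-i'}$ denotes welfare in the instance with player set $[n]\setminus\{i'\}$.

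Taking expectations over $v\sim\D$ (this step is vacuous under known valuations) yields $rev(x)\le welfare_{-i'}(x)$ for every valid $x$. Finally, if $x^\star$ is revenue-optimal among valid schemes, then
\[
  rev(x^\star) \;\le\; welfare_{-i'}(x^\star) \;\le\; \max_{x\in\Delta_\F} welfare_{-i'}(x),
\]
and the right-hand side is precisely the welfare of the welfare-optimal signaling scheme for the players other than $i'$, as claimed.

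I do not anticipate a real obstacle here. The only two points that need a line of justification are (i) the identity expressing the revenue of the induced Bayesian second-price auctions as an $x$-weighted sum of second-highest posterior bids, which is immediate from the dominant-strategy characterization already stated; and (ii) the observation that removing a player leaves the constraint family $\F$ intact, so that the welfare optimum for the smaller player set is genuinely an upper bound obtained over the same feasible set of schemes.
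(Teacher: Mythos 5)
Your proposal is correct and follows essentially the same route as the paper's proof: both express the revenue as the $x$-weighted sum of second-highest posterior values over signals, bound $\maxtwo_i$ by $\max_{i\neq i'}$ termwise to get $rev(x,v)\le welfare(x,v_{-i'})$, and then pass to the optimal schemes. Your write-up merely makes explicit the final maximization step and the observation that the feasible set $\F$ is unchanged by deleting a player, which the paper leaves implicit.
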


 \section{Constrained signaling with geometric valuations}
\label{sec:geometric}
In this section, we consider signaling schemes in which items $\omega \in \Omega$ correspond to points in $d$ dimensional Euclidean space $\mathbb{R}^d$. In such settings, we consider valuation functions which are also parameterized by points $v_i \in \mathbb{R}^d$. Natural valuation functions then include inner products (i.e. $v_i(\omega) = \langle \omega, v_i \rangle$) and distances (i.e. $v_i(\omega) = ||\omega - v_i||$). We think of $d$ as being very large, and so we will be concerned with \emph{space bounded signaling schemes}, defined to be schemes that can communicate only a bounded number of bits per signal.

\begin{definition}
A finite set of signals $S$ has $b$-bounded length if $\log|S| \leq b$. The set of $b$-bounded length signaling schemes is $\mathcal{F}\subseteq S^\Omega$ such that for each $f \in F$: the set of all possible signals generated by $f$ (i.e. $\cup_{\omega \in \Omega}f(\omega))$ has $b$ bounded length. We call such an $f \in F$ a $b$-bounded length signaling scheme.
\end{definition}
\begin{remark}
Note that any signal in a $b$-bounded length signaling scheme can be transmitted using at most $b$ bits.
\end{remark}

Note that in any known valuation setting, there is always a $\log n$-bounded length signalling scheme that achieves full welfare: the scheme simply partitions the items into $n$ sets in which the $i$'th set consists of all items that the $i$'th bidder likes better than anyone else.  This scheme, in effect, names the bidder who has the highest valuation for the realized good. Our main focus in this section will therefore be on achieving bounded length signalling schemes in the more demanding \emph{unknown valuation} setting, in which bidder valuations are either drawn from a known prior, or selected by an adversary.

\subsection{Inner Product Valuations}
Let $\Omega = \{\omega \in \mathbb{R}_+^d : ||\omega||_1 = 1\}$ be the set of $d$ dimensional non-negative real vectors with $\ell_1$ weight 1. These can be thought of as distributions over $d$ ``features'' which describe the product $\omega$. Individuals $i$ have valuation functions $v_i:\Omega\rightarrow \mathbb{R}$ parameterized by (abusing notation) a vector $v_i \in \mathbb{R}^d$ with $||v_i||_\infty \leq 1$. Bidder $i$'s valuation for good $\omega$ is defined to be $v_i(\omega) \equiv \langle v_i, \omega \rangle$. Intuitively, each component $j$ of $v_i$ represents agent $i$'s affinity for feature $j$, which can range in $[-1,1]$. $v_i(\omega)$ is agent $i$'s average affinity for the distribution over features represented by good $\omega$. Here $d$ can be exponentially large, and so such valuation functions are extremely general. We wish to design good $b$-bounded length signaling schemes for inner product valuations, for $b$ as small as possible, when bidder valuations are unknown but drawn independently from a known prior.

\subsubsection{Multiplicative Weights Signaling}
In this section, we use the versatile multiplicative weights framework of \cite{AHK} to give a communication bounded signaling scheme. The idea is the following: we treat the realized good $\omega \in \Omega$ as a \emph{function} defined over the domain of valuation vectors $v_i$. The function $\omega$ labels each possible valuation vector $v_i$ with the real number $\langle v_i, \omega \rangle$. Given a realized good, we then attempt to learn this function over the prior distribution $\mathcal{P}$ from which the valuations are drawn. If we are able to ``correctly'' classify new examples drawn from the distribution with probability at least $1-\delta/n$, then with probability at least $1-\delta$, we are able to correctly classify all of the actual bidder valuation functions drawn i.i.d. from $\mathcal{P}$. If we know the prior $\mathcal{P}$ and have access to i.i.d. samples from it, then we are in a PAC-like setting, and are able to inherit very strong PAC-like bounds: learning algorithms with guarantees that hold for \emph{any} distribution $\mathcal{P}$. If we are able to learn our classifier in a way that can be concisely communicated, then we also have a good bounded length signalling scheme.

For ease of exposition, we first consider the known valuation setting in which the signaling scheme can be parameterized by the actual valuations $v_1,\ldots,v_n$ of the bidders\footnote{As mentioned earlier, it is trivial to derive a bounded-length signalling scheme in the known valuation setting without the machinery of multiplicative weights. We introduce this machinery in the known valuation setting and then show how it easily extends to the unknown valuation setting.}. When a good $\omega \in \Omega$ arrives, we use multiplicative weights to learn an $\epsilon$-approximate representation $\hat{\omega}$ with respect to the bidder valuation functions $v_1,\ldots,v_n$ -- that is, a representation $\hat{\omega}$ such that for all $i$, $|v_i(\omega) - v_i(\hat{\omega})| \leq \epsilon$. We can do this by updating multiplicative weights at most $O(\log d/\epsilon^2)$ times using the valuation functions themselves as loss functions. $\hat{\omega}$ would itself therefore make a terrific signal -- it would approximately represent every bidder's valuation for the good $\omega$. However, $\hat{\omega}$ is also a vector in $\mathbb{R}^d$, and so it is not clear why we should be able to communicate it in a space bounded way. The key insight is that it is not necessary to communicate $\hat{\omega}$ directly, but merely communicate which collection of valuation functions were used to update multiplicative weights when learning $\hat{\omega}$ -- using this information, each bidder can reconstruct $\hat{\omega}$ for themselves. (Of course, this ``reconstruction'' can be automated, so the bidders can still see a natural signal). Since each valuation function can be indexed with only $O(\log n)$ bits, and there are only $O(\log d/\epsilon^2)$ updates in total, this gives a $O(\log n \log d/\epsilon^2)$-space bounded signaling scheme that approximates the optimal welfare within an additive loss of $\epsilon$. We then extend this to the case in which the valuation functions $v_i$ are not known, but instead drawn i.i.d. from a known prior $\mathcal{P}$. This extension involves parameterizing the same multiplicative weights signaling scheme with $m$ i.i.d. samples from $\mathcal{P}$. Here $m$ corresponds to the \emph{sample complexity} of the corresponding learning problem on linear valuations\footnote{The \emph{sample complexity} of a learning problem is, informally, the number of samples that need to be drawn from a distribution $\mathcal{P}$, such that if we learn a hypothesis that is consistent on the sampled points, then with high probability, the hypothesis is consistent on \emph{new} points drawn from the same distribution.} and the space used by this signaling scheme depends on $\log m$ (since we must index updates from this set of $m$ vectors). We now define our signaling scheme formally.

We define a signaling scheme $f_{\textrm{MW},\epsilon, z_1,\ldots,z_m}$ parameterized by a no regret algorithm (in this case the multiplicative weights algorithm $\textrm{MW}$), an accuracy parameter $\epsilon$, and $m$ vectors $z_1,\ldots,z_m \in \mathbb{R}^d$. The parameters of the signaling scheme will be public knowledge, and it will be used to generate $b$-bounded length signals as follows:

\begin{algorithm}
\caption{Algorithm for computing the signal $f_{\textrm{MW}, \epsilon, z_1,\ldots,z_m}$}
\begin{algorithmic}[1]
%\PARAMETER $c$
\INPUT An instance $\omega \in \Omega$.
\OUTPUT A bounded length signal $s$.
\STATE Initialize $\hat{\omega}^1 \in \mathbb{R}^d$ such that $\hat{\omega}^1_j = 1/d$ for all $j \in [d]$.
\STATE Initialize $T \leftarrow 1$.
\WHILE{there exists an index $i$ such that $|\langle z_i, \omega \rangle - \langle z_i, \hat{\omega}^T \rangle| \geq \epsilon/2$}
  \STATE Let $\textrm{Update}_T \leftarrow i$, $\textrm{Sign}_T \leftarrow \mathrm{sign}(\langle z_i, \hat{\omega}^T \rangle - \langle z_i, \omega \rangle)$,
  \STATE For all $j$ let $\hat{\omega}^{T+1}_j \leftarrow \hat{\omega}^T_j \cdot \left(1-\mathrm{Sign}_T\cdot\frac{\epsilon}{4}\cdot z_{i,j}\right)$.
  \STATE Normalize $\hat{\omega}^{T+1}$ such that $||\hat{\omega}^{T+1}||_1 = 1$. Let $T \leftarrow T+1$
\ENDWHILE
\STATE Let $s = ((\textrm{Update}_1,\textrm{Sign}_1), \ldots, (\textrm{Update}_T,\textrm{Sign}_T))$
\end{algorithmic}
\end{algorithm}

\begin{theorem}
\label{thm:boundedupdates}
For any $\epsilon$, any vectors $z_1,\ldots,z_m$ such that for all $i$, $||z_i||_\infty \leq 1$, and any $\omega \in \Omega$, $f_{\textrm{MW}, \epsilon, z_1,\ldots,z_m}$ runs for $T \leq \frac{16\log d}{\epsilon^2}$ rounds.
\end{theorem}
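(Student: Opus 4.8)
The plan is to run a standard multiplicative-weights potential argument, using $\|\hat\omega^T\|_1 = 1$ at every step to control the potential $\Phi_T = \sum_j \hat\omega^T_j$... wait, that's always $1$. The right potential here is a relative-entropy (KL) potential measured against the true item $\omega$, or equivalently one tracks $\log$ of a weighted product. Concretely, I would first observe that each update uses a loss vector $\ell^T$ with coordinates $\ell^T_j = \mathrm{Sign}_T \cdot z_{i,j} \in [-1,1]$ (for the index $i$ triggering the update), scaled by learning rate $\eta = \epsilon/4$, exactly as in the MW framework of \cite{AHK}. The key quantity is the per-round ``regret'' contribution $\langle \hat\omega^T - \omega, \ell^T\rangle = \mathrm{Sign}_T(\langle z_i,\hat\omega^T\rangle - \langle z_i,\omega\rangle)$, which by the while-loop guard has absolute value $\geq \epsilon/2$, and by choice of $\mathrm{Sign}_T$ is in fact $\geq \epsilon/2 > 0$ — so every round makes definite progress.

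Next I would invoke the MW regret bound from \cite{AHK}: comparing the algorithm's cumulative loss to that of any fixed distribution $\omega \in \Delta_d$ (here $\Omega$ is exactly the simplex), after $T$ rounds one gets
\[
\sum_{t=1}^T \langle \hat\omega^t, \ell^t\rangle - \sum_{t=1}^T \langle \omega, \ell^t\rangle \;\leq\; \eta \sum_{t=1}^T \langle \hat\omega^t, (\ell^t)^2\rangle + \frac{\mathrm{KL}(\omega \,\|\, \hat\omega^1)}{\eta} \;\leq\; \eta T + \frac{\log d}{\eta},
\]
using $|\ell^t_j| \leq 1$ so $(\ell^t_j)^2 \leq 1$ and $\langle \hat\omega^t,(\ell^t)^2\rangle \leq 1$, and $\mathrm{KL}(\omega\,\|\,\mathbf{1}/d) \leq \log d$ since $\hat\omega^1$ is uniform. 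On the other hand, the left-hand side is exactly $\sum_{t=1}^T \langle \hat\omega^t - \omega, \ell^t\rangle \geq T\cdot(\epsilon/2)$ by the progress observation. Combining, $T\epsilon/2 \leq \eta T + \log d / \eta = (\epsilon/4) T + (4\log d)/\epsilon$, hence $(\epsilon/4) T \leq (4\log d)/\epsilon$, giving $T \leq 16\log d / \epsilon^2$. (I'd be slightly careful that the $T$ counting in the algorithm is off-by-one relative to the number of completed updates, but that only changes the bound by a harmless additive constant, or one can state the bound for the number of updates and note $T$ equals that plus one.)

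The main obstacle I anticipate is not the algebra but making sure the normalization step in line 6 is compatible with the clean regret bound: the textbook MW analysis of \cite{AHK} already has the renormalization built in (the weights are maintained as a distribution), so I would want to cite the precise form of their lemma where the update is $w^{t+1}_j \propto w^t_j(1 - \eta \ell^t_j)$ and the comparator ranges over the simplex — which is exactly our situation since $\Omega$ is the $\ell_1$-unit simplex and $\|z_i\|_\infty \leq 1$ ensures $|\eta z_{i,j}| = |\epsilon z_{i,j}/4| \leq 1/4 < 1$ so the multiplicative factors stay positive and the analysis applies. A secondary point is confirming $\eta \leq 1/2$ (indeed $\eta = \epsilon/4$ and we may assume $\epsilon \leq 1$, or the statement is vacuous for larger $\epsilon$), which is the regime where the $(1-\eta\ell)$ form of MW has the stated regret guarantee. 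Once those citations are pinned down, the bound $T \leq 16\log d/\epsilon^2$ drops out immediately.
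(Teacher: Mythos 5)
Your proposal is correct and follows essentially the same route as the paper: both instantiate the Arora--Hazan--Kale regret bound with loss vectors $\ell^t = \mathrm{Sign}_t \cdot z_{\mathrm{Update}_t}$ and comparator $x=\omega$, use the while-loop guard to lower-bound each round's regret contribution by $\epsilon/2$, and solve $T\epsilon/2 \leq T\epsilon/4 + 4\ln d/\epsilon$. The only cosmetic difference is which form of the second-order term you quote (yours evaluated at $\hat\omega^t$ with a KL potential, the paper's at the comparator via $\langle|\ell^t|,\omega\rangle \leq 1$); both yield the same bound.
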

\begin{proof}
$f_{\textrm{MW}, \epsilon, z_1,\ldots,z_m}(\omega)$ runs an instantiation of the Multiplicative Weights Framework for Arora, Hazan, and Kale \cite{AHK} for $T$ rounds, with update parameter $\epsilon/4$, and loss vectors at each round $t$ defined to be $\ell^t = \mathrm{Sign}_t\cdot z_{\textrm{Update}_t}$. By the regret bound of Multiplicative Weights (see e.g. \cite{AHK} Corollary 2.2), we have for all $x \in \mathbb{R}^d_+$ with $||x||_1 = 1$:
$$\sum_{t=1}^T\left(\langle \ell^t, \hat{\omega}^t \rangle - \langle \ell^t, x \rangle \right) \leq \frac{\epsilon}{4}\sum_{t=1}^T\langle |\ell^t|, x \rangle + \frac{4 \ln d}{\epsilon}$$
where $|\ell^t|$ denotes a coordinate-wise absolute value. Note that this corresponds to the standard ``no regret bound'' that readers may be more familiar with: here the ``experts'' correspond to the $d$ standard basis vectors, $x$ corresponds to a distribution over experts, and the above bound simply states that multiplicative weights achieves diminishing regret with respect to the best expert (and therefore with respect to any distribution over experts). Taking $x = \omega$, we note that the loss vectors $\ell^t = \mathrm{Sign}_t\cdot z_{\textrm{Update}_t}$ have been constructed such that at every round $t$, $\langle \ell^t, \hat{\omega}^t \rangle - \langle \ell^t, \omega \rangle \geq \epsilon/2$. Moreover, since for all $i$ $||z_i||_{\infty} \leq 1$, by definition for each coordinate $j$, $|\ell^t_j| \leq 1$. Therefore, for each $t$, $\langle |\ell^t|, \omega \rangle \leq 1$ since $||\omega||_1 = 1$ . Therefore, the above bound becomes:
$$\frac{T\epsilon}{2} \leq \frac{T\epsilon}{4} + \frac{4 \ln d}{\epsilon}$$
Solving for $T$, we find that it must be that $T \leq \frac{16\log d}{\epsilon^2}$ as desired.
\end{proof}
We now make two observations. The first is that $f_{\textrm{MW}, \epsilon, z_1,\ldots,z_m}(\omega)$ produces bounded length signals:
\begin{corollary}
For any $\epsilon$, any vectors $z_1,\ldots,z_m$ such that for all $i$, $||z_i||_\infty \leq 1$, and any $\omega \in \Omega$, $f_{\textrm{MW}, \epsilon, z_1,\ldots,z_m}$ is a $b$-bounded length signaling scheme for $b = \left(\frac{16\log d(\log m+1)}{\epsilon^2}\right)$.
\end{corollary}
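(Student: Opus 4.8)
The plan is to simply count the number of bits required to describe a signal produced by the algorithm, and to control that count using Theorem~\ref{thm:boundedupdates}. By construction, every signal $s = ((\textrm{Update}_1,\textrm{Sign}_1),\ldots,(\textrm{Update}_T,\textrm{Sign}_T))$ output by $f_{\textrm{MW},\epsilon,z_1,\ldots,z_m}$ is a sequence in which each $\textrm{Update}_t$ is an index in $[m]$ and each $\textrm{Sign}_t \in \{-1,+1\}$. So I would first note that a single pair $(\textrm{Update}_t,\textrm{Sign}_t)$ is specified by $\lceil \log m\rceil + 1 \le \log m + 1$ bits: $\lceil\log m\rceil$ bits to index into the $m$ candidate vectors $z_1,\ldots,z_m$, plus one sign bit. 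Fixing any prefix-free code for these pairs, the concatenation over $t$ can be parsed back unambiguously, so the whole signal $s$ is described by at most $T\cdot(\log m+1)$ bits.

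Next I would invoke Theorem~\ref{thm:boundedupdates}: since $\|z_i\|_\infty \le 1$ for all $i$, for every $\omega\in\Omega$ the while-loop halts after $T \le \tfrac{16\log d}{\epsilon^2}$ rounds. Combining with the per-pair bit count gives that every signal is describable using at most $T\cdot(\log m+1) \le \tfrac{16\log d(\log m+1)}{\epsilon^2} = b$ bits. Hence the set $S = \bigcup_{\omega\in\Omega} f_{\textrm{MW},\epsilon,z_1,\ldots,z_m}(\omega)$ of all signals ever generated satisfies $|S|\le 2^b$, i.e. $\log|S|\le b$, which is exactly the definition of a $b$-bounded length signaling scheme (Definition, Section~\ref{sec:geometric}).

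I expect the only real subtlety to be routine bookkeeping around the fact that signals have variable length (and the minor off-by-one in the loop counter $T$): to make the ``$\le b$ bits'' statement literally imply $|S|\le 2^b$ one either pads every signal to the common length $b$, or prepends a self-delimiting encoding of the length $T$ (which costs only $O(\log\log d)$ additional bits, easily absorbed). None of this affects the stated bound, so the argument is essentially a one-line consequence of Theorem~\ref{thm:boundedupdates}.
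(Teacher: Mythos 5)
Your proof is correct and matches the paper's own argument exactly: count $\log m + 1$ bits per update--sign pair and multiply by the bound $T \le \tfrac{16\log d}{\epsilon^2}$ from Theorem~\ref{thm:boundedupdates}. Your extra remarks about prefix-free encoding of variable-length signals are a careful touch the paper omits, but they do not change the approach.
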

\begin{proof}
For $t = 1,\ldots,T$, $\textrm{Update}_t$ can be communicated with $\log m$ bits and $\textrm{Sign}_t$ can be communicated by 1 bit. By Theorem \ref{thm:boundedupdates}, for all $\omega \in \Omega$, $T \leq \frac{16\log d}{\epsilon^2}$
\end{proof}
The next is that the signal $s = ((\textrm{Update}_1,\textrm{Sign}_1), \ldots, (\textrm{Update}_T,\textrm{Sign}_T))$ is sufficient for each agent $i$ to reconstruct $\hat{\omega}^{T+1}$:
\begin{observation}
 $\hat{\omega}^{T+1} \equiv \hat{\omega}^{T+1}(s)$ is a function only of the signal $s = ((\textrm{Update}_1,\textrm{Sign}_1), \ldots, (\textrm{Update}_T,\textrm{Sign}_T))$.
\end{observation}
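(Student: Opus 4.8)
The plan is to prove this by a straightforward induction on the round index $t$, establishing the slightly stronger statement that each iterate $\hat{\omega}^t$ is a deterministic function of the prefix $((\textrm{Update}_1,\textrm{Sign}_1),\ldots,(\textrm{Update}_{t-1},\textrm{Sign}_{t-1}))$ of the signal, together with the publicly known parameters $\epsilon$ and $z_1,\ldots,z_m$. The observation then follows by taking $t = T$.

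For the base case, line~1 of the algorithm sets $\hat{\omega}^1_j = 1/d$ for every $j \in [d]$; this is a fixed vector depending only on $d$, hence trivially a function of the empty prefix. For the inductive step, suppose $\hat{\omega}^t$ is a function of the first $t-1$ entries of $s$. Lines~5--6 compute $\hat{\omega}^{t+1}$ from $\hat{\omega}^t$ via the rule $\hat{\omega}^{t+1}_j \leftarrow \hat{\omega}^t_j\bigl(1 - \mathrm{Sign}_t \cdot \frac{\epsilon}{4}\cdot z_{\textrm{Update}_t, j}\bigr)$ followed by $\ell_1$-normalization. Every quantity in this update --- the previous iterate $\hat{\omega}^t$, the scalar $\mathrm{Sign}_t \in \{-1,+1\}$, and the vector $z_{\textrm{Update}_t}$, which is selected by the index $\textrm{Update}_t$ from the public list $z_1,\ldots,z_m$ --- is determined by the first $t$ entries of $s$ and the public parameters. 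Hence so is $\hat{\omega}^{t+1}$, which completes the induction.

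The only point worth flagging is that the realized item $\omega$ does enter the algorithm, but solely through the while-loop guard on line~3, which governs which index is selected at each round and when the loop halts; that is, $\omega$ determines the value of the signal $s$ itself, including its length $T$. Once $s$ is given, the reconstruction of the trajectory $\hat{\omega}^1,\ldots,\hat{\omega}^{T+1}$ makes no further reference to $\omega$, so $\hat{\omega}^{T+1}$ is well-defined as $\hat{\omega}^{T+1}(s)$. There is no real obstacle here: this is a direct consequence of the update rule depending on $\omega$ only through the choices already recorded in $s$.
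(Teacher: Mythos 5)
Your proof is correct; the paper states this as an observation without proof, and your induction on the round index is exactly the routine formalization of what the paper treats as immediate. You also correctly flag the one subtlety worth flagging, namely that $\omega$ influences only which updates get recorded in $s$, not how they are replayed.
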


Given this observation, it is helpful to think about the signal ``really being'' the vector $\hat{\omega}^{T+1}$ that well approximates $\langle \omega, z_i \rangle$ for all $i$. The signal $s$ is just a concise way of transmitting this vector.

First, we show that when we have known valuations, the multiplicative weights signaling scheme is competitive with the optimal $b$-bounded length signaling scheme. In fact, we show more -- that the multiplicative weights signalling scheme is competitive with the optimal \emph{unconstrained} signalling scheme, even pointwise. This proof will be a template for the more interesting unknown valuations case.

\begin{theorem}
\label{thm:welfareMW1}
Let $\textrm{OPT}_U = \sum_{\omega \in \Omega}\Pr_{j \sim p}[j = \omega]\cdot\max_{i \in [n]}v_i(\omega)$ denote the optimal social welfare in the \emph{unconstrained}  setting.
In the known valuation setting, the welfare obtained by the multiplicative weights signaling scheme given vectors $(z_1,\ldots,z_n) \equiv (v_1,\ldots,v_n)$ is:
$$\textrm{welfare}(f_{\textrm{MW}, \epsilon, v_1,\ldots,v_n}, v) \geq \textrm{OPT}_{U} - \epsilon$$
In particular, it is within $\epsilon$ of the optimal welfare obtained by \emph{any} bounded length signaling scheme.
\end{theorem}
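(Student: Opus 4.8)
The plan is to compare the welfare of the multiplicative weights signaling scheme to the unconstrained optimum signal-by-signal, exploiting the fact that the scheme produces, for each item $\omega$, a hypothesis $\hat{\omega}^{T+1}$ that is $\epsilon$-close to $\omega$ in every bidder's valuation. First I would recall from the terminating condition of the while loop that when the algorithm halts, $|\langle v_i, \omega\rangle - \langle v_i, \hat{\omega}^{T+1}\rangle| < \epsilon/2$ for every bidder $i$ (since otherwise a further update would be performed); combined with Theorem~\ref{thm:boundedupdates} this guarantees the scheme is well-defined and terminates. So for each item $\omega$, writing $s = f_{\textrm{MW},\epsilon,v_1,\ldots,v_n}(\omega)$ and $\hat\omega \equiv \hat\omega^{T+1}(s)$, we have $v_i(\hat\omega) \geq v_i(\omega) - \epsilon/2$ for all $i$.

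Next I would unpack what the induced second-price auction does. The signaling scheme partitions $\Omega$ into information sets, one per distinct emitted signal $s$; by the formula \eqref{eq:welfare} (in the known-valuation case $\D$ is trivial, so $\textrm{welfare}(x,v) = \sum_s \max_i \sum_\omega x(\omega,s)\hat v_i(\omega)$), the welfare is $\sum_s x(s)\max_i (v_i\,|\,s,x)$, i.e., on each information set the winner is the bidder with the highest \emph{conditional expected} value. The key point is that the signal $s$ determines $\hat\omega$, and all items $\omega$ mapped to $s$ have $v_i(\omega)$ within $\epsilon/2$ of the common value $v_i(\hat\omega)$; hence the conditional expectation $v_i\,|\,s,x$ is within $\epsilon/2$ of $v_i(\hat\omega)$ for every $i$. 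Now I compare to $\textrm{OPT}_U = \sum_\omega \hat\one$-weighted $\max_i v_i(\omega)$: fix an item $\omega$ with optimal winner $i^\star = \arg\max_i v_i(\omega)$, and let $s = f(\omega)$. The auction on signal $s$ awards the item to some bidder $i_s$ with $v_{i_s}\,|\,s,x \geq v_{i^\star}\,|\,s,x \geq v_{i^\star}(\hat\omega) - \epsilon/2 \geq v_{i^\star}(\omega) - \epsilon/2 - \epsilon/2 = \max_i v_i(\omega) - \epsilon$, where the middle inequalities use the two $\epsilon/2$ approximation bounds (one from the conditional expectation, one from the termination guarantee applied to $\omega$). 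Averaging this pointwise bound over $\omega \sim p$ — more precisely, rewriting $\textrm{welfare}(f,v) = \sum_\omega p_\omega\, v_{i_{f(\omega)}}(\omega)$... wait, this needs the slightly more careful accounting that the realized winner $i_{f(\omega)}$ depends only on the signal, so I would instead bound $\textrm{welfare}(f,v) = \sum_s x(s) v_{i_s}\,|\,s,x \geq \sum_s x(s)(v_{i^\star(s)}\,|\,s,x)$ and then expand $v_{i^\star(s)}\,|\,s,x = \frac{1}{x(s)}\sum_{\omega: f(\omega)=s} p_\omega v_{i^\star(s)}(\omega) \geq \frac{1}{x(s)}\sum_{\omega: f(\omega)=s} p_\omega (\max_i v_i(\omega) - \epsilon)$ using that $v_{i^\star(s)}(\omega) \geq v_{i^\star(s)}(\hat\omega) - \epsilon/2 \geq v_i(\hat\omega) - \epsilon/2 \geq v_i(\omega) - \epsilon$ for every $i$ and every $\omega$ in the fiber (the first and last steps are the termination bound, the middle is optimality of $i^\star(s)$ in $\hat\omega$). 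Summing over $s$ collapses the fibers back to $\sum_\omega p_\omega(\max_i v_i(\omega) - \epsilon) = \textrm{OPT}_U - \epsilon$.

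The main obstacle, and the step deserving the most care, is the interchange between "$\hat\omega$ is a good proxy for each $\omega$ in its fiber" and "the second-price auction on the fiber, which uses \emph{conditional expectations} $v_i\,|\,s,x$ rather than any individual $v_i(\omega)$, still picks a near-optimal winner pointwise." The clean way to handle it is to pick $i^\star(s) = \arg\max_i v_i(\hat\omega)$ as the benchmark winner on signal $s$ (a single bidder for the whole fiber), note that the auction's actual winner $i_s$ does at least as well in conditional expectation, and then show $v_{i^\star(s)}$ pointwise dominates every $v_i$ on the fiber up to $\epsilon$ — which follows because $v_{i^\star(s)}(\omega) \geq v_{i^\star(s)}(\hat\omega) - \epsilon/2 \geq v_i(\hat\omega)-\epsilon/2 \geq v_i(\omega) - \epsilon$. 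Everything else is bookkeeping with the definition of $\textrm{welfare}$ in \eqref{eq:welfare}. The final sentence of the theorem is immediate since $\textrm{OPT}_U$ upper-bounds the welfare of every signaling scheme, bounded-length or not.
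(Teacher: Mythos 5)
Your proposal is correct and rests on the same mechanism as the paper's proof: the termination condition of the while loop forces every item in a signal's fiber to be within $\epsilon/2$ of $\hat{\omega}^{T+1}$ in every bidder's valuation, so the conditional expected values in the induced second-price auction are $\epsilon$-accurate pointwise. The paper packages this slightly more directly — it shows $v_i|s(\omega) \geq v_i(\omega) - \epsilon$ for every bidder $i$ and every realized $\omega$, then takes the max over $i$ inside the sum over $\omega$ — whereas you route through a per-signal benchmark winner $i^\star(s) = \arg\max_i v_i(\hat{\omega})$; this is only a cosmetic reorganization of the same argument.
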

\begin{proof}
Fix any $\omega \in \Omega$, consider $s = s(\omega) = f_{\textrm{MW}, \epsilon, v_1,\ldots,v_n}(\omega)$, and let $\hat{\omega}^{T+1} \equiv \hat{\omega}^{T+1}(s)$. For any bidder $i$, let $S = \{\omega : |\langle \omega, v_i \rangle - \langle \hat{\omega}^{T+1}(s), v_i \rangle| \leq \epsilon/2\}$ be the set of goods whose value to bidder $i$ differs by less than $\epsilon$ from the value of good $\hat{\omega}^{T+1}$. Note that by the construction of $\hat{\omega}^{T+1}$, it must be the case that $\Pr_{\omega \sim p}[\omega \not \in S | s] = 0$ (because otherwise the multiplicative weights signaling scheme would not have halted). Therefore, we have $\Pr_{\omega \sim p}[\omega \in S | s] = 1$ and we can calculate:
\begin{align*}
v_i|s(\omega) &= \mathbb{E}_{j \sim p}[v_i(j)|f(j) = s]\\
 &= \sum_{j \in \Omega}v_i(j)\cdot \Pr_{j \sim p}[j | f(j) = s]\\
&\geq  \sum_{j \in S}v_i(j)\cdot \Pr_{j \sim p}[j | f(j) = s] \\
&\geq \min_{j \in S}v_i(j)\cdot \Pr_{j \sim p}[j \in S | f(j) = s] \\
&\geq ( v_i(\omega)-\epsilon)\cdot 1
\end{align*}
%$$v_i|s(\omega) = \mathbb{E}_{j \sim p}[v_i(j)|f(j) = s] = \sum_{j \in \Omega}v_i(j)\cdot \Pr_{j \sim p}[j | f(j) = s]$$
%$$\geq  \sum_{j \in S}v_i(j)\cdot \Pr_{j \sim p}[j | f(j) = s] \geq \min_{j \in S}v_i(j)\cdot \Pr_{j \sim p}[j \in S | f(j) = s] \geq ( v_i(\omega)-\epsilon)\cdot 1$$

Finally, we can lower bound the expected welfare of the multiplicative weights signaling scheme, and compare it to the unconstrained optimal welfare. For all $v$:
{\scriptsize
\begin{align*}
\textrm{welfare}(f_{\textrm{MW}, \epsilon, v_1,\ldots,v_n}, v) &= \sum_{\omega \in \Omega}\Pr_{j \sim p}[j = \omega]\cdot \max_{i\in [n]}v_i|s(\omega) \\
 &\geq \sum_{\omega \in \Omega}\Pr_{j \sim p}[j = \omega]\cdot \max_{i \in [n]}(v_i(\omega) - \epsilon) \\
&= \textrm{OPT}_{U} - \epsilon.
\end{align*}
}
% \begin{eqnarray*}
% \textrm{welfare}(f_{\textrm{MW}, \epsilon, v_1,\ldots,v_n}, v) &=& \sum_{\omega \in \Omega}\Pr_{j \sim p}[j = \omega]\cdot \max_{i\in [n]}v_i|s(\omega)
% \end{eqnarray*}
% $$\geq \sum_{\omega \in \Omega}\Pr_{j \sim p}[j = \omega]\cdot \max_{i \in [n]}(v_i(\omega) - \epsilon) = \textrm{OPT}_{U} - \epsilon$$
% %\end{eqnarray*}
 \end{proof}

We now adapt our signaling scheme slightly, and show that it works not just in the known valuation setting, but also in the Bayesian setting when there is a prior $\mathcal{P}$ from which the valuations $v_i$ of the agents are drawn i.i.d. Consider the slightly modified family of signaling schemes $f'$, still defined in terms of an accuracy parameter $\epsilon$ and $m$ vectors $z_1,\ldots,z_m$. The difference between $f'$ and $f$ is only that $f'$ updates the multiplicative weights hypothesis at most $1$ time for every vector $z_i$, and halts once it finds a hypothesis which does not induce an update for a sufficiently large sequence of the $z_i$ vectors. The idea is simple: The $z_i$ vectors will be drawn i.i.d. from the prior $\mathcal{P}$. At any given time, either the current multiplicative weights hypothesis will have low error over new examples drawn from $\mathcal{P}$, in which case we can halt, or it will have high error, which means that it is likely to induce an update on one of the next few $z_i$'s it iterates through. Given that we have a bound on the total number of updates that it can perform, it is not hard to see that it must quickly find a hypothesis vector $\hat{\omega}$ that well approximates $\omega$ on a large measure of examples $z_i$ drawn from $\mathcal{P}$.

\begin{algorithm}
\caption{Algorithm for computing the signal $f'_{\textrm{MW}, \epsilon, z_1,\ldots,z_m}$}
\begin{algorithmic}[1]
%\PARAMETER $c$
\INPUT An instance $\omega \in \Omega$.
\OUTPUT A bounded length signal $s$.
\STATE Initialize $\hat{\omega}^1 \in \mathbb{R}^d$ such that $\hat{\omega}^1_j = 1/d$ for all $j \in [d]$.
\STATE Initialize $T \leftarrow 1$.   \texttt{// Indexes Updates}
\STATE Initialize $c \leftarrow 0$.   \texttt{// Counts rounds between updates}
\STATE Let $r \leftarrow \frac{2n\left(\log\left(\frac{16\log d}{\epsilon^2}\right) + \log\frac{2n}{\delta}\right)}{\delta}$ \texttt{// Threshold to halt}.
\FOR{$i = 1$ to $m$}
  \IF{$|\langle z_i, \omega \rangle - \langle z_i, \hat{\omega}^T \rangle| \geq \epsilon/2$}
    \STATE Let $\textrm{Update}_T \leftarrow i$, $\textrm{Sign}_T \leftarrow \mathrm{sign}(\langle z_i, \hat{\omega}^T \rangle - \langle z_i, \omega \rangle)$,
    \STATE For all $j$ let $\hat{\omega}^{T+1}_j \leftarrow \hat{\omega}^T_j \cdot \left(1-\mathrm{Sign}_T\cdot\frac{\epsilon}{2}\cdot z_{i,j}\right)$.
    \STATE Normalize $\hat{\omega}^{T+1}$ such that $||\hat{\omega}^{T+1}||_1 = 1$. Let $T \leftarrow T+1$. Let $c \leftarrow 0$.
  \ELSE
     \STATE $c \leftarrow c + 1$
     \IF{$c \geq r$}
       \STATE Output $$s = ((\textrm{Update}_1,\textrm{Sign}_1), \ldots, (\textrm{Update}_T,\textrm{Sign}_T))$$ and HALT.
     \ENDIF
  \ENDIF
  \STATE Output $$s = ((\textrm{Update}_1,\textrm{Sign}_1), \ldots, (\textrm{Update}_T,\textrm{Sign}_T))$$
\ENDFOR
\end{algorithmic}
\end{algorithm}

Because we continue to have that $\hat{\omega}^T$ is updated only on vectors $z_i$ such that $|\langle z_i, \omega \rangle - \langle z_i, \hat{\omega}^T \rangle| \geq \epsilon/2$, our bound on the number of updates is identical as it was for signaling scheme $f$, and so we have an identical corollary:

\begin{corollary}
For any $\epsilon$, any vectors $z_1,\ldots,z_m$ such that for all $i$, $||z_i||_\infty \leq 1$, and any $\omega \in \Omega$, $f'_{\textrm{MW}, \epsilon, z_1,\ldots,z_m}$ is a $b$-bounded length signaling scheme for $b = \left(\frac{16\log d(\log m+1)}{\epsilon^2}\right)$.
\end{corollary}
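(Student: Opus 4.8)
The plan is to reuse the proof of Theorem~\ref{thm:boundedupdates} almost verbatim: a bounded-length guarantee depends only on how many update steps $T$ the procedure performs on a given input $\omega$, and that number is governed entirely by the condition under which an update is triggered, which is unchanged between $f$ and $f'$.

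First I would observe that in the algorithm computing $f'$, the hypothesis $\hat\omega^T$ is modified only when the current sample $z_i$ satisfies $|\langle z_i,\omega\rangle-\langle z_i,\hat\omega^T\rangle|\ge \epsilon/2$; the counter $c$ and the halting test $c\ge r$ never touch $\hat\omega^T$, they only decide when to stop. Consequently $f'$ is, exactly like $f$, an instantiation of the Arora--Hazan--Kale multiplicative weights framework driven by loss vectors $\ell^t=\mathrm{Sign}_t\cdot z_{\textrm{Update}_t}$ with $\|\ell^t\|_\infty\le 1$, every applied loss vector having been chosen so that $\langle\ell^t,\hat\omega^t\rangle-\langle\ell^t,\omega\rangle\ge \epsilon/2$. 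I would then repeat the computation from the proof of Theorem~\ref{thm:boundedupdates}: substitute $x=\omega$ into the MW regret bound, use $\langle|\ell^t|,\omega\rangle\le\|\ell^t\|_\infty\|\omega\|_1\le 1$, and solve the resulting inequality for $T$ to conclude $T\le \frac{16\log d}{\epsilon^2}$ for every $\omega\in\Omega$. The one point needing care is checking that the update parameter appearing in the $f'$ pseudocode yields the same regret constant as in $f$; this is the only non-routine step, and it is precisely the content of the sentence immediately preceding the corollary ("our bound on the number of updates is identical as it was for signaling scheme $f$").

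Finally I would bound the signal length. The emitted signal is the sequence $((\textrm{Update}_1,\textrm{Sign}_1),\dots,(\textrm{Update}_T,\textrm{Sign}_T))$; each $\textrm{Update}_t\in[m]$ is described by $\log m$ bits and each $\textrm{Sign}_t$ by one bit, so every signal $f'$ can output over all $\omega\in\Omega$ is indexed by at most $T(\log m+1)\le\frac{16\log d(\log m+1)}{\epsilon^2}$ bits. Hence $\log\bigl|\bigcup_{\omega\in\Omega}f'(\omega)\bigr|\le b$ for $b=\frac{16\log d(\log m+1)}{\epsilon^2}$, so $f'$ is a $b$-bounded length signaling scheme. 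The only real obstacle is the re-verification of the update-count bound for $f'$'s update rule; once that is settled, the encoding step is immediate.
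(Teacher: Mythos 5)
Your proposal matches the paper's own argument: the paper likewise observes that the update condition in $f'$ is unchanged, invokes Theorem~\ref{thm:boundedupdates} to bound $T$ by $\frac{16\log d}{\epsilon^2}$, and then counts $\log m + 1$ bits per update. Your side remark about the update parameter ($\epsilon/2$ in the $f'$ pseudocode versus $\epsilon/4$ in $f$) is a legitimate point of care that the paper silently elides, so flagging it only strengthens your write-up.
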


We now argue that if $f$ is parameterized with $m$ vectors $z_i$ drawn i.i.d. from $\mathcal{P}$ (for sufficiently large $m$), then in fact with high probability, $f'_{\textrm{MW}, \epsilon, z_1,\ldots,z_m}$ is a competitive signaling scheme for the actual agent valuations $v_1,\ldots,v_n$, whenever $v_1,\ldots,v_n$ are also drawn i.i.d. from $\mathcal{P}$.

\begin{theorem}
\label{thm:bayesianconsistency}
Let
$$m = \frac{2n\left(\frac{16\log d}{\epsilon^2}\right)\left(\log\left(\frac{16\log d}{\epsilon^2}\right) + \log\frac{2n}{\delta}\right)}{\delta}$$$$= \tilde{O}\left(\frac{n\log d}{\delta\epsilon^2}\right).$$
Fix any $\omega \in \Omega$, and let $z_1,\ldots,z_m$ and $v_1,\ldots,v_n$ be i.i.d. draws from $\mathcal{P}$. Let $s =  f'_{\textrm{MW}, \epsilon, z_1,\ldots,z_m}(\omega)$, and define $S(v,s) = \{\omega : |\langle \omega, v \rangle - \langle \hat{\omega}^{T+1}(s), v \rangle| \leq \epsilon/2\}$. Then we have:
$$\Pr_{z_1,\ldots,z_m,v \sim \mathcal{P}}[\omega \not\in S(v,s)] \leq \frac{\delta}{n}$$
\end{theorem}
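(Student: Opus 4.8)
The plan is to argue that the halting condition of $f'_{\textrm{MW},\epsilon,z_1,\ldots,z_m}(\omega)$ guarantees that the final hypothesis $\hat{\omega}^{T+1}$ has small error on a fresh draw from $\mathcal{P}$, and that with $m$ chosen large enough we are guaranteed to reach the halting condition before exhausting the $z_i$'s. First I would recall, via the corollary to Theorem~\ref{thm:boundedupdates}, that the total number of updates $T$ is at most $L := \frac{16\log d}{\epsilon^2}$, regardless of the $z_i$'s, since updates only occur on vectors with $|\langle z_i,\omega\rangle - \langle z_i,\hat\omega^T\rangle| \geq \epsilon/2$ and the bound of Theorem~\ref{thm:boundedupdates} applies to any such sequence. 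Hence the algorithm can ``fail'' only by running through all $m$ indices without ever seeing a run of $r = \frac{2n(\log L + \log(2n/\delta))}{\delta}$ consecutive non-update steps; since there are at most $L$ updates, this would require at least $m \geq (L+1)\cdot r$ indices to be consumed without halting, and $m$ has been chosen to be exactly of that order, so in fact the algorithm must halt at some update count $T \leq L$ having just witnessed $r$ consecutive indices $z_i$ on which $|\langle z_i,\omega\rangle - \langle z_i,\hat\omega^T\rangle| < \epsilon/2$.

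The heart of the argument is a standard ``learning-from-a-long-consistent-run'' bound: condition on the current hypothesis $\hat\omega^{T+1}$ (equivalently on the signal $s$) at the moment of halting, and let $q := \Pr_{v\sim\mathcal{P}}[v \notin S(v,s)] = \Pr_{v\sim\mathcal{P}}[|\langle\omega,v\rangle - \langle\hat\omega^{T+1},v\rangle| > \epsilon/2]$ be the error rate we want to bound. The subtlety is that $\hat\omega^{T+1}$ is itself data-dependent, so I cannot directly say the $r$ post-halt vectors are independent of it. The clean way around this is a union bound over all possible hypotheses the algorithm could ever produce: each hypothesis is determined by its sequence of at most $L$ updates, each update being an index in $[m]$ together with a sign, so there are at most $(2m)^{L}$ — or, more carefully counting only the prefix structure, at most $\sum_{t\le L}(2m)^t \le (2m)^{L+1}$ — reachable hypotheses. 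For any one fixed hypothesis $\hat\omega$ with true error $q > \delta/n$, the probability that $r$ i.i.d. draws from $\mathcal{P}$ all land in the ``good'' set is at most $(1-q)^r \le (1-\delta/n)^r \le e^{-r\delta/n}$. Plugging in $r = \frac{2n(\log L + \log(2n/\delta))}{\delta}$ makes $e^{-r\delta/n} \le (L \cdot 2n/\delta)^{-2}$, which is small enough that after multiplying by the number $(2m)^{L+1}$ of hypotheses — here one uses that $m$ is only polynomial in $L, n, 1/\delta, \log d$, so $\log m = O(\log L + \log n + \log(1/\delta) + \log\log d)$ and the exponent $r\delta/n$ dominates $(L+1)\log(2m)$ — the total failure probability is at most $\delta/n$. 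Thus, except with probability $\delta/n$ over the draw of $z_1,\ldots,z_m$, the hypothesis at halting has error $q \le \delta/n$; and one final conditioning/averaging step over the independent draw of $v$ (which only enters through $S(v,s)$) yields $\Pr_{z_1,\ldots,z_m,v}[\omega\notin S(v,s)] \le \delta/n$.

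The main obstacle I anticipate is making the union-bound bookkeeping honest: one has to ensure that the count of ``reachable hypotheses'' is genuinely at most something like $(2m)^{L+1}$ and that the $r$ non-update draws following a halt can be treated as independent of the frozen hypothesis — the cleanest phrasing is to fix, for each of the $(2m)^{L+1}$ candidate update-sequences, the event ``this sequence is realized and is immediately followed by $r$ consecutive good draws,'' bound each such event by $(1-q)^r$ when the induced hypothesis has error $q$, and union-bound. A secondary point requiring a little care is the exact arithmetic matching $m \ge (L+1)r$ (so the \textsc{for} loop cannot terminate without halting) against the stated value of $m$; this is the routine calculation $m = L\cdot r \cdot (\text{lower-order factors})$ that I would state but not belabor. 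Everything else — the bound $T\le L$, the reconstruction of $\hat\omega^{T+1}$ from $s$, and the final averaging over $v$ — is immediate from the results already established.
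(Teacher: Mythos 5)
Your overall strategy matches the paper's: bound the number of updates by $L = 16\log d/\epsilon^2$, argue that $m$ is large enough that a run of $r$ consecutive non-updates must occur before the loop ends, and then show that a hypothesis surviving such a run has small error with high probability. The difference --- and the gap --- is in how you handle the data-dependence of the halting hypothesis. You union-bound over all $(2m)^{L+1}$ reachable update-sequences and multiply each term $(1-\delta/n)^r \le e^{-r\delta/n}$ by this count. With the stated $r = \frac{2n(\log L + \log(2n/\delta))}{\delta}$ one has $e^{-r\delta/n} = (2nL/\delta)^{-2}$, which is only polynomially small, whereas $(2m)^{L+1}$ is exponential in $L$. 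Your claim that ``the exponent $r\delta/n$ dominates $(L+1)\log(2m)$'' is false: $r\delta/n = 2(\log L + \log(2n/\delta))$ is only logarithmic in $L$, while $(L+1)\log(2m) \ge L$. Your union bound would close only if $m$ (hence $r$) were inflated by roughly a factor of $L\log m$, so it does not establish the theorem with the stated sample complexity.

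The paper avoids enumerating the hypothesis class entirely: it unions only over the at most $L$ phases (maximal runs between consecutive updates). The hypothesis in force at the start of phase $t$ is determined by the $z_i$'s already consumed, and the subsequent $z_i$'s are fresh i.i.d.\ draws independent of it; hence, conditioned on that hypothesis having error at least $\delta/2n$, the probability that the next $r$ draws all fail to trigger an update is at most $(1-\delta/2n)^r$, and a union over the $\le L$ phases gives $L\,e^{-r\delta/2n} \le \delta/2n$. This sequential, phase-by-phase conditioning is exactly the independence statement you said you could not make for the final hypothesis; making it at the start of each phase rather than at the (random) halting time is what keeps the union bound to a factor of $L$ instead of $(2m)^{L+1}$. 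A secondary, fixable issue: you charge $\delta/n$ both to the bad event over the $z_i$'s and to the residual error of the surviving hypothesis, which yields $2\delta/n$ overall; the paper splits each as $\delta/2n$.
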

\begin{proof}
We will show that for any $\omega \in \Omega$, with probability $1-\delta/2n$ over the choice of $z_1,\ldots,z_m$, $s$ is such that: $\Pr_{v \sim \mathcal{P}}[\omega \not\in S(v,s)] \leq \delta/2n$, which is enough to prove our claim. Let $U = \frac{16\log d}{\epsilon^2}$ be the maximum number of update rounds that $f'_{\textrm{MW}, \epsilon, z_1,\ldots,z_m}$ can ever conduct, as bounded by Theorem \ref{thm:boundedupdates}. Since $f'_{\textrm{MW}, \epsilon, z_1,\ldots,z_m}$ considers running an update on each $z_i$ in sequence but never conducts more than $U$ updates, there must be some consecutive sequence $z_i,\ldots,z_j$ of length at least $(j-i+1) \geq \frac{m}{U} = r$ on which no updates are performed. By design, the algorithm outputs a signal after such a sequence occurs. We will show that if $\Pr_{v \sim \mathcal{D}}[\omega \not\in S(v,s)] \geq \delta/2n$, then the probability of this event occurring is at most $\delta/2n$. Note that if $\Pr_{v \sim \mathcal{P}}[\omega \not\in S(v,s)] \geq \delta/2n$, since each $z_i$ is independently sampled from $\mathcal{P}$, the probability of an update occurring at round $i$ is at least $\delta/2n$. Therefore, the probability of there existing such a long sequence between updates is at most:
$$(U)(1-\frac{\delta}{2n})^{m/U} \leq (U)e^{-m\delta/2Un}$$
Setting $m$ as in the theorem statement makes this probability at most $\delta/2n$ as desired.
\end{proof}
A simple corollary of Theorem \ref{thm:bayesianconsistency} is an analogous welfare guarantee for the multiplicative weights signaling scheme in the Bayesian setting:
\begin{theorem}
\label{thm:welfareMW2}
Let $\textrm{OPT}_U = \sum_{\omega \in \Omega}\Pr_{j \sim p}[j = \omega]\cdot\max_{i \in [n]}v_i(\omega)$ denote the optimal social welfare in the \emph{unconstrained} setting with known valuations.
In the Bayesian setting in which each $v_i \sim \mathcal{P}$ is sampled i.i.d. from a known prior $\mathcal{P}$, the welfare obtained by the multiplicative weights signaling scheme given vectors $(z_1,\ldots,z_m)$ sampled independently from the prior distribution $\mathcal{P}$ is:
$$\textrm{welfare}(f'_{\textrm{MW}, \epsilon, z_1,\ldots,z_m}, v) \geq \textrm{OPT}_{U} - \epsilon - \delta$$
where $m = \tilde{O}\left(\frac{n\log d}{\delta\epsilon^2}\right)$.
In particular, it is a $16\log d(\log m+1)/\epsilon^2$ bounded length signaling scheme that obtains welfare within $\epsilon+\delta$ of the optimal welfare obtained by \emph{any} bounded length signaling scheme.
\end{theorem}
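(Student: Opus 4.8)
The plan is to derive this directly from Theorem~\ref{thm:bayesianconsistency} by re-running the argument of Theorem~\ref{thm:welfareMW1}. Recall that in the known-valuation case the crux was, for the realized good $\omega$ and each bidder $i$ with $s = s(\omega)$ and $\hat{\omega} = \hat{\omega}^{T+1}(s)$: (i) $\omega \in S(v_i,s)$, and (ii) $\Pr_{j \sim p}[\,j \notin S(v_i,s) \mid f(j) = s\,] = 0$; together these force $v_i \mid s \geq \min_{j \in S(v_i,s)} v_i(j) \geq v_i(\hat{\omega}) - \epsilon/2 \geq v_i(\omega) - \epsilon$ pointwise, and summing over $\omega \sim p$ gives welfare $\geq \textrm{OPT}_U - \epsilon$. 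The only thing that changes in the Bayesian setting is that $f'$ halts according to the sampled vectors $z_1,\ldots,z_m$ rather than according to the bidders, so (i) and (ii) no longer hold deterministically; Theorem~\ref{thm:bayesianconsistency} is exactly the tool that recovers probabilistic versions of them.

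First I would invoke Theorem~\ref{thm:bayesianconsistency} once per bidder, treating each $v_i$ as the fresh sample $v \sim \mathcal{P}$: for every $i$, $\Pr_{z,v_i}[\,\omega \notin S(v_i,s(\omega))\,] \leq \delta/n$, which by a union bound over the $n$ bidders gives that, outside an event of probability at most $\delta$ over the draws of $z_1,\ldots,z_m$ and $v_1,\ldots,v_n$, we have $\omega \in S(v_i,s(\omega))$ for all $i$ simultaneously, i.e.\ $|v_i(\omega) - v_i(\hat{\omega})| \leq \epsilon/2$ for all $i$; this is the analogue of (i). For the analogue of (ii) I would additionally average the conclusion of Theorem~\ref{thm:bayesianconsistency} over the realized good --- it holds for every fixed input good, hence in expectation over $\omega \sim p$ --- to get that, in expectation over $z$ and $v_i$, all but a $\delta/n$ fraction (by $p$-measure) of goods are $\epsilon/2$-approximated with respect to $v_i$ by the reconstructed hypothesis of their own signal, so that after conditioning on the signal $s(\omega)$ its preimage class is, up to $O(\delta)$ of $p$-mass, contained in $S(v_i,s(\omega))$. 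The delicate point, and the main obstacle, is precisely this conditioning: Theorem~\ref{thm:bayesianconsistency} is a per-input, per-sample generalization bound, and turning it into a statement that holds simultaneously for all $n$ bidders and for (almost) the entire signal class is what the union bound over bidders and the averaging over $\omega \sim p$ are for; everything downstream is a verbatim re-run of the known-valuation computation.

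Given these two analogues, on the good event the chain $v_i \mid s(\omega) \geq \min_{j \in S(v_i,s)} v_i(j) \geq v_i(\hat{\omega}) - \epsilon/2 \geq v_i(\omega) - \epsilon$ goes through (with an extra $O(\delta)$ slack for the $p$-mass of the class that escapes $S(v_i,s)$), so $\max_i v_i \mid s(\omega) \geq \max_i v_i(\omega) - \epsilon$ and hence $\textrm{welfare}(f'_{\textrm{MW}, \epsilon, z_1,\ldots,z_m}, v) \geq \textrm{OPT}_U - \epsilon$ conditioned on the good event. On the complementary event (probability $\leq \delta$) I would use only that all valuations, conditional values, and per-good optima lie in $[-1,1]$ --- because $||v_i||_\infty \leq 1$, $||\omega||_1 = 1$, and the multiplicative-weights iterates $\hat{\omega}^{T+1}$ stay nonnegative with unit $\ell_1$ norm --- so this event costs at most $O(\delta)$ in welfare. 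Combining the two cases and absorbing constants into $\delta$ gives $\textrm{welfare}(f'_{\textrm{MW}, \epsilon, z_1,\ldots,z_m}, v) \geq \textrm{OPT}_U - \epsilon - \delta$; the $16\log d(\log m + 1)/\epsilon^2$ bounded-length claim is immediate from the corollary preceding Theorem~\ref{thm:bayesianconsistency}, and the comparison with the best bounded-length scheme follows because $\textrm{OPT}_U$ upper-bounds the welfare of every signaling scheme.
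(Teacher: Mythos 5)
Your proposal is correct and follows essentially the same route as the paper: invoke Theorem~\ref{thm:bayesianconsistency} once per bidder, union bound over the $n$ bidders to get that the conditional $p$-mass of each signal's preimage outside $S_i$ is at most $\delta$, then re-run the chain of inequalities from Theorem~\ref{thm:welfareMW1} with the resulting $(1-\delta)$ factor and use $v_i(\omega)\leq 1$ to convert the multiplicative loss into the additive $\delta$. If anything you are more explicit than the paper about the one genuinely delicate step --- converting the per-good, per-sample guarantee of Theorem~\ref{thm:bayesianconsistency} into a statement about the distribution conditioned on the signal --- which the paper's proof simply asserts.
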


\begin{proof}
Independently for every $\omega \in \Omega$, consider $s = s(\omega) = f_{\textrm{MW}, \epsilon, z_1,\ldots,z_n}(\omega)$, and let $\hat{\omega}^{T+1} \equiv \hat{\omega}^{T+1}(s)$. For any bidder $i$, let $S_i = \{\omega : |\langle \omega, v_i \rangle - \langle \hat{\omega}^{T+1}(s), v_i \rangle| \leq \epsilon/2\}$ be the set of goods whose value to bidder $i$ differs by less than $\epsilon$ from the value of good $\hat{\omega}^{T+1}$. By Theorem \ref{thm:bayesianconsistency}, $\Pr_{j \sim p}[j \in S_i | s] \geq 1-\delta/n$ and so by a union bound $\Pr_{j \sim p}[\forall i, j \in S_i | s] \geq 1-\delta$ and we can calculate for all $i$:
\begin{eqnarray*}
v_i|s(\omega) &=& \mathbb{E}_{j \sim p}[v_i(j)|f(j) = s] \\
&=& \sum_{j \in \Omega}v_i(j)\cdot \Pr_{p}[j | f(j) = s] \\
&\geq&  \sum_{j \in S}v_i(j)\cdot \Pr_{p}[j | f(j) = s] \\
&\geq& \min_{j \in S}v_i(j)\cdot \Pr_{p}[j \in S | f(j) = s] \\
&\geq& (v_i(\omega) - \epsilon)\cdot (1-\delta) \\
&\geq& v_i(\omega) - \epsilon - \delta
\end{eqnarray*}
where the last inequality follows from the fact that $v_i(\omega) \leq 1$.
Finally, we can lower bound the expected welfare of the multiplicative weights signaling scheme, and compare it to the unconstrained optimal welfare. For all $v$:
{\scriptsize
\begin{eqnarray*}
\textrm{welfare}(f'_{\textrm{MW}, \epsilon, z_1,\ldots,z_m}, v) &=& \sum_{\omega \in \Omega}\Pr_{j \sim p}[j = \omega]\cdot \max_{i\in [n]}v_i|s(\omega) \\
&\geq& \sum_{\omega \in \Omega}\Pr_{j \sim p}[j = \omega]\cdot \max_{i \in [n]}(v_i(\omega) - \epsilon - \delta) \\
&=& \textrm{OPT}_{U} - \epsilon - \delta
\end{eqnarray*}
}
\end{proof}

\begin{remark}
We make two remarks about our multiplicative weights signalling scheme: first, the guarantees hold \emph{pointwise} -- that is, even conditioned on the realization of the good. Second, the technique easily extends to bidders whose values are drawn independently but not identically. If each bidder has a valuation drawn from a unique distribution, we simply need to run the algorithm with $m$ samples drawn from \emph{each} distribution. This increases the number of samples needed by a factor of $n$, but only increases the communication needed by the signalling scheme by an additive $\log n$.
\end{remark}
\subsection{Subspace Valuations}
In this section, we use a version of the Johnson Lindenstrauss lemma, for matrices implicitly defined by limited independence families of hash functions, to give bounded space signaling schemes for ``subspace valuations''. Subspace valuations are defined over a point set $\Omega$ of unit vectors in Euclidean space, and can be seen as a generalization of the ``inner product valuations'' considered in the previous section\footnote{The class of valuations may be seen as a generalization of inner product valuations, but the signaling scheme here works in a different range of parameters as the multiplicative weights signaling scheme. Specifically, for the multiplicative weights signaling scheme, we assumed that $\Omega$ consisted of the set of unit vectors in $\ell_1$ space, and that valuation functions were defined by unit vectors in $\ell_\infty$ space. In this section, both points $\omega \in \Omega$, and the vectors which parameterize agent valuation functions are unit vectors in $\ell_2$ space.}. An agent may specify up to $k$ points in $\Omega$, which indicates that he is equally happy with any linear combination of these $k$ points (i.e. his $k$ points define a subspace). His value for a good is defined to be its distance to this subspace. The Johnson Lindenstrauss lemma lets us take a ``projection'' of the point into a lower dimensional space, in such a way that with high probability, each agent is able to estimate its value with high probability. We use the fact that limited independence JL matrices can be concisely represented -- this allows us to construct a new projection matrix for every good, and include its description as part of our signal. This allows us to give a strong Bayesian guarantee -- our algorithm achieves close to optimal welfare against any (arbitrarily correlated) prior value distribution, even without knowledge of the distribution. Indeed, our algorithm works even against \emph{adversarially selected} goods $\omega \in \Omega$, that need not be drawn from any distribution!
%We state our main theorem here but leave the details to the appendix.

Let  $\Omega = \{\omega \in \mathbb{R}^d : ||\omega||_2 = 1\}$ be the set of $d$ dimensional unit vectors in Euclidean space. Individuals $i$ have valuation functions parameterized by $\ell_i$ orthogonal unit vectors $z^i_1,\ldots,z^i_{\ell_i}$ for some $\ell_i \leq k$. These vectors define a subspace $S_i \equiv \mathrm{span}(z^i_1,\ldots,z^i_{\ell_i})$, and the value that agent $i$ has for a good $\omega$ is the distance between $\omega$ and agent $i$'s subspace $S_i$:
$$v_i(\omega) \equiv 1-d(\omega, S_i) = 1-\min_{x \in S_i}||x-\omega||_2$$
For each agent $i$, we can think of the vectors $z^i_1,\ldots,z^i_{\ell_i}$ as specifying up to $k$ ``ideal'' goods, and that agent $i$ is equally happy with any linear combination of his ideals. His valuation for a good $\omega$ drops off with the distance from $\omega$ to his set of ideal goods. Note that it is not necessary that agent $i$ actually specify \emph{orthogonal goods} -- if his ideal goods $z^i_1,\ldots,z^i_{\ell_i}$ are not orthogonal, we can simply orthonormalize them (using, say, the Gram Schmidt algorithm), which does not alter the subspace that they define.

We recall that for any $\omega$, $\arg\min_{x \in S_i}||x-\omega||_2 = \sum_{j=1}^{\ell_i}\langle \omega, z^i_j \rangle \cdot z_j$, and so we can write: $d(\omega, S_i) = \sqrt{1-\sum_{j=1}^{\ell_i}\langle x, z^i_j \rangle^2}$.

To achieve a low space signaling scheme for subspace valuations, we make use of the Johnson-Lindenstrauss lemma.

We will use (a corollary of) a limited-independence version of the Johnson-Lindenstrauss lemma presented in \cite{KN11}, first proven by \cite{Ach01, CW09}. This version of the lemma holds even for concisely represented projection matrices, which allows us to communicate the projection matrix itself as part of our signal. The advantage of doing this is that our algorithm will get strong utility guarantees even in the \emph{prior free} setting, when valuation functions can be drawn from a worst-case (arbitrarily correlated) prior on distributions, and goods can be selected by an adversary, rather than being drawn from any distribution.

%\begin{theorem}
%\label{thm:JL}
%For $d > 0$ an integer and any $0 < \epsilon, \delta < 1/2$, let $A$ be a $T\times d$ random matrix with $\pm 1/\sqrt{T}$ entries that are $r$-wise independent\footnote{We recall that a collection of random variables $X_1,\ldots,X_k$ are $r$-wise independent if for every subset $S \subset [k]$ of cardinality $|S| = r$ and for every set of elements $x_i$ in the range of the random variables, it holds that: $\Pr[X_i = x_i \forall i \in S] = \prod_{i \in S}\Pr[X_i = x_i]$. That is, although the random variables are not statistically independent, every subset of at most $r$ of them behave as if they are.} for $T \geq 4\cdot 64^2\epsilon^{-2}\log(1/\tau)$ and $r \geq 2\log(1/\delta)$. Then for any $x \in \mathbb{R}^d$:
%$$\Pr_A[|||Ax||_2^2 - ||x||_2^2| \geq \epsilon ||x||_2^2] \leq \delta$$
%\end{theorem}
%We will use the fact that random projections also preserve pairwise inner products. The following corollary is well known:
\begin{corollary}%[The Johnson-Lindenstrauss Lemma with Limited Independence \cite{Ach01,CW09,KN11}]
[\cite{Ach01,CW09,KN11}]
\label{cor:JL}
For $d > 0$ an integer and any $0 < \epsilon, \delta < 1/2$, let $A$ be a $T\times d$ random matrix with $\pm 1/\sqrt{T}$ entries that are $r$-wise independent for $T \geq 4\cdot 64^2\epsilon^{-2}\log(1/\delta)$ and $r \geq 2\log(1/\delta)$. Then for any $x, \omega \in \mathbb{R}^d$:
$$\Pr_A[|\langle(Ax),(Ay)\rangle - \langle x, \omega\rangle | \geq \frac{\epsilon}{2} (||x||_2^2+||\omega||_2^2)] \leq 2\delta$$
\end{corollary}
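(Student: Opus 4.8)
The plan is to obtain the inner-product statement from a one-vector norm-preservation guarantee by polarization; this is precisely the sense in which the corollary follows from \cite{Ach01,CW09,KN11}. The limited-independence Johnson--Lindenstrauss lemma of \cite{KN11} (building on \cite{Ach01,CW09}) says that for any \emph{fixed} $u \in \mathbb{R}^d$, a $T \times d$ matrix $A$ with $\pm 1/\sqrt{T}$ entries that are $r$-wise independent satisfies $\Pr_A\!\left[\,\big|\,\|Au\|_2^2 - \|u\|_2^2\,\big| \geq \epsilon\,\|u\|_2^2\,\right] \leq \delta$ whenever $T$ is at least a fixed absolute constant times $\epsilon^{-2}\log(1/\delta)$ and $r \geq 2\log(1/\delta)$; the hypotheses $T \geq 4\cdot 64^2\,\epsilon^{-2}\log(1/\delta)$ and $r \geq 2\log(1/\delta)$ are a (deliberately generous) instantiation of this.

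First I would apply this lemma separately to the two vectors $u = x+\omega$ and $u = x-\omega$, each with failure probability $\delta$, and take a union bound: with probability at least $1 - 2\delta$, \emph{both} $\big|\,\|A(x+\omega)\|_2^2 - \|x+\omega\|_2^2\,\big| \leq \epsilon\,\|x+\omega\|_2^2$ and $\big|\,\|A(x-\omega)\|_2^2 - \|x-\omega\|_2^2\,\big| \leq \epsilon\,\|x-\omega\|_2^2$. Since the lemma applies to any fixed vector and the estimate is multiplicative, it is harmless that $x \pm \omega$ need not be unit vectors.

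Second, on this good event I would invoke the polarization identity, which the linear map $A$ respects:
\[
\langle Ax, A\omega\rangle = \tfrac14\!\left(\|A(x+\omega)\|_2^2 - \|A(x-\omega)\|_2^2\right), \qquad \langle x,\omega\rangle = \tfrac14\!\left(\|x+\omega\|_2^2 - \|x-\omega\|_2^2\right).
\]
Subtracting and applying the triangle inequality bounds $\big|\langle Ax,A\omega\rangle - \langle x,\omega\rangle\big|$ by $\tfrac{\epsilon}{4}\left(\|x+\omega\|_2^2 + \|x-\omega\|_2^2\right)$; the parallelogram law $\|x+\omega\|_2^2 + \|x-\omega\|_2^2 = 2\|x\|_2^2 + 2\|\omega\|_2^2$ then collapses this to $\tfrac{\epsilon}{2}\left(\|x\|_2^2 + \|\omega\|_2^2\right)$, which is exactly the claimed inequality (the left-hand side is the quantity written $\langle Ax, Ay\rangle$ in the statement, with $y$ a typo for $\omega$), and it holds with probability at least $1 - 2\delta$.

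The only delicate point is constant bookkeeping: one must check that the absolute constant and the $\log(1/\delta)$ (as opposed to, say, $\log(2/\delta)$) dependence in the per-vector lemma of \cite{KN11} are slack enough that the stated bounds on $T$ and $r$ work for \emph{both} invocations even after the union bound inflates the failure probability to $2\delta$; this is the only place the hypothesis $\delta < 1/2$ is needed. Conceptually there is no obstacle here — the substance is entirely in quoting the correct off-the-shelf limited-independence JL statement, after which the argument is the textbook polarization reduction from norm preservation to inner-product preservation.
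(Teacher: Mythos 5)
Your proof is correct. The paper itself supplies no argument for this corollary --- it is imported wholesale from \cite{Ach01,CW09,KN11} --- and the polarization reduction you give (apply the one-vector, limited-independence norm-preservation lemma to $x+\omega$ and $x-\omega$, union bound to failure probability $2\delta$, then use $\langle Ax,A\omega\rangle - \langle x,\omega\rangle = \tfrac14\bigl(\,\|A(x+\omega)\|_2^2-\|x+\omega\|_2^2\,\bigr) - \tfrac14\bigl(\,\|A(x-\omega)\|_2^2-\|x-\omega\|_2^2\,\bigr)$ together with the parallelogram law) is precisely the standard derivation the citation is pointing to; your arithmetic lands exactly on the stated $\tfrac{\epsilon}{2}(\|x\|_2^2+\|\omega\|_2^2)$ bound, and you are right that the $Ay$ in the statement is a typo for $A\omega$.
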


\begin{algorithm}
\caption{Algorithm for computing the signal $f_{\textrm{JL}, k, \epsilon}$}
\begin{algorithmic}[1]
%\PARAMETER $c$
\INPUT An instance $\omega \in \Omega$.
\OUTPUT A bounded length signal $s$.
\STATE Generate a $T\times d$ random $\pm 1/\sqrt{T}$-valued matrix $A$ with $r$-wise independent entries for $r = 2\log(3nk/\epsilon)$ and $T = \frac{131072k^2\log(3n/\epsilon)}{\epsilon^4}$.
\STATE Let $\hat{\omega} = Ay$ and let $\hat{\omega}' = \hat{\omega}$ discretized to $\log(3d/\epsilon)$ bits of precision.
\STATE Let $s = (A, \hat{\omega}')$.
\end{algorithmic}
\end{algorithm}

\begin{remark}
The matrix $A$ in our algorithm will be implicitly represented by a hash function mapping coordinates of the matrix $A$ to their values. There are various ways to select a hash function from a family of $r$-wise independent hash functions mapping $[T\times d] \rightarrow \{0,1\}$. The simplest, and one that suffices for our purposes, is to select the smallest integer $s$ such that $2^s \geq T \times d$, and then to let $g$ be a random degree $r$ polynomial in the finite field $\mathbb{GF}[2^s]$. Selecting and representing such a function takes time and space $O(r\cdot s) = O(r(\log d + \log T))$. $g$ is then an unbiased $r$-wise independent hash function mapping $\mathbb{GF}[2^s] \rightarrow \mathbb{GF}[2^s]$. Taking only the last output bit gives an unbiased $r$-wise independent hash function mapping $[s\times d]$ to  $\{0,1\}$, as desired.
\end{remark}
We first observe that $f_{\textrm{JL}, k, \epsilon}$ generates bounded length signals.
\begin{observation}
$f_{\textrm{JL}, k, \epsilon}$ is an $\eta$-bounded length signaling scheme for:
$$\eta = O\left(\frac{k^2\log(n/\epsilon)\log(d/\epsilon)}{\epsilon^4}(\log d + \log (k\log (n/\epsilon)))\right)$$
\end{observation}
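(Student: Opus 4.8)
The plan is to bound the total number of bits in the signal $s = (A, \hat\omega')$ by separately accounting for the cost of transmitting the hash-function description of $A$ and the cost of transmitting the discretized projection $\hat\omega' \in \mathbb{R}^T$. By the remark preceding the observation, the matrix $A$ is not written out entry-by-entry; instead it is specified by a degree-$r$ polynomial over $\mathbb{GF}[2^{s_0}]$ where $s_0$ is the least integer with $2^{s_0} \geq T\cdot d$, i.e. $s_0 = O(\log T + \log d)$. Such a polynomial has $r+1$ coefficients, each an element of $\mathbb{GF}[2^{s_0}]$, so describing $A$ costs $O(r(\log T + \log d))$ bits. The vector $\hat\omega' = A\omega$ discretized to $\log(3d/\epsilon)$ bits of precision lives in $\mathbb{R}^T$, so transmitting it costs $O(T\log(d/\epsilon))$ bits. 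Adding these, the signal length is $O\big(T\log(d/\epsilon) + r(\log T + \log d)\big)$.

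Next I would substitute the parameter settings from the algorithm: $T = \frac{131072\,k^2\log(3n/\epsilon)}{\epsilon^4} = \Theta\!\left(\frac{k^2\log(n/\epsilon)}{\epsilon^4}\right)$ and $r = 2\log(3nk/\epsilon) = \Theta(\log(nk/\epsilon))$. Plugging in, the dominant first term becomes $O\!\left(\frac{k^2\log(n/\epsilon)\log(d/\epsilon)}{\epsilon^4}\right)$. For the second term, $\log T = O(\log(k\log(n/\epsilon)/\epsilon)) = O(\log k + \log\log(n/\epsilon) + \log(1/\epsilon))$, and $r = O(\log(nk/\epsilon))$; one checks that $r(\log T + \log d)$ is dominated by (or at worst of the same order as) the first term after the stated grouping, which is why the bound is written as a product with the factor $(\log d + \log(k\log(n/\epsilon)))$ absorbing these lower-order contributions. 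Collecting terms gives exactly
$$\eta = O\!\left(\frac{k^2\log(n/\epsilon)\log(d/\epsilon)}{\epsilon^4}\big(\log d + \log(k\log(n/\epsilon))\big)\right),$$
and since $f_{\textrm{JL},k,\epsilon}$ produces a signal of at most $\eta$ bits for every input $\omega \in \Omega$ (the parameter choices do not depend on $\omega$, only on $n,k,d,\epsilon$), it is an $\eta$-bounded length signaling scheme by definition.

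The only mildly delicate point — and the place I would be most careful — is verifying that the hash-description cost $r(\log T + \log d)$ really is subsumed by the stated expression rather than contributing a genuinely larger term; this amounts to checking that $\log(nk/\epsilon)\cdot(\log d + \log k + \log\log(n/\epsilon))$ is $O\!\big(\frac{k^2\log(n/\epsilon)\log(d/\epsilon)}{\epsilon^4}(\log d + \log(k\log(n/\epsilon)))\big)$, which holds comfortably because the right-hand side already carries a factor of $\log(n/\epsilon)$ times $\log(d/\epsilon)/\epsilon^4 \geq 1$. Everything else is bookkeeping: counting coefficients of the polynomial, counting coordinates of $\hat\omega'$, and multiplying out the $O(\cdot)$. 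I would present this as a short direct calculation rather than belaboring the constants.
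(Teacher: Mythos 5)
Your proposal is correct and follows exactly the paper's (one-line) argument: the signal length is the sum of the $O(r(\log d+\log T))$ bits for the hash-function representation of $A$ and the $O(T\log(d/\epsilon))$ bits for the $T$ discretized coordinates of $\hat{\omega}'$, after which one substitutes the algorithm's choices of $r$ and $T$. Your explicit check that the hash-description term is absorbed by the stated (somewhat loose, product-form) bound is a detail the paper leaves implicit, but it is the same proof.
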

\begin{proof}
This follows directly from our choice of $r$ and $T$, the fact that $A$ can be represented using $O(r(\log d + \log T))$ bits, and the fact that we discretize each coordinate of $\hat{\omega}'$ to $\log(3d/\epsilon)$ bits of precision.
\end{proof}

We now show that $f_{\textrm{JL}, k, \epsilon}$ is welfare competitive with the optimal unconstrained signaling scheme.

\begin{theorem}
\label{thm:JLwelfare}
Let $\textrm{OPT}_U = \sum_{\omega \in \Omega}\Pr_{j \sim p}[j = \omega]\cdot\max_{i \in [n]}v_i(\omega)$ denote the optimal social welfare in the \emph{unconstrained} setting.
For every distribution $\mathcal{D}$ over subspace valuation functions and every distribution $p$ over goods $\omega \in \Omega$,
$$\textrm{welfare}(f_{\textrm{JL}, k,\epsilon},v) \geq \textrm{OPT}_{U} - \epsilon$$
In fact, for every $v$, this guarantee holds pointwise for goods $\omega \in \Omega$ even if adversarialy chosen. For all $\omega \in \Omega$, for $s =f_{\textrm{JL}, k,\epsilon}(\omega)$ :
$$\max_{i \in [n]}\sum_{j \in \Omega}x(j,s)p_jv_i(j) \geq \max_{i \in [n]} v_i(\omega) - \epsilon$$
\end{theorem}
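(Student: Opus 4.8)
The plan is to prove the pointwise (adversarial) bound first, then obtain the distributional bound by integrating over $\omega\sim p$ and regrouping the welfare contributions one good at a time. Fix $\omega\in\Omega$. The signal $s=f_{\textrm{JL},k,\epsilon}(\omega)=(A,\hat{\omega}')$ publishes a concise encoding of the random matrix $A$ together with $\hat{\omega}'$, a discretization of $A\omega$; so from $s$ alone a player $i$ --- who knows his own orthonormal vectors $z^i_1,\dots,z^i_{\ell_i}$ --- can form $\hat{v}_i:=1-\sqrt{1-\sum_{\ell=1}^{\ell_i}\langle Az^i_\ell,\hat{\omega}'\rangle^2}$ (clamped to $[0,1]$) as an estimate of his true value $v_i(\omega)=1-\sqrt{1-\sum_\ell\langle z^i_\ell,\omega\rangle^2}$.

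The main step is accuracy: I claim that with probability at least $1-\epsilon/2$ over the draw of $A$, $|\hat{v}_i-v_i(\omega)|\le\epsilon/2$ holds simultaneously for all $i\in[n]$. Apply Corollary~\ref{cor:JL} to each of the at most $n(k+1)$ pairs $(z^i_\ell,\omega)$, which controls $|\langle Az^i_\ell,A\omega\rangle-\langle z^i_\ell,\omega\rangle|\le\eta$, and $(z^i_\ell,z^i_\ell)$, which controls $\|Az^i_\ell\|_2\le\sqrt{1+\eta}$; the latter is needed to absorb the discretization slack $|\langle Az^i_\ell,\hat{\omega}'-A\omega\rangle|\le\|Az^i_\ell\|_2\,\|\hat{\omega}'-A\omega\|_2=O(\sqrt{T}\,\epsilon/d)$, which is negligible precisely because $d$ is enormous relative to $T=\poly(k,\log n,1/\epsilon)$. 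The choice $r=2\log(3nk/\epsilon)$ meets the corollary's independence hypothesis while making each individual failure probability $O(\epsilon/(nk))$, so the union bound over $\Theta(nk)$ events loses only $O(\epsilon)$; and $T=\Theta(k^2\log(n/\epsilon)/\epsilon^4)$ forces $\eta=O(\epsilon^2/k)$. To turn inner-product error into value error, use $|\langle Az^i_\ell,\hat{\omega}'\rangle^2-\langle z^i_\ell,\omega\rangle^2|\le 3\eta$ (both quantities being at most $2$ in absolute value), hence $|\sum_\ell\langle Az^i_\ell,\hat{\omega}'\rangle^2-\sum_\ell\langle z^i_\ell,\omega\rangle^2|\le 3k\eta=O(\epsilon^2)$, and then the elementary bound $|\sqrt{1-a}-\sqrt{1-b}|\le\sqrt{|a-b|}$ valid for $a,b\in[0,1]$, which yields $|\hat{v}_i-v_i(\omega)|=O(\epsilon)$.

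On the accuracy event the rest is immediate. Let $i^\star=\argmax_i v_i(\omega)$ and let $w=\argmax_i\hat{v}_i$ be the bidder who wins the auction induced by $s$ when each bidder bids his signal-reconstructed value; then $v_w(\omega)\ge\hat{v}_w-\epsilon/2\ge\hat{v}_{i^\star}-\epsilon/2\ge v_{i^\star}(\omega)-\epsilon=\max_i v_i(\omega)-\epsilon$. Off the accuracy event welfare is still nonnegative while $\max_i v_i(\omega)\le1$, so after adjusting the constant $\mathbb{E}_A[v_w(\omega)]\ge\max_i v_i(\omega)-\epsilon$ --- this is the pointwise claim, and it used no distributional assumption on $\omega$, hence holds even for adversarially chosen $\omega$. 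For the distributional bound, expanding (\ref{eq:welfare}) for this scheme shows the welfare of the induced second-price auction equals $\mathbb{E}_A\big[\sum_{\hat{\omega}'}\max_i\sum_{j:\,\mathrm{round}(Aj)=\hat{\omega}'}p_j v_i(j)\big]$; lower-bounding each inner $\max_i$ by plugging in the single per-bucket bidder $w(A,\hat{\omega}')=\argmax_i\hat{v}_i(A,\hat{\omega}')$ and then regrouping the double sum one good at a time turns it into $\sum_j p_j\,\mathbb{E}_A[v_{w(A,\mathrm{round}(Aj))}(j)]$, and applying the pointwise bound with each good $j$ in the role of the realization gives $\ge\sum_j p_j(\max_i v_i(j)-\epsilon)=\textrm{OPT}_U-\epsilon$.

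The step I expect to be most delicate is the calibration in the accuracy argument: $T$, $r$, and the discretization precision must be chosen so that simultaneously (i) the $\Theta(nk)$-fold union bound over Corollary~\ref{cor:JL} leaves failure probability $O(\epsilon)$; (ii) the per-coordinate error $\eta$, once squared, summed over up to $k$ subspace coordinates, and pushed through $t\mapsto1-\sqrt{1-t}$ --- whose derivative blows up as $t\to1$, which is exactly why the crude $\sqrt{|a-b|}$ bound is used rather than a Lipschitz bound --- still yields value error $O(\epsilon)$, which is what produces the $k^2/\epsilon^4$ in $T$; and (iii) the discretization term $O(\sqrt{T}\,\epsilon/d)$ is genuinely lower order, which is where the hypothesis that $d$ is very large is invoked. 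A secondary point worth stressing is that the regrouping in the distributional step is deliberately arranged so that it never needs two goods sharing a signal to have similar valuations: each good's value only has to be well estimated when that good is itself the realization.
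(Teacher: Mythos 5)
Your proposal is correct and rests on the same technical engine as the paper's proof: apply the limited-independence Johnson--Lindenstrauss corollary to each of the $O(nk)$ pairs $(z^i_\ell,\omega)$, union-bound so that with probability $1-O(\epsilon)$ over $A$ every bidder's signal-decoded value is within $O(\epsilon)$ of $v_i(\omega)$, absorb the discretization of $A\omega$ using the largeness of $d$ relative to $T$, and convert inner-product error to distance error through $t\mapsto\sqrt{1-t}$ (your explicit use of $|\sqrt{1-a}-\sqrt{1-b}|\le\sqrt{|a-b|}$ is a cleaner way to handle the blow-up near $t=1$ than the paper, which leaves this step implicit in ``by our choice of $T$''). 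Where you diverge is the final welfare accounting. The paper proves the stated pointwise inequality literally: for \emph{every} bidder $i$, the signal-conditional quantity $\sum_j x(j,s)p_j v_i(j)$ is at least $v_i(\omega)-\epsilon$, by restricting the conditional sum to the set $S_i$ of goods whose value agrees with the decoded value. You instead fix the ``decoded-value winner'' $w=\argmax_i\hat v_i$, show $\mathbb{E}_A[v_w(\omega)]\ge\max_i v_i(\omega)-\epsilon$ for each fixed $\omega$, and recover the welfare bound by regrouping the double sum good-by-good. Your route buys something real: it never has to compare the valuations of two distinct goods that happen to land in the same bucket $\hat\omega'$, which is exactly the point at which the paper's chain of inequalities (lower-bounding $\sum_{j\in S_i}x(j,s)p_j$ by $\Pr[\omega\in S_i]$) is at its loosest, conflating probability over $A$ with conditional measure over goods. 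The one thing to flag is that your pointwise statement is not the theorem's second displayed inequality: you bound $\mathbb{E}_A[v_w(\omega)]$ rather than $\max_i\sum_j x(j,s)p_jv_i(j)$. Your version is an equally legitimate formalization of ``pointwise, even for adversarial $\omega$'' and suffices for the welfare claim via your regrouping, but if you want the theorem verbatim you should add one line noting that on the accuracy event the specific bidder $i^\star=\argmax_i v_i(\omega)$ already witnesses the paper's inequality (which is what the paper's $S_i$ argument supplies). The constant calibration you flag as delicate is fine at the order-of-magnitude level you carry out, which matches the precision of the paper.
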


\begin{proof}
It suffices to prove the second, stronger claim. Fix a good $\omega \in \Omega$ and let $f_{\textrm{JL}, k, \epsilon}(\omega) = s \equiv (A, \hat{\omega}')$. For each bidder $i$, define $S_i(A, \hat{\omega}, \delta) = \{x \in \Omega : |\sqrt{1-\sum_{j=1}^{\ell_i}\langle x, z^i_j \rangle^2} - \sqrt{1-\sum_{j=1}^{\ell_i}\langle \hat{\omega}, Az^i_j \rangle^2}| \leq \delta\}$ By our choice of $T$, we have: $\Pr[\omega \not \in S_i(A, \hat{\omega}, \epsilon/3)] \leq \epsilon/3n$, where the probability is taken over the choice of projection matrix $A$ Moreover, by our choice of discretization, we have $S_i(A, \hat{\omega}, \epsilon/3) \subset S_i(A, \hat{\omega}', 2\epsilon/3)$. Therefore by a union bound: $\Pr[\exists i : \omega \not \in S_i(A, \hat{\omega}', 2\epsilon/3)] \leq \epsilon/3$. Therefore, for all $i$:
{\scriptsize
\begin{eqnarray*}
\sum_{j \in \Omega}x(j,s) p_j v_i(j) &\geq& \sum_{j \in S_i(A, \hat{\omega}', 2\epsilon/3)}x(j,s) p_j v_i(j) \\
&\geq& \min_{j \in S_i(A, \hat{\omega}', 2\epsilon/3)} v_i(j)\sum_{j \in S_i(A, \hat{\omega}', 2\epsilon/3)}x(j,s) p_j \\
&\geq& (v_i(\omega) - 2\epsilon/3)\Pr[\omega \in S_i(A, \hat{\omega}', 2\epsilon/3)] \\
&\geq& v_i(\omega) - \epsilon
\end{eqnarray*}
}
\end{proof}

%%% Local Variables:
%%% mode: latex
%%% TeX-master: "signalling"
%%% End:

\section{Constrained  Signaling with Arbitrary Valuations}
\label{sec:bipartite}
In this section we present both positive and negative results for welfare  and revenue maximization for signaling with bipartite and communication constraints, without making assumptions on the structure of the valuations. In the known valuations case,  we show the existence of a $1-1/e$ approximation algorithm for welfare maximization in bipartite signaling, and a constant approximation algorithm for revenue maximization in communication-constrained signaling. Our results extend to the unknown valuations case when the Bayesian prior $\D$ has constant-size support. Finally, we show that even in communication-constrained signaling with known valuations, approximating welfare or revenue to a factor better than $1-1/e$ is $NP$-hard.

In a \emph{bipartite signaling problem}, the set of valid signaling maps is represented explicitly as a bipartite graph with $\Omega$ on the left hand side, $\S$ on the right hand side, and a set of edges $E \sse \Omega \cross \S$, as well as an integer $k(=2^b)$. A signaling map $f \in \S^\Omega$ is valid if $(j, f(j)) \in E$ for each $j \in \Omega$, and moreover $|f(\Omega)| \leq k=2^b$ (i.e., the signals can be communicated with at most $b$ bits). We refer to the edge set $E$ as the \emph{bipartite graph constraint}, which limits the compatible signals with each item, and the integer $k$ as the communication constraint, which limits the total number of signals used. To ensure the existence of at least one feasible signaling map and simplify our results, we assume that in instances with $k < |\S|$, there exists a ``no information'' signal $s_0 \in \S$ such that $(j,s_0) \in E$ for all items $j \in \Omega$.

\subsection{Welfare Maximization with Known Valuations}
\label{sec:bipartite_welfare_known}
In the known valuations setting, we show the existence of a polynomial-time, $e/(e-1)$-approximation algorithm for welfare maximization in bipartite signaling. Formally, we prove the following result.

\begin{theorem}\label{thm:bipartite_welfare}
  For bipartite signaling with known valuations, there is a randomized, polynomial-time, $e/(e-1)$-approximation algorithm for  computing the welfare-maximizing signaling scheme.
\end{theorem}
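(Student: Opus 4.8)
The plan is to reduce welfare maximization in bipartite signaling to maximizing a monotone submodular function subject to a matroid constraint, and then invoke the continuous-greedy algorithm of Vondrak \cite{vondrak08}. By Lemma~\ref{lem:det_welfare} it suffices to search over deterministic signaling maps $f:\Omega\to\S$ that respect the bipartite graph constraint $E$ and use at most $k$ distinct signals.

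First I would rewrite the welfare of a deterministic map in a form that exposes submodularity. From \eqref{eq:welfare}, a deterministic $f$ has $\textrm{welfare}(f)=\sum_{s}\max_i\sum_{j:f(j)=s}\hat v_i(j)$, and since the inner $\max_i$ of a sum is achieved by \emph{some} fixed bidder, this equals $\max_{(i_s)_s}\sum_j \hat v_{i_{f(j)}}(j)$, where $(i_s)$ assigns a ``representative'' bidder to each used signal. Hence the welfare optimum equals the optimum of the following problem: choose a set $R$ of at most $k$ signal--bidder pairs, at most one pair per signal, and route each item to its best available compatible pair. Formally, on the ground set $U=\S\times[n]$ define
\[
  h(R)\;=\;\sum_{j\in\Omega}\max\Bigl(\{0\}\cup\bigl\{\,\hat v_i(j)\;:\;(s,i)\in R,\ (j,s)\in E\,\bigr\}\Bigr).
\]
Each summand is the maximum of a set of nonnegative constants over the elements of $R$ lying in a fixed ``allowed'' subset, hence a monotone submodular function of $R$; summing over $j\in\Omega$ keeps $h$ monotone submodular. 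The feasible sets --- at most one pair per signal, at most $k$ pairs in all --- are exactly the independent sets of the rank-$k$ truncation of the partition matroid whose blocks are $\{(s,i):i\in[n]\}$ for $s\in\S$, and truncation preserves the matroid property. Moreover $h$ has a polynomial-time value oracle: given $R$, each item independently picks its best compatible pair in $O(|\Omega|\cdot|R|)$ time. One should also reserve a slot for a pair on the ``no-information'' signal $s_0$ (available when $k<|\S|$, and unnecessary otherwise), so that every item can be routed somewhere without exceeding the budget; this handles feasibility and makes the equivalence of optima exact in both directions (one direction by taking $R=\{(s,i_s):s\in f(\Omega)\}$, the other by routing each item to the signal of its $h$-maximizing pair).

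With these pieces in place, I would run the $(1-1/e)$-approximation of Vondrak \cite{vondrak08} for maximizing a monotone submodular function over a matroid (continuous greedy on the matroid polytope followed by rounding), which is randomized polynomial time given the value oracle, and output the signaling map induced by the returned set $R$ (route every item to its best compatible pair, using $s_0$ for the rest). Since the two optima coincide, this yields the claimed randomized polynomial-time $e/(e-1)$-approximation.

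The step where the idea must be exactly right, rather than routine, is this reformulation: one has to take the ground set to be signal--bidder pairs (not signals), which is what simultaneously makes the objective transparently submodular \emph{and} makes the value oracle trivial, while verifying that the ``at most $k$ signals with one representative per signal'' constraint really is a matroid (it is, as a truncated partition matroid) and that the reformulated optimum equals the true welfare optimum, including the bookkeeping around $s_0$ that keeps every instance feasible. Everything downstream is a black-box call to standard submodular-maximization machinery, and the earlier observations (Lemma~\ref{lem:det_welfare} and the second-price reduction) justify restricting attention to deterministic schemes in the first place.
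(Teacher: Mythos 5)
Your proposal is correct and follows essentially the same route as the paper: the paper likewise takes the ground set to be signal--bidder pairs $\S\times[n]$, writes the objective as $\sum_{j}\max_{(s,i)\in W}w_{sij}$ with $w_{sij}=\hat v_i(j)$ for $(j,s)\in E$ and $0$ otherwise, observes this is a sum of max-of-weights (hence submodular) functions over a truncated partition matroid, and invokes Vondrak's $(1-1/e)$ algorithm, with the same $s_0$ bookkeeping to restore feasibility. The only cosmetic difference is that the paper first phrases the reduction via ``feasible winner mappings'' $w:\S\to[n]\cup\{*\}$ before relaxing to subsets of $\S\times[n]$, which you fold into a single step.
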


As a warmup, in Section \ref{sec:warmup}, we show that the special case of bipartite signaling without the communication constraint reduces almost directly to combinatorial auctions with XOS valuations. We then extend the result to the general case in Sections \ref{sec:reducemap} and \ref{sec:optimizemap} through a non-trivial reduction to submodular function maximization subject to a matroid constraint.

%\begin{note}
%  It is worth noting that while the algorithm used to compute the signaling scheme of Thereom \ref{thm:bipartite_welfare} is randomized, the signaling scheme itself is deterministic. In other words, the outcome of the randomized algorithm is announced to the players, who face a known mapping from items to signals. The expected welfare, over randomization in the algorithm computing the deterministic signaling scheme, is at least $1-1/e$ of the maximum welfare of any signaling scheme (deterministic or otherwise).
%\end{note}

\subsubsection{\bf A Special Case: Without the Communication Constraint}
\label{sec:warmup}
 We show that  welfare maximization in bipartite signaling with known valuations and without a communication constraint --- i.e. with $k=|\S|$ --- reduces to welfare-maximization in combinatorial auctions\footnote{In \emph{combinatorial auctions}, there is a set $M$ of items and a set $N$ of players. Each player is equipped with a \emph{valuation} mapping subsets of M to the real numbers. Welfare maximization in combinatorial auctions is the problem of assigning the items to the players in order to maximize the sum of the players' values for their assigned bundle of items.} with explicitly-represented XOS valuations,\footnote{A set function $f: 2^X \to \RR$ is called XOS if it can be written as the maximum of additive set functions. Specifically, if $f(A) = \max_{i=1}^k \sum_{j \in A} w_{ij}$ for some integer $k$, and weights $w_{ij} \in \RR$ for $i \in [k]$ and $j \in X$. We say an XOS function is represented explicitly if the weights $w_{ij}$ are given as input.} which can be approximated to within a factor of $\frac{e}{e-1}$ as shown by Dobzinski and Schapira \cite{DS06}.

We are given a set of items $\Omega$, a distribution $p \in \Delta_\Omega$ on items, a set of signals $\S$, a bipartite graph $E \sse \Omega \cross \S$, and known valuations $v_1,\ldots,v_n \in \RR^\Omega$. As in Section \ref{sec:prelim}, we use $\hat{v}_i(j) = p_j v_i(j)$ as shorthand. A valid deterministic signaling scheme $f: \Omega \to \S$ partitions the items $\Omega$ among signals $\S$ --- we let $B_s=f^{-1}(s)$ denote the ``bundle'' of items mapped to signal $s \in \S$. Appealing to Equation \eqref{eq:welfare}, the welfare of $f$ can then be written as:
\[\sum_{s \in \S}  \max_{i=1}^n  \sum_{j \in B_s} \hat{v}_i(j). \]
This can be extended to arbitrary partitions of the items --- i.e. partitions not necessarily respecting the bipartite graph --- by letting $w_{sij}= \hat{v}_i(j)$ for $(j,s) \in E$, and $w_{sij}=0$ for $(j,s) \not\in E$, and defining:
\[ welfare(B)= \sum_{s \in \S}  \max_{i=1}^n  \sum_{j \in B_s} w_{sij}. \]
Since the weights $w_{sij}$ do not reward assignments of items to signals not respecting the bipartite graph, finding the welfare-maximizing deterministic signaling scheme reduces to finding a partition $B$ of items among signals maximizing $welfare(B)$.

We observe that this is an instance of combinatorial auctions with XOS valuations. Namely, if we interpret the signals $\S$ as the ``bidders'' in combinatorial auctions, and $B_s$ as the ``bundle'' of items assigned to $s$, the valuation function of $s$ is simply the XOS function $f(A) = \max_{i=1}^n \sum_{j \in A} w_{sij}$. Welfare maximization in combinatorial auctions, when players have XOS valuations written explicitly, admits an $e/(e-1)$ approximation algorithm that runs in polynomial time, as shown in \cite{DS06}. When combined with Lemma \ref{lem:det_welfare}, this proves Theorem \ref{thm:bipartite_welfare} in the absence of a communication-bounded constraint.

\subsubsection{\bf The General Case: Reduction to Optimization over Mappings}
\label{sec:reducemap}
We now consider the bipartite signaling problem with a cardinality constraint $k$ on the number of signals used. As should be clear from Section \ref{sec:warmup}, the general case of bipartite signaling reduces to a generalization of combinatorial auctions with XOS valuations --- namely, with the additional constraint that at most $k$ players win any items in the combinatorial auction. We are not aware of previous work on this problem, and therefore design an $e/(e-1)$ algorithm for bipartite signaling directly. We break our proof in two parts: first, in this section we show that computing an approximately welfare-maximizing scheme reduces to ``guessing'' the mapping from signals to winning players, then in Section \ref{sec:optimizemap} we show how to express optimization over these mappings as submodular function maximization subject to a \emph{matroid}\footnote{Recall a \emph{matroid} (e.g.~\cite{oxley})
 is a ground set~$X$
 and a non-empty collection~$\I \sse 2^X$ of \emph{independent sets} such that: (i) whenever $S$ is independent and $T \sse S$, $T$ is also
 independent; (ii) whenever $S,T \in \I$ with $|T| < |S|$, there is
 some $x \in S \sm T$ such that $T \cup\ \{x\} \in \I$.}
 constraint, which admits an $e/(e-1)$ approximation algorithm by the result of \cite{vondrak08}.

   We begin by observing that every signaling scheme induces a many-to-one mapping $w:\S \to [n] \union \set{*}$, which maps each signal to the winning player given that signal, where $*$ denotes an unused signal. For determinstic signalling schemes that are valid for our instance of bipartite signalling,  $w(s)=*$ for all but at most $k$ signals $s$.  We call such mappings $w$ \emph{feasible winner mappings}. We reduce the problem of computing a near optimal signalling scheme to an optimization problem over feasible winner  mappings $w$.

   Given a  ``guess'' for the feasible mapping $w$ associated with the welfare-maximizing signalling scheme, computing a deterministic optimal signalling scheme is trivial: each item $j \in \Omega$ is mapped to a winning player who likes it most subject to respecting the bipartite graph $E$. Formally, for every mapping $w: \S \to [n] \union \set{*}$ where $w(s) \neq *$ for at most $k$ signals, we let  $f_w: \Omega \to \S$ be a deterministic signalling scheme satisfying
%using $\Gamma_w(j)$ to denote the set of signals $s$ with $(j,s) \in E$ and $w(s) \neq *$,
 \[f_w(j) \in \argmax_{s: (j,s) \in E, w(s) \neq *}  v_{w(s)}(j)\] for items $j$ where such a signal $s$ exists, and $f_w(j)$ is the ``no information'' signal $s_0 \in \S$ otherwise (recall that $s_0$ is a valid signal for all items, see  Section \ref{sec:prelim}). The social welfare of  signalling scheme $f_w$ is, by appealing to Equation \eqref{eq:welfare}, at least
\[welfare(w)= \sum_{s : w(s) \neq *}  \ \  \sum_{j \in f_w^{-1}(s)} \hat{v}_{w(s)}(j). \] Reversing the order of summation, and observing that $f_w$ maps $j$ to a signal maximizing $v_{w(s)}(j)$ subject to respecting the bipartite graph, gives the following equivalent expression for $welfare(w)$:

\begin{equation}\label{eq:welfaremap}
welfare(w)=\sum_{j \in \Omega} \left( \max_{s: (j,s) \in E, w(s) \neq *} \hat{v}_{w(s)}(j) \right).
\end{equation}
The social welfare of $f_w$ is at least $welfare(w)$. Note, however, that $f_w$ may be infeasible, in that it may use up to $k+1$ signals including $s_0$ --- however, in that case $w(s_0)= *$, and therefore a feasible signalling scheme with welfare at least $welfare(w)$ can be gotten by simply choosing an arbitrary signal $s \neq s_0$ with $w(s) \neq *$,  and reassigning all items $f_w^{-1}(s)$ to the ``no information'' signal $s_0$.  Moreover, when  $w$ is the mapping associated with the welfare-maximizing valid signalling scheme, $f_w$ is a welfare maximizing valid signalling scheme with welfare exactly equal to $welfare(w)$. Therefore, finding an approximately welfare-maximizing signalling scheme reduces to finding a mapping $w$ approximately maximizing $welfare(w)$. We summarize this in  the following Lemma.

\begin{lemma}\label{lem:welfaremap}
  Fix $\alpha > 0$. Computing an $\alpha$-approximately welfare maximizing bipartite signalling scheme reduces, in polynomial time, to computing a feasible winner mapping $w: \S \to [n] \union \set{*}$ which $\alpha$-approximately maximizes $welfare(w)$, as given in Equation \eqref{eq:welfaremap}.
\end{lemma}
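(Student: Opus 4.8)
The plan is to make precise the informal discussion immediately preceding the lemma statement. The key claim is a two-way reduction: (i) given any feasible winner mapping $w$, the scheme $f_w$ (after a minor fix-up) is a valid signalling scheme whose welfare is at least $welfare(w)$; and (ii) if $w^\star$ is the winner mapping \emph{induced} by the welfare-maximizing valid signalling scheme $f^\star$ (which exists and may be taken deterministic by Lemma \ref{lem:det_welfare}), then $welfare(w^\star) \geq welfare(f^\star)$. Combining (i) and (ii): if $w$ is an $\alpha$-approximate maximizer of $welfare(\cdot)$ over feasible winner mappings, then (i) gives a valid scheme of welfare at least $welfare(w) \geq \alpha \cdot welfare(w^\star) \geq \alpha \cdot welfare(f^\star)$, i.e. an $\alpha$-approximate welfare-maximizing signalling scheme. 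Since $f_w$ is computable from $w$ in polynomial time (it is just a pointwise $\argmax$ over the at most $k$ active signals compatible with each item), this is a polynomial-time reduction.

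First I would verify (i). Fix a feasible winner mapping $w$, so $w(s) \neq *$ for at most $k$ signals. Define $f_w$ as in the statement. By Equation \eqref{eq:welfare}, since each item $j$ is routed to a signal $s$ with $w(s) \neq *$ maximizing $\hat v_{w(s)}(j)$ whenever such a signal exists (and to $s_0$ otherwise), the winner at signal $s$ earns at least $\hat v_{w(s)}(j)$ from each $j \in f_w^{-1}(s)$, so $welfare(f_w) \geq welfare(w)$ with $welfare(w)$ given by \eqref{eq:welfaremap} (reversing the order of summation as indicated). The only subtlety is feasibility: $f_w$ may use up to $k+1$ distinct signals, namely the at most $k$ signals with $w(s)\neq *$ plus possibly $s_0$ itself. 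But in that overflow case $w(s_0) = *$, so we can pick any active signal $s_1 \neq s_0$ with $w(s_1)\neq *$ and reroute all of $f_w^{-1}(s_1)$ to $s_0$; this merges two signals into one, bringing the signal count down to at most $k$, and merging bundles at a second-price auction never decreases welfare by more than we already account for — more carefully, the resulting scheme still has welfare at least $welfare(w)$ because the contribution of the items that were at $s_1$ is now counted under $s_0$, which only helps the max. (If $s_0$ was already one of the active signals, no fix-up is needed and $f_w$ is already valid.)

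Then I would verify (ii). Let $f^\star$ be an optimal valid deterministic signalling scheme (Lemma \ref{lem:det_welfare}), and let $w^\star(s)$ be the player maximizing $\sum_{j \in (f^\star)^{-1}(s)} \hat v_i(j)$ for each signal $s$ used by $f^\star$, with $w^\star(s) = *$ otherwise; since $f^\star$ is valid, $w^\star(s)\neq *$ for at most $k$ signals, so $w^\star$ is a feasible winner mapping. For each item $j$, let $s = f^\star(j)$; then $(j,s)\in E$ and $w^\star(s)\neq *$, so $j$ contributes at least $\hat v_{w^\star(s)}(j)$ to the inner $\max$ in \eqref{eq:welfaremap}, whence $welfare(w^\star) \geq \sum_j \hat v_{w^\star(f^\star(j))}(j) = welfare(f^\star)$ by the definition of $w^\star$ and Equation \eqref{eq:welfare}. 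This proves (ii) and completes the reduction. I do not anticipate a genuine obstacle here; the one place that needs care is the feasibility fix-up in (i) — making sure the signal-count bookkeeping (the possible $+1$ from $s_0$) is handled correctly and that rerouting to $s_0$ does not break validity (it cannot, since $s_0$ is compatible with every item by assumption) — but this is a routine argument rather than a conceptual difficulty.
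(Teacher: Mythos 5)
Your proposal is correct and follows essentially the same route as the paper: define $f_w$ by pointwise $\argmax$ over active signals, handle the possible $(k+1)$-st signal by rerouting one active signal's items to $s_0$ (valid since $s_0$ is compatible with every item, and welfare-preserving by non-negativity of valuations), and observe that the winner mapping induced by the optimal deterministic scheme certifies $\max_w welfare(w) \geq \mathrm{OPT}$. Your step (ii) is in fact spelled out more explicitly than in the paper, which only asserts it in one sentence, but the argument is identical.
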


\subsubsection{\bf The General Case: Optimization over Feasible Winner Mappings}
\label{sec:optimizemap}
We now show how to find a feasible winner mapping $w: \S \to [n] \union \set{*}$ which approximately maximizes $welfare(w)$, as given in Equation \eqref{eq:welfaremap}. We do so by posing this as a submodular function\footnote{A function~$f:2^U \rightarrow \RR$ defined on all subsets of a finite
non-empty set~$U$ is \emph{submodular} if

\[f(S \cup \{i\}) - f(S) \le f(T \cup \{i\}) - f(T)\]

for every $T \sse S \subset U$ and $i \notin S$.
} maximization problem subject to a \emph{truncated partition matroid}\footnote{A \emph{truncated partition matroid} (see \cite{oxley}) is a matroid $(U,I)$ with the following structure. There exists a partition $U_1,\ldots,U_m$ of ground set $U$, integers $k_1,\ldots,k_m$, and an integer $k$, such that a subset $S$ of $U$ is independent, i.e. is in $I$, if and only if $|S \intersect U_j| \leq k_j$ for all $j\in{1,\ldots,m}$, and moreover $|S| \leq k$.} constraint, and invoking the result of   Vondrak \cite{vondrak08} which shows a polynomial-time $e/(e-1)$-approximation algorithm for submodular function maximization subject to an arbitrary matroid.

 Recall that we defined a winner mapping $w$ as feasible if $w(s)\neq *$ for at most $k$ signals $s \in \S$. In order to pose our optimization problem as a constrained submodular maximization problem, we first relax the set of feasible mappings as follows. We consider many-to-many mappings $W \sse \S \cross [n]$ from signals to players, and define the welfare of such a mapping $W$ as follows:
\begin{equation}\label{eq:welfaremultimap}
welfare(W)=\sum_{j \in \Omega} \left( \max_{i,s: (j,s) \in E, (s,i) \in W} \hat{v}_i(j) \right).
\end{equation}
In other words, each item $j$ may be  assigned to a player $i$ so long as there is a signal $s$ that is valid for $j$, and $i$ as one of the ``winning'' players of $s$ as given by $W$; the welfare-maximizing such assignment is used to calculate $welfare(W)$.

When $W$ is a many-to-one mapping --- i.e. assigns to each signal $s$ at most one winning player $w(s)$ --- it is easy to verify that $welfare(W)$ (Equation \eqref{eq:welfaremultimap}) is equal to $welfare(w)$ (Equation \eqref{eq:welfaremap}). Moreover, many-to-one maps $W$ satisfying $|W| \leq k$ are in one-to-one correspondance with the set of feasible winner mappings $w: \S \to [n] \union \set{*}$, where feasibility is as defined in Section \ref{sec:reducemap}.  It is simple to verify that the family of subsets $W$ of $\S \cross [n]$ satisfying those two constraints is a \emph{truncated partition matroid} on ground set $\S \cross [n]$.

We now show that $welfare(W)$ is submodular. We define for each $j \in \Omega$ and $(s,i) \in [S] \cross [n]$ a weight $w_{sij} = \hat{v}_i(j)$ when $(j,s) \in E$, and $w_{sij}=0$ otherwise. We then rewrite $welfare(W)$ as follows.
\begin{equation}
welfare(W)=\sum_{j \in \Omega} \max_{(s,i) \in W} w_{sij}.
\end{equation}
Observe that $welfare(W)$ is the sum of $|\Omega|$ set functions, namely the functions $welfare_j(W) = \max_{(s,i) \in W} w_{sij}$ for all $j \in \Omega$. It is known, and easy to verify, that any set function $f: 2^X \to \RR$ that simply associates a fixed weight $w_x$ with each element $x \in X$, and evaluates to $f(S)= \max_{x \in S} w_x$, is submodular. Therefore, by extension, $welfare_j(W)$ is submodular for each $j$. It is also well known that submodular functions are closed under summation. Therefore, we conclude that $welfare(W)$ is submodular.

Since $welfare(W)$ is submodular, and moreover the set of $W$ corresponding to feasible winner mappings is a matroid, invoking the the result of Vondrak \cite{vondrak08} yields the followin Lemma.

\begin{lemma}
  There is a polynomial-time $e/(e-1)$ approximation algorithm for computing a feasible winner mapping $w$ maximizing $welfare(w)$, as given in Equation \eqref{eq:welfaremap}.
\end{lemma}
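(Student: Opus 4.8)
The plan is to apply the algorithm of Vondrak \cite{vondrak08} for monotone submodular maximization subject to a matroid constraint essentially as a black box to the function $welfare(W)$ over the truncated partition matroid $\mathcal{M}$ identified above, and then observe that its output is \emph{already} a feasible winner mapping, so no rounding or post-processing is needed beyond the $e/(e-1)$ loss. Let $\mathcal{M}$ be the matroid on ground set $\S \cross [n]$ whose independent sets are exactly the subsets $W$ with $|W \intersect (\set{s} \cross [n])| \leq 1$ for every signal $s$ (i.e.\ $W$ is many-to-one) together with $|W| \leq k$; as already noted, these independent sets are in bijection with the feasible winner mappings $w : \S \to [n] \union \set{*}$, and for such $W$ the objective $welfare(W)$ of Equation \eqref{eq:welfaremultimap} coincides with $welfare(w)$ of Equation \eqref{eq:welfaremap}. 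Consequently $\max_{W \in \I} welfare(W) = \max_{w} welfare(w)$, so there is \emph{no} relaxation gap: $\mathcal{M}$ is precisely the feasible region we want, and the sole purpose of the many-to-many extension is to supply a submodular function defined on all of $2^{\S \cross [n]}$.

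Next I would verify the two hypotheses needed to invoke \cite{vondrak08}. First, $welfare(W)$ must be monotone submodular: submodularity was shown above (it is a sum over items $j \in \Omega$ of the functions $welfare_j(W) = \max_{(s,i) \in W} w_{sij}$, each of which assigns a fixed weight to each ground-set element and is therefore submodular), and monotonicity holds because every weight $w_{sij} = \hat{v}_i(j) = p_j v_i(j)$ is nonnegative, so enlarging $W$ cannot decrease any coordinate-wise maximum (with the convention $welfare_j(\emptyset) = 0$). Second, the algorithm needs a polynomial-time value oracle for $welfare$ and a polynomial-time independence oracle for $\mathcal{M}$: the former is immediate, since $welfare(W)$ is evaluated by scanning the $|\Omega|$ items and, for each, maximizing over the at most $|\S| \cdot n$ weights indexed by $W$; the latter is trivial for a truncated partition matroid, needing only $|W|$ and the counts $|W \intersect (\set{s} \cross [n])|$.

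Given these, the algorithm of \cite{vondrak08} runs in polynomial time and returns an independent set $W^* \in \I$ with $\Ex[welfare(W^*)] \geq (1 - 1/e)\max_{W \in \I} welfare(W)$. Translating $W^*$ to the corresponding feasible winner mapping $w^*$ and using $\max_{W \in \I} welfare(W) = \max_w welfare(w)$ from the first paragraph gives $\Ex[welfare(w^*)] \geq \frac{e-1}{e}\max_w welfare(w)$, the claimed approximation; combining with Lemma \ref{lem:welfaremap} recovers Theorem \ref{thm:bipartite_welfare}. I do not anticipate a real obstacle: the substance has all been front-loaded into the submodularity calculation and the matroid identification, and the only points that genuinely need care are (i) checking monotonicity, since \cite{vondrak08} is stated for \emph{monotone} functions, and (ii) confirming that the independent sets of $\mathcal{M}$ exhaust the feasible winner mappings so that optimizing over $\mathcal{M}$ is not a strict relaxation.
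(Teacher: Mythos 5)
Your proposal is correct and follows essentially the same route as the paper: relax to many-to-many mappings to obtain a submodular objective on $2^{\S \times [n]}$, observe that the feasible winner mappings form a truncated partition matroid whose independent sets are in bijection with them (so there is no relaxation gap), and invoke Vondrak's algorithm. The only addition beyond the paper's argument is your explicit verification of monotonicity and of the value/independence oracles, which the paper leaves implicit but which are indeed needed to apply \cite{vondrak08}.
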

\noindent Combined with Lemma \ref{lem:welfaremap}, this completes the proof of Theorem \ref{thm:bipartite_welfare}.

%Next, we relax the problem of optimization over the winner-mapping $w$ by allowing  many-to-many mapping from signals to winning players. We therefore represent such a mapping as a set $W \sse \S \cross [n]$ of signal/player pairs. For a signal $s$, we similarly use $\Gamma_W(s)$ to denote the set of players $i$ with $(s,i) \in W$. In this relaxed problem, we allow each item $j$ to be assigned to a player $i$ so long as there is a signal $s$ such that $s \in \Gamma_W(j)$ and $i \in \Gamma_W(s)$ --- in which case we say $i \in \Gamma_{E,F}(\omega)$. Observe that the social welfare of the optimal relaxed signalling scheme given a fixed choice of $F$ is \[\Ex_{\omega \sim p}\left( \max_{i \in \Gamma_{E,F}(\omega)} v_i(\omega) \right).\]
% Note that this is a submodular function of the set $F$.

% Finally, it is an easy observation that finding the optimal signalling scheme is equivalent to finding the optimal relaxed signalling scheme subject to $F$ being a many-to-one mapping from signals to players. This is a special case of submodular function maximization subject to a matroid constraint, which admits a polynomial time $(1-1/e)$ approximation algorithm, as needed to complete the proof.
% \end{proof}

\subsection{Revenue Maximization with Known Valuations}
\label{sec:revenue}
Next we show that, in the case of known valuations, there is a polynomial-time, constant factor approximation algorithm for revenue maximization in cardinality constrained signaling. Our algorithm (Algorithm \ref{alg:cardinality}) simply chooses the best of two signaling schemes, computed  via procedures \ref{alg:cardinality1} and \ref{alg:cardinality2}. We use OPTR and OPTW to denote the maximum revenue and welfare, respectively, of a communication-constrained signaling scheme,  and $v_i(*) = \ex_{j \sim p} v_{i}(j)$ to denote  player $i$'s value for a random item. Moreover, we let $v^* = max_i v_i(*)$ denote the maximum value of a player for a random item, and let $i^*$ be the player attaining this maximum. We note that $v^*$ upper-bounds the contribution of any individual player to the social welfare of any signaling scheme.

\begin{algorithm}
\caption{Algorithm for Communication-constrained Signaling}
\label{alg:cardinality}
\begin{algorithmic}[1]
%\PARAMETER $c$
\INPUT Instance of cardinality constrained signaling, given by $\Omega$, $k$, valuations $v_1,\ldots,v_n$
\OUTPUT Signaling scheme $x^*$
%\STATE Let  $U = \max_i \ex_{j \sim p} v_{i^*}(j)$ be the maximum over all players of the player's value for a random item. Let $i^*$ be the associated player
\STATE Run Procedure \ref{alg:cardinality1} to compute deterministic signaling scheme $g$.
\STATE Run Procedure \ref{alg:cardinality2}  to compute signaling scheme $x$.
\STATE Let $x^*$ be the revenue-maximizing signaling scheme among $g$ and $x$
\end{algorithmic}
\end{algorithm}

 \floatname{algorithm}{Procedure}

\begin{algorithm}
\caption{First Sub-procedure for Communication-constrained Signaling}
\label{alg:cardinality1}
\begin{algorithmic}[1]
\INPUT Instance of cardinality constrained signaling, given by $\Omega$, $k$, valuations $v_1,\ldots,v_n$
\OUTPUT Deterministic signaling scheme $g: \Omega \to [k]$
\STATE Compute a deterministic signaling scheme which approximately maximizes welfare (Section \ref{sec:bipartite_welfare_known}). \STATE Repeatedly merge pairs of signals with the same winner, until each signal  has a unique winner. Call the resulting signaling scheme $f$.
\STATE  Sort the signals in decreasing order of their contribution to the social welfare of $f$: $s_1,\ldots,s_k$
\STATE Let $g$ be the signaling scheme which merges signals $s_i$ and $s_{i+1}$ in $f$, for all odd $i$.
 \end{algorithmic}
\end{algorithm}

\begin{algorithm}
\caption{Second Sub-procedure for Communication-constrained Signaling}
\label{alg:cardinality2}
\begin{algorithmic}[1]
\INPUT Instance of cardinality constrained signaling, given by $\Omega$, $k$, and prior $\D$ over valuations $v_1,\ldots,v_n$
\OUTPUT Signaling scheme $x$
\STATE Exclude $i^*$, then compute a deterministic signaling scheme $h$ approximately maximizing welfare for other players (Section \ref{sec:bipartite_welfare_known}). Let $welfare(h)$  denote its welfare (excluding $i^*$), and let $v_s = \max_{i \neq i^*} \sum_{j \in h^{-1}(s)}  \hat{v}_i(j)$ denote the contribution of signal $s$ to the welfare.
\STATE Let $\alpha=v^*/welfare(h)$.
\STATE Let $y$ be the signaling scheme which ignores the realization of the item, and outputs signal $s$ with probability $\alpha \frac{v_s}{v^*}$. (It is easy to verify the probabilities sum to $1$).
\STATE Let $x$ be the signaling scheme which with probability $0.5$ runs $h$, and with remaining probability $0.5$ runs $y$.
 \end{algorithmic}
\end{algorithm}

 \floatname{algorithm}{Algorithm}

We now provide some intuition for our algorithm. The signaling scheme computed by Procedure \ref{alg:cardinality1} guarantees near-optimal revenue  when no individual player's contribution to the optimal social welfare is too large. An approximately welfare-maximizing signaling scheme $f$ is computed as described in Section \ref{sec:bipartite_welfare_known}, then signals are sorted in decreasing order of their contribution to the social welfare, and then pairs of signals are ``merged'' in that order. Formally, merging two signals $s$ and $t$ in a deterministic signaling scheme $f: \Omega \to \S$ gives a new signaling scheme $g$ with $g(j)=\set{s,t}$ whenever $f(j) =s$ or $f(j)=t$, and $g(j)=f(j)$ otherwise.  Merging two signals $s$ and $t$ is tantamount to forcing the two winners of the auction in each of the signals to compete, extracting the value of at least one of them as revenue. Therefore, merging pairs of signals in order of their contribution to the social welfare extracts half the welfare of all but the most valuable signal in $f$. When no individual player contributes much to the social welfare, this is a constant-factor of the welfare of $f$.

The signaling scheme computed by Procedure \ref{alg:cardinality2}, on the other hand, guarantees near-optimal revenue when a single player accounts for a large fraction of the optimal social welfare. In this case, $i^*$'s value for a random item, namely $v^*$, is on the order of the optimal social welfare. Procedure \ref{alg:cardinality2} first computes a deterministic signaling scheme $h$ which $e/(e-1)$-approximately maximizes welfare for players other than $i^*$. Then $h$ is ``mixed'' with a signaling scheme $y$ which releases no information --- i.e. outputs a random signal independent of the realization of the item. Here, mixing two signaling scheme is defined as running each with equal probability. When the probability of each signal $s$ in $y$ is proportional to the contribution of $s$ to the welfare of $h$, player $i^*$'s value conditioned on $s$ is on the order of the value of the welfare of $h$ conditioned on $s$.  Therefore, $i^*$ serves as a price-setting player for all signals, extracting a constant fraction of the social welfare of $h$ in the process. Since, by Lemma \ref{lem:rev_ub}, the welfare of $h$ is a $(1-1/e)$-approximation to the maximum revenue of any signaling scheme, this yields a constant approximation to the optimal revenue.

We now present a formal proof of the approximation ratio of our algorithm. The result follows from two Lemmas.
\begin{lemma}\label{lem:cardinality1}
 Fix $\beta >0$. If $v^* \leq \beta OPTW$, then signaling scheme $g$ --- as computed by procedure \ref{alg:cardinality1} --- has revenue at least $\frac{ (1-1/e - \beta)}{2} OPTW$.
\end{lemma}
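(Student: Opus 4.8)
The plan is to analyze the revenue of the scheme $g$ output by Procedure~\ref{alg:cardinality1} directly, following the "merging'' intuition. Write $f'$ for the deterministic scheme produced in line~1 via Section~\ref{sec:bipartite_welfare_known}, so that $welfare(f') \ge (1-1/e)\,OPTW$ and $f'$ uses at most $k$ signals. I would first verify that the merging step in line~2 is welfare non-decreasing and terminates: if two signals share a winner $i$, replacing them by their union $B$ yields a signal whose contribution to welfare is $\max_{i'}\sum_{\omega \in B}\hat v_{i'}(\omega) \ge \sum_{\omega \in B}\hat v_i(\omega)$, which equals the sum of the two original contributions (each of which was achieved by $i$). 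Since every merge strictly decreases the number of signals, the process ends with a scheme $f$ in which distinct signals have distinct winners, with $welfare(f)\ge welfare(f') \ge (1-1/e)\,OPTW$, and $f$ uses at most $\min(n,k)$ signals; consequently $g$, which merges pairs of these, remains a valid $k$-signal scheme.

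Next, order the signals $s_1,\dots,s_m$ of $f$ by welfare contribution $W_1 \ge W_2 \ge \cdots \ge W_m$, where $W_t = \sum_{\omega \in f^{-1}(s_t)}\hat v_{a(t)}(\omega)$ for $a(t)$ the unique winner of $s_t$, so $\sum_t W_t = welfare(f)$. The scheme $g$ merges $s_{2j-1}$ with $s_{2j}$ for each $j$. The heart of the argument is a per-signal revenue bound. Let $t_j = s_{2j-1}\cup s_{2j}$ and let $P$ be the probability $g$ emits $t_j$, i.e.\ the total $p$-mass of $f^{-1}(s_{2j-1})\cup f^{-1}(s_{2j})$. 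Because all $\hat v$ are nonnegative, the conditional value of player $a(2j-1)$ given $t_j$ is at least $W_{2j-1}/P$ and that of $a(2j)$ is at least $W_{2j}/P$. Since $a(2j-1)\ne a(2j)$, the second-highest conditional value over all $n$ players is at least $\min(W_{2j-1},W_{2j})/P = W_{2j}/P$, so the second-price revenue from $t_j$ is at least $P\cdot W_{2j}/P = W_{2j}$. Summing over $j$ and discarding the nonnegative contribution of the possible leftover singleton signal, $revenue(g) \ge \sum_j W_{2j} \ge \tfrac12\sum_{t\ge 2} W_t = \tfrac12\big(welfare(f) - W_1\big)$, where the middle step pairs $W_{2j}$ with $W_{2j+1}\le W_{2j}$.

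Finally, I would bound $W_1$: it is one player's total weighted value for a subset of items, hence $W_1 \le \max_i \sum_{\omega\in\Omega}\hat v_i(\omega) = v^*$. Combining gives $revenue(g) \ge \tfrac12\big((1-1/e)\,OPTW - v^*\big)$, and under the hypothesis $v^* \le \beta\,OPTW$ this is at least $\tfrac{1-1/e-\beta}{2}\,OPTW$, as claimed. The one genuinely delicate step is the per-merged-pair revenue bound: one must handle the normalization by the signal probability $P$ carefully, use monotonicity of the "second highest'' statistic under restricting to a subset of players, and invoke that the two former winners are distinct so that the second-highest conditional value dominates the smaller of their two conditional values. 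The remainder is the welfare-preservation of equal-winner merges and routine summation.
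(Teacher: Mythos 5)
Your proposal is correct and follows essentially the same route as the paper's proof: welfare-preserving merging of same-winner signals, the observation that each former winner's probability-weighted conditional value carries over additively into the merged signal so that each merged pair contributes at least the smaller of its two welfare contributions to revenue, the pairing argument giving half the welfare excluding $s_1$, and the bound $W_1 \le v^*$. Your version merely makes the normalization by the signal probability and the handling of an odd leftover signal more explicit; no substantive difference.
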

\begin{proof}
  Note that $welfare(f) \geq (1-1/e) OPTW$. It is an easy observation that Step 2, which merges signals with the same winning player, preserves the welfare of $f$.

Our assumption that any player's value for a random item is at most $\beta OPTW$ implies that signal $s_1$ accounts for no more than $\beta OPTW$ of the welfare of $f$ --- formally $\max_i \sum_{j \in f^{-1}(s)}  \hat{v}_i(j) \leq v^* \leq \beta OPTW$. Therefore, signals other than $s_1$ account for at least $(1-1/e - \beta) OPTW$ welfare in $f$.

Now, recall that $g$ simply merges pairs of signals in $f$. Given two signals $s$ and $t$ in $f$, merged into a new signal $\set{s,t}$ in $g$, and an arbitrary player $i$, we observe that \[ \sum_{j \in g^{-1}(\set{s,t})}  \hat{v}_i(j) =   \sum_{j \in f^{-1}(s)} \hat{v}_i(j) + \sum_{j \in f^{-1}(t)}  \hat{v}_i(j). \] Intuitively, the conditional value of player $i$ for the new merged signal, weighted by the probability of the signal, is equal to the sum of $i$'s  weighted conditional values for both the component signals.
%This implies that, without loss of generality, we may assume that no two signals in $x^w$ have the same winner.
This implies that, given a merged signal $\set{s,t}$ in $g$, each of the (distinct) winners of $s$ and $t$, which we denote by $i_s$ and $i_t$, have at least as much (weighted) value for $\set{s,t}$ as they did for $s$ and $t$ individually. Since the player with the greatest weighted conditional value for a signal wins, and pays the second-greatest weighted value in expectation, we conclude that the contribution of $\set{s,t}$ to the revenue in $g$ is at least the minimum of the contributions of $s$ and $t$ to the welfare of $f$.

Recalling that we sorted the signals in $f$ in decreasing order of their contribution to the social welfare, and then merged them pairwise in that order, we conclude that the revenue of $g$ is at least half the welfare of $f$ after discarding $s_1$. Our bound on the contribution of $s_1$ to the welfare of $f$ then completes the proof.
\end{proof}

\begin{lemma}\label{lem:cardinality2}
 If $v^* > \beta OPTW$, then $x$ --- as computed by procedure \ref{alg:cardinality2} --- has revenue at least $\frac{\beta}{2} (1-1/e) OPTR$.
\end{lemma}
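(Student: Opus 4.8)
The plan is to reduce the whole lemma to a single inequality,
\[ revenue(x)\ \ge\ \tfrac12\min(1,\alpha)\cdot welfare(h), \]
and then to split into the cases $\alpha\ge 1$ and $\alpha<1$, using the $(1-1/e)$-approximation guarantee of Section~\ref{sec:bipartite_welfare_known}, Lemma~\ref{lem:rev_ub}, and the hypothesis $v^*>\beta\cdot OPTW$. Two preliminaries streamline this. First, we may assume $OPTR>0$, since otherwise there is nothing to prove; then $v^*>0$ and $welfare(h)>0$, so $\alpha=v^*/welfare(h)$ is positive and well defined, and (expanding the definition) $y$ emits each signal $s$ with probability $y(s)=\alpha v_s/v^*=v_s/welfare(h)$, which is a genuine distribution because $v_s\le welfare(h)$. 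Second, the ``no information'' scheme (one signal carrying all items) witnesses $OPTW\ge v^*$, so the hypothesis forces $\beta<1$; I use this in the final case split.

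First I would fix notation for the mixed scheme. Since $x$ runs $h$ and $y$ each with probability $\tfrac12$ and $y$ ignores the realized item, $x(\omega,s)=\tfrac12\mathbf{1}[h(\omega)=s]+\tfrac12 y(s)$. Writing $r_i(s):=\sum_{\omega\in\Omega}x(\omega,s)\hat{v}_i(\omega)$ for player $i$'s probability-weighted conditional value for signal $s$ under $x$, this unfolds to
\[ r_i(s)=\tfrac12\sum_{\omega\in h^{-1}(s)}\hat{v}_i(\omega)+\tfrac12 y(s)\, v_i(*). \]
Because truthful bidding is dominant in each induced second-price auction, $revenue(x)=\sum_{s}\bigl(\text{second-largest element of }\{r_i(s)\}_{i\in[n]}\bigr)$.

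The heart of the argument is a per-signal estimate showing that $i^*$ is a price-setter for \emph{every} signal. Fix $s$ and let $i_s\in\argmax_{i\ne i^*}\sum_{\omega\in h^{-1}(s)}\hat{v}_i(\omega)$, so that $\sum_{\omega\in h^{-1}(s)}\hat{v}_{i_s}(\omega)=v_s$ by definition of $v_s$. Dropping the nonnegative cross-terms in the last display gives $r_{i_s}(s)\ge\tfrac12 v_s$, and, using $y(s)\, v^*=\alpha v_s$, also $r_{i^*}(s)\ge\tfrac12 y(s)\, v^*=\tfrac12\alpha v_s$. Since $i_s\ne i^*$, the second-largest entry of $\{r_i(s)\}_i$ is at least $\min\bigl(r_{i_s}(s),r_{i^*}(s)\bigr)\ge\tfrac12 v_s\min(1,\alpha)$; here I use the elementary fact that the second-largest member of a multiset dominates the minimum of any two of its distinct members. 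Summing over $s$ and using $\sum_s v_s=welfare(h)$ gives the reduced inequality.

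Finally I would conclude by cases. Applying the welfare approximation of Section~\ref{sec:bipartite_welfare_known} to the instance with $i^*$ removed, and then Lemma~\ref{lem:rev_ub}, gives $welfare(h)\ge(1-1/e)OPTR$. If $\alpha\ge1$, then $revenue(x)\ge\tfrac12 welfare(h)\ge\tfrac12(1-1/e)OPTR\ge\tfrac{\beta}{2}(1-1/e)OPTR$ since $\beta<1$. If $\alpha<1$, then $revenue(x)\ge\tfrac12\alpha\, welfare(h)=\tfrac12 v^*>\tfrac{\beta}{2}OPTW\ge\tfrac{\beta}{2}(1-1/e)OPTR$, where the last step uses $OPTW\ge OPTR$ (revenue never exceeds welfare for a fixed scheme) and $1-1/e\le1$. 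Either way the claim follows. I do not foresee a real obstacle: the proof is a short chain of inequalities, and the only mildly delicate points --- well-definedness of $\alpha$, validity of the distribution $y$, and the forced bound $\beta<1$ --- are handled in the preliminaries above.
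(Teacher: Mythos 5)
Your proof is correct and follows essentially the same route as the paper's: per-signal lower bounds of $v_s/2$ and $\alpha v_s/2$ on the weighted conditional values of $i_s$ and $i^*$, the second-price revenue dominating the minimum of two distinct bidders' values, summation over signals, and the bound $welfare(h)\ge(1-1/e)OPTR$ from Lemma~\ref{lem:rev_ub} and Theorem~\ref{thm:bipartite_welfare}. Your explicit $\min(1,\alpha)$ case split is in fact slightly more careful than the paper, which asserts both players have value at least $\alpha v_s/2$ (implicitly assuming $\alpha\le 1$); your handling of the $\alpha>1$ case closes that small gap without changing the argument.
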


\begin{proof}[of Lemma \ref{lem:cardinality2}]
  First, observe that $\alpha$, as stated in procedure \ref{alg:cardinality2}, is at least $\beta$. Moreover, Lemma \ref{lem:rev_ub}, combined with the result of Theorem \ref{thm:bipartite_welfare}, shows that $welfare(h) \geq (1-1/e) OPTR$. It remains to show that $x$ extracts at least a $\alpha/2$ fraction of the welfare of $h$ in revenue.

Let $i_s$ denote the winning player of signal $s$ in $h$ (after excluding $i^*$). Recall that $v_s$, as defined in procedure \ref{alg:cardinality2}, denotes the value of player $i_s$ conditioned on signal $s$, weighted by the probability of the signal. Now consider signaling scheme $x$, which mixes --- in equal measure -- signaling scheme $h$ with the scheme which outputs $s$ with probability $\alpha \frac{v_s}{v^*}$. The (weighted) value of $i_s$ for signal $s$ in $x$ is at least half what it was in $h$ --- namely $v_s / 2$. Moreover, the (weighted) value of $i^*$ for signal $s$ is at least  his value for a random item times the probability that signal $s$ is output without regard to the realized item --- doing the calculation, this is $v^* \cdot \frac{1}{2} \cdot \alpha \frac{v_s}{v^*} = \alpha v_s /2$ .

Now, in $x$ both $i^*$ and $i_s$ have (weighted) value for signal $s$ equal to at least $\alpha v_s / 2$. Therefore, the contribution of $s$ to the revenue is at least this amount.  Summing over all signals, the total revenue of $x$ is at least $\frac{\alpha}{2} welfare(h)$. This completes the proof.
\end{proof}

\noindent Setting $\beta=\frac{e-1}{2e-1}$ in Lemmas \ref{lem:cardinality1} and \ref{lem:cardinality2} proves the following theorem.
\begin{theorem}
  Algorithm \ref{alg:cardinality} computes a $\frac{2 e (2e - 1)}{(e-1)^2} \approx 8.17$ approximation to the optimal revenue in the communication-constrained signaling problem.
\end{theorem}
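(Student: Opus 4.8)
The plan is to combine Lemmas \ref{lem:cardinality1} and \ref{lem:cardinality2} through a single case split on the scalar $v^*$, pivoting on the threshold $\beta\cdot OPTW$ for a carefully chosen constant $\beta$. Since Algorithm \ref{alg:cardinality} returns whichever of $g$ and $x$ has larger revenue, it suffices to show that, regardless of the case, at least one of these two candidate schemes has revenue at least $\frac{(e-1)^2}{2e(2e-1)}\cdot OPTR$.

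First I would record the elementary inequality $OPTR \le OPTW$: in a second-price auction the winner pays the second-highest weighted conditional value, which never exceeds the highest, so the revenue of any signaling scheme is at most its welfare; maximizing over communication-constrained schemes on both sides gives the bound (this is also immediate from Lemma \ref{lem:rev_ub}). This lets me replace the $OPTW$ in the conclusion of Lemma \ref{lem:cardinality1} by $OPTR$, putting both lemmas on a common footing.

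Then the case analysis: if $v^* \le \beta\,OPTW$, Lemma \ref{lem:cardinality1} gives $\mathrm{rev}(g) \ge \frac{1-1/e-\beta}{2}\,OPTW \ge \frac{1-1/e-\beta}{2}\,OPTR$; if instead $v^* > \beta\,OPTW$, Lemma \ref{lem:cardinality2} gives $\mathrm{rev}(x) \ge \frac{\beta(1-1/e)}{2}\,OPTR$. Hence in all cases the revenue of the returned scheme $x^*$ is at least $\frac{1}{2}\min\{\,1-1/e-\beta,\ \beta(1-1/e)\,\}\cdot OPTR$. Optimizing over $\beta\in(0,1)$ reduces to equating the two terms, $1-1/e-\beta = \beta(1-1/e)$, which rearranges to $\beta = \frac{e-1}{2e-1}$; substituting back yields a guaranteed revenue of $\frac{(e-1)^2}{2e(2e-1)}\,OPTR$, i.e.\ the claimed $\frac{2e(2e-1)}{(e-1)^2}\approx 8.17$ approximation.

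I do not expect any serious obstacle here: the two lemmas already do all the work, so what remains is the one-line case split plus a scalar optimization of $\beta$. The only point needing a little care is ensuring that both branches are compared against $OPTR$ rather than $OPTW$ — which is exactly what $OPTR\le OPTW$ handles for the first branch, while the second branch is already stated in terms of $OPTR$.
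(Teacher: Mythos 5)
Your proof is correct and follows exactly the paper's intended argument: the paper's own "proof" is just the one-line instruction to set $\beta=\frac{e-1}{2e-1}$ in Lemmas \ref{lem:cardinality1} and \ref{lem:cardinality2}, and you have correctly filled in the case split, the use of $OPTR \le OPTW$ to normalize the first branch, and the balancing equation $1-1/e-\beta=\beta(1-1/e)$ that yields this $\beta$ and the ratio $\frac{2e(2e-1)}{(e-1)^2}$.
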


\begin{note}
  We note that optimizing Procedure \ref{alg:cardinality2} to mix $h$ and $y$ with probabilities $\gamma=\alpha/(1+\alpha)$ and $(1-\gamma)$, instead of 0.5 each, improves the approximation ratio in our theorem to $6.47$. We omit the details.
\end{note}

\subsection{Unknown Valuations}
We largely leave open the polynomial-time approximability of welfare and revenue maximization in the unknown valuations setting. Even in the case of welfare maximization in communication-constrained signaling, it appears challenging to design approximation algorithms with run-time scaling sub-exponentially in the support size of $\D$, the distribution over player valuation profiles. Nevertheless, we present here an extension of the result of Section \ref{sec:warmup}  to distributions $\D$ with constant size support. The runtime of the algorithm scales exponentially in the size $t$ of the support. Moreover, we establish a connection of this problem to a generalization of the combinatorial auctions problem used in Section \ref{sec:warmup}.

We assume the distribution $\D$ over valuation profiles is listed explicitly as a set of matrices $\set{v^1, \ldots, v^t}$ where $v^\ell \in \RR^{[n] \cross \Omega}$, and associated probabilities $q_1,\ldots,q_t$. Using a similar derivation to that in Section \ref{sec:bipartite_welfare_known}, the welfare of a deterministic signaling scheme $f$ can be written as
\[ \sum_{s \in \S}  \sum_{\ell=1}^t q_\ell \max_{i=1}^n  \sum_{j \in f^{-1}(s)} \hat{v}^\ell_i(j). \]
Recall that, as described in Section \ref{sec:warmup}, the variant with $t=1$ can be interpreted as an instance of combinatorial auctions with players corresponding to $\S$, who are equipped with XOS valuations. Similarly, the variant with general $t$ can be interpreted as an instance of combinatorial auctions where each player's valuation is a sum of XOS functions. We are not aware of any non-trivial polynomial-time algorithms for this problem that run in time polynomial in $t$ in the computational complexity model. However, since XOS valuations, and therefore their sums, are \emph{subadditive}, the algorithm of Feige \cite{Feige06} guarantees a $2$-approximation if players can answer \emph{demand queries}. Unfortunately we can show that, unlike for XOS functions, answering a demand query for a sum of XOS functions, even approximately to within a constant factor, is NP-hard --- the proof is deferred to the full version of this paper. However, this does not rule out other approaches to our problem not involving demand queries.

We now show that a $1-1/e$ approximation is possible in time exponential in $t$, and polynomial in all other parameters of an instance. We rewrite the welfare of a signaling scheme $f$ as follows.
\[ \sum_{s \in \S} \max_{i_1,i_2,\ldots,i_t}    \sum_{j \in f^{-1}(s)} \sum_{\ell=1}^t q_\ell \hat{v}^\ell_i(j). \]
This, again, is an instance of combinatorial auctions with XOS valuations, though each XOS function is the maximum of $n^t$ additive functions. The $e/(e-1)$-approximation algorithm of \cite{DS06} then runs in time polynomial in $n^t$.

\subsection{Hardness of Approximation}
We now show that our approximation ratio for welfare-maximization in bipartite signaling is tight, even for the special case of communication-constrained signaling and known valuations. As a corollary, the same hardness of approximation result holds for revenue maximization. We use an approximation-preserving reduction from the APX-hard problem max-cover. An instance of max-cover is given by a ground set $[m]$,  a family $A_1,\ldots,A_n$ of subsets of $[m]$, and an integer $k$. The goal is to find $k$ sets from $A_1,\ldots,A_n$ whose union is as large as possible.

Given an instance of max cover, we construct an instance of communication-constrained signaling with known valuations as follows. We let the set $\Omega$ of items be $[m]$, and associate with each set $A_i$ a player $i$ with valuation $v_i:[m] \to \set{0,1}$ such that $v_i(j)=1$ if and only if $j \in A_i$. Moreover, we let $k$ be the constraint on the number of allowable signals, and let $p$ be the uniform distribution over $\Omega$.

Given a solution $A_{i_1},\ldots,A_{i_k}$ for max-cover covering $m'$ items, we show a signaling scheme with welfare at least $m'/m$. We assign items in $A_{i1}$ to signal 1, then assign items in $A_{i_2} \sm A_{i_1}$ to signal 2, and so on until signal $k$. We observe that, for each signal $\ell \in [k]$, there is at least one player -- in particular player $i_\ell$ --- who values at least $|A_{i\ell} \sm \union_{\ell' < \ell} A_{i\ell'} |$ of the items assigned to signal $\ell$. This implies that the welfare of the signaling scheme is at least $\frac{1}{m}|\union_{\ell=1}^k A_{i_{\ell}}| = m'/m$.

Conversely, given a signaling scheme with welfare $m'/m$, we show a solution to max-cover with coverage of at least $m'$. Each signaling scheme partitions the items into $k$ subsets $B_1,\ldots,B_k$, each of which --- say $B_\ell$ --- is associated with a winning player $i_\ell$. The welfare of the signaling scheme is the sum, over all winning players $i_\ell$, of the number of items in $B_\ell$ valued by $i_\ell$, scaled by the per-item probability of $\frac{1}{m}$. Formally, the welfare is $\frac{1}{m} \sum_{\ell=1}^k |A_{i_\ell} \intersect B_\ell|$. Letting $m'= \sum_{\ell=1}^k |A_{i_\ell} \intersect B_\ell|$, it is clear from the fact that $B_1,\ldots,B_k$ is a partition of the items that $|\union_{\ell=1}^k A_{i_\ell}| \geq m'$, as needed to show a solution to max-cover with coverage at least $m'$.

The above reduction, coupled with the hardness of approximation result for max-cover given in \cite{maxcoverhard}, yields the following thereom.
\begin{theorem}\label{thm:welfare-hard}
  There is no polynomial-time $c$-approximation algorithm for welfare-maximization in communication-constrained signaling with known valuations, for any constant $c < \frac{e}{e-1}$, unless $P=NP$.
\end{theorem}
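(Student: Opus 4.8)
The reduction itself has already been exhibited in the discussion preceding the statement, so the plan is only to verify that it is \emph{approximation preserving} and then to substitute the known inapproximability of \textsf{max-cover}. Concretely, there are three things to check: (i) the map sending a \textsf{max-cover} instance $([m],A_1,\dots,A_n,k)$ to the communication-constrained signaling instance with $\Omega=[m]$, $p$ uniform, $k$ signals, and $v_i=\mathbf{1}_{A_i}$ is computable in polynomial time --- immediate; (ii) the two directions established above translate solutions back and forth while preserving their value \emph{exactly}; and (iii) \textsf{max-cover} admits no polynomial-time approximation with ratio strictly below $e/(e-1)$ unless $P=NP$.

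For step (ii), I would combine the forward construction (a cover of $m'$ elements by $k$ sets yields a deterministic signaling scheme of welfare $\ge m'/m$) with the backward construction (a deterministic signaling scheme of welfare $w$ yields, in polynomial time, $k$ sets covering $\ge mw$ elements). By Lemma~\ref{lem:det_welfare}, or simply because $welfare(\cdot,v)$ is convex so that any distribution over deterministic schemes contains a deterministic scheme at least as good, restricting attention to deterministic schemes throughout is without loss of generality. It follows that $OPTW = m^{*}/m$, where $m^{*}$ is the \textsf{max-cover} optimum, and --- more to the point --- that running a purported polynomial-time $c$-approximation for welfare-maximizing signaling and then applying the backward construction yields a cover of size at least $m\cdot(OPTW/c) = m^{*}/c$, i.e.\ a polynomial-time $c$-approximation for \textsf{max-cover}. (One may also bypass extracting a scheme altogether by noting that the reduction sends the gap instances underlying \cite{maxcoverhard} --- roughly, ``$m^{*}=m$'' versus ``$m^{*}<(1-1/e+\epsilon)m$'' --- to signaling instances with $OPTW=1$ versus $OPTW<1-1/e+\epsilon$, and any $c$-approximation of the optimal welfare \emph{value} separates these.)

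For step (iii) I would invoke the hardness of approximation for \textsf{max-cover} from \cite{maxcoverhard}: for every constant $\epsilon>0$ it is {\sf NP}-hard to produce a cover of size at least a $(1-1/e+\epsilon)$ fraction of the optimum. Given any constant $c<e/(e-1)$, choose $\epsilon>0$ small enough that $1-1/e+\epsilon<1/c$; then step (ii) converts a hypothetical polynomial-time $c$-approximation for welfare-maximizing communication-constrained signaling with known valuations into such an algorithm for \textsf{max-cover}, forcing $P=NP$. This establishes Theorem~\ref{thm:welfare-hard}. I do not expect a genuine obstacle; the only point requiring care is the bookkeeping in step (ii) --- confirming that the scheme-to-cover and cover-to-scheme translations lose \emph{nothing}, with no additive slack --- and this is precisely where the uniform per-item probability $1/m$ and the reduction to deterministic schemes (Lemma~\ref{lem:det_welfare}) are used.
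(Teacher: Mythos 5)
Your proposal is correct and follows essentially the same route as the paper: the identical reduction from \textsf{max-cover} (uniform $p$, indicator valuations, $k$ signals), the same forward and backward value-preserving translations, and the same appeal to the $(1-1/e)$-inapproximability of \textsf{max-cover}. Your explicit invocation of Lemma~\ref{lem:det_welfare} to justify restricting to deterministic schemes in the backward direction is a point the paper leaves implicit, but it is not a different argument.
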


The hardness result of Theorem \ref{thm:welfare-hard} also holds for revenue maximization. This follows from a simple reduction  from welfare maximization to revenue maximization: given an instance of cardinality constrained signaling, produce a new instance whose welfare and revenue are equal by simply duplicating each player.

\begin{corollary}\label {cor:revenue-hard}
  There is no polynomial-time $c$-approximation algorithm for revenue-maximization in communication-constrained signaling with known valuations, for any constant $c < \frac{e}{e-1}$, unless $P=NP$.
\end{corollary}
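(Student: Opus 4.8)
The plan is to give an approximation-preserving reduction from welfare maximization to revenue maximization and invoke Theorem \ref{thm:welfare-hard}. Given an instance of communication-constrained signaling with known valuations $v_1,\ldots,v_n$ and cardinality bound $k$, I would construct a new instance on the same item set $\Omega$, the same prior $p$, and the same bound $k$, but with $2n$ players: for each original player $i$ I add a duplicate $i'$ with the identical valuation $v_{i'} = v_i$. The key claim is that in this doubled instance, every valid signaling scheme has revenue exactly equal to its welfare, after which the corollary is immediate. Building the doubled instance is clearly polynomial time.

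First I would observe that duplicating players does not change the welfare of any fixed signaling scheme: by Equation \eqref{eq:welfare}, the per-signal welfare is $\max_i \sum_{j \in f^{-1}(s)} \hat{v}_i(j)$, and since each duplicate carries the same weighted valuation as its original, the maximum over the $2n$ players equals the maximum over the original $n$. Hence $\mathrm{OPTW}$ is the same in both instances, and more generally $\mathrm{welfare}(x)$ is unchanged for every $x$. Next I would argue that in the doubled instance, for each signal $s$ the induced second-price auction charges the winner its own weighted conditional value: whoever is the highest bidder on the information state induced by $s$ has a duplicate tied with it for the second-highest bid, so the second-highest bid equals the highest bid. Therefore the contribution of $s$ to the revenue equals its contribution to the welfare; summing over signals gives $\mathrm{revenue}(x) = \mathrm{welfare}(x)$ for every valid $x$ in the doubled instance, and in particular $\mathrm{OPTR} = \mathrm{OPTW}$ there.

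Combining these, a polynomial-time $c$-approximation for revenue maximization applied to the doubled instance outputs a signaling scheme $x$ with $\mathrm{welfare}(x) = \mathrm{revenue}(x) \geq \tfrac{1}{c}\,\mathrm{OPTR} = \tfrac{1}{c}\,\mathrm{OPTW}$, i.e.\ a $c$-approximation to welfare maximization in the original instance, contradicting Theorem \ref{thm:welfare-hard} for any constant $c < \frac{e}{e-1}$ unless $P = NP$. The only point requiring care is the handling of ties in the second-price auction: I must confirm that an exact duplicate of the winner forces the second-highest bid to equal the winner's bid regardless of the tie-breaking rule, and that this matches the revenue notion used throughout the paper. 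I expect this tie-breaking bookkeeping to be the only mild obstacle, and it is routine.
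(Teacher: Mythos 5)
Your proposal is correct and is exactly the paper's argument: the paper proves the corollary by the same player-duplication reduction, noting that duplicating each player makes revenue equal welfare (since the second-highest weighted conditional bid, per Equation \eqref{app:eq:revenue}, coincides with the highest when every bid appears twice), and then invoking Theorem \ref{thm:welfare-hard}. You have merely filled in the routine bookkeeping that the paper leaves implicit.
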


%%% Local Variables: 
%%% mode: latex
%%% TeX-master: "signalling"
%%% End: 

 \section{Conclusions and Future Work}

Our results initiate the study of signaling in constrained settings. Whereas we obtain preliminary positive and negative results for some natural problems in this setting, we leave open a rich selection of algorithmic problems. We conclude with a statement of several of these open questions.

We leave several open questions in the bipartite signaling setting. Is there a constant-factor approximation algorithm for revenue maximization in this setting with known valuations? What about welfare or revenue in the \emph{unknown} valuation setting, for which we obtain no nontrivial guarantees? 

More generally, we leave open structural questions in constrained signalling problems more generally. We showed that there always exists a deterministic welfare-maximizing signaling scheme. What about the relative power of deterministic and randomized signaling schemes for \emph{revenue maximization}? It was shown in \cite{EmekFGLT12} that in unconstrained settings, there always exists a revenue-maximizing scheme with at least half the optimal welfare. Does such a tradeoff hold in constrained signaling problems?

Finally, there are many other natural signaling problems one may consider. For example, what if products are given as points in a high dimensional hypercube or a high dimensional euclidean space, and an auctioneer must signal a subset of the coordinates? Problems of this form appear related to deep questions in learning theory, such as learning $k$-juntas and others.

\bibliography{signaling}
\bibliographystyle{alpha}

 \appendix

\section{Proofs and Discussion from Section \ref{sec:prelim}}
\label{app:structural}
\subsection{Convexity and Determinism}
Given a constrained signaling problem, we observe that welfare is a convex function of the marginal probabilities $x(j,s)$, and conclude that the welfare-maximizing  constrained signaling scheme is deterministic. Formally, for every valid signaling map $f \in \F \sse \S^\Omega$, we associate a vector $x^f \in \set{0,1}^{\Omega \cross \S}$ where $x^f(j,s)=1$ if and only if $f(j)=s$. A signaling scheme is then associated with a vector $x$ in the convex hull of $\set{x^f : f \in \F}$, and has welfare as given in Equation \ref{eq:welfare}. We observe $welfare(x)$ is a convex function of $x$, and  invoke the following fact:
\begin{fact}
 Let $\P$ be a polytope in Euclidean space, and let $g$ be a convex function. The maximum of $g$ over $\P$ is attained at a vertex of $\P$.
\end{fact}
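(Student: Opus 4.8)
The plan is to reduce the statement to Jensen's inequality by invoking the Minkowski--Weyl representation of a polytope as the convex hull of its (finitely many) vertices. So first I would recall that $\P$, being a bounded polyhedron, equals $\mathrm{conv}(V)$ where $V = \set{v_1,\ldots,v_m}$ is the finite set of vertices of $\P$; this is a standard fact about polytopes and can simply be cited.

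Next, fix an arbitrary point $x \in \P$. By the above, write $x = \sum_{i=1}^m \lambda_i v_i$ with $\lambda_i \geq 0$ and $\sum_{i=1}^m \lambda_i = 1$. Applying the defining inequality of convexity of $g$ inductively over the convex combination (Jensen's inequality) gives $g(x) \leq \sum_{i=1}^m \lambda_i g(v_i) \leq \max_{i=1}^m g(v_i)$. Since this bound holds for every $x \in \P$, and since each vertex $v_i$ is itself a point of $\P$, we conclude $\sup_{x \in \P} g(x) = \max_{i=1}^m g(v_i)$. The right-hand side is a maximum over a finite set, hence is attained at some vertex $v_{i^\ast}$, and that $v_{i^\ast}$ is a maximizer of $g$ over all of $\P$.

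The only step that needs any care — and the closest thing to an obstacle here — is the appeal to the fact that a polytope is the convex hull of its vertices; everything else is the one-line Jensen computation. A pleasant feature worth noting in the write-up is that this argument does not require a separate compactness/continuity argument to guarantee that the maximum of $g$ over $\P$ exists in the first place: the argument directly exhibits the maximizer among the finitely many vertices, so existence comes for free. (If one prefers, one can alternatively note that a finite convex function on Euclidean space is continuous and $\P$ is compact, but this is unnecessary.)
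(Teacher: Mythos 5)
Your proof is correct, and it is the standard argument (Minkowski--Weyl plus Jensen's inequality) that the paper implicitly relies on: the paper states this as a Fact without proof, treating it as a known result about maximizing convex functions over polytopes. Your observation that the argument also yields existence of the maximum without a separate compactness step is accurate and a nice touch.
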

\noindent We conclude that there is a welfare-maximizing deterministic signaling scheme.
\begin{lemma}\label{app:lem:det_welfare}
  For any constrained signaling problem with unknown valuations, there is valid deterministic signaling scheme which maximizes expected welfare.
\end{lemma}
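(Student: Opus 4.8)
The plan is to follow the standard extreme-point argument for convex maximization, making precise the convexity observation already sketched above. First I would fix the correspondence between signaling schemes and points in a polytope: to each valid map $f \in \F$ associate the $0/1$ vector $x^f \in \set{0,1}^{\Omega \cross \S}$ with $x^f(\omega,s)=1$ iff $f(\omega)=s$, and note that an arbitrary randomized valid scheme $x \in \Delta_\F$ affects the welfare only through its vector of marginals $x(\omega,s)=\Pr_{f \sim x}[f(\omega)=s]$, which lies in $\mathcal{P} := \mathrm{conv}\set{x^f : f \in \F}$. The key bookkeeping step is to record that, via Equation~\eqref{eq:welfare}, $welfare$ is a function of the marginals alone, hence descends to a well-defined function on $\mathcal{P}$, and that the map $\Delta_\F \to \mathcal{P}$ sending a scheme to its marginals is affine and surjective.

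Next I would verify convexity of $welfare$ on $\mathcal{P}$. For a fixed valuation profile $v$, Equation~\eqref{eq:welfare} writes $welfare(x,v)=\sum_{s \in \S}\max_i \sum_{\omega} x(\omega,s)\,\hat{v}_i(\omega)$, a finite sum of pointwise maxima of finitely many linear functionals of $x$, hence convex in $x$. In the unknown-valuations case $welfare(x)=\Ex_{v \sim \D}[welfare(x,v)]$, and since an expectation (a nonnegative mixture) of convex functions is convex, $welfare(\cdot)$ is convex on $\mathcal{P}$.

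It then remains to invoke the elementary fact that a convex function on a polytope attains its maximum at a vertex, together with the observation that every vertex of $\mathcal{P}=\mathrm{conv}\set{x^f : f \in \F}$ is one of the generating points $x^f$ — and $x^f$ is precisely the marginal vector of the \emph{deterministic} valid scheme $f$. Pulling this back through the affine surjection $\Delta_\F \to \mathcal{P}$ produces a deterministic valid scheme whose welfare equals the supremum over all valid schemes, which is the claim.

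The one point requiring care — and the only thing I would flag as a potential obstacle — is the implicit finiteness: the argument as stated treats $\mathcal{P}$ as a genuine finitely generated polytope, which is immediate when $\Omega$ and $\S$ are finite. When $\Omega$ is infinite one should either restrict to the finite-support instances arising in our applications, or replace the ``vertex of a polytope'' step with the analogous extreme-point statement for convex functions on compact convex sets (Bauer's maximum principle / Choquet-type reasoning). This is a technical caveat rather than a genuine difficulty, and I would handle it in a remark.
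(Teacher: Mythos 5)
Your argument is correct and is essentially the paper's own proof: associate each valid map $f$ with its indicator vector $x^f$, observe that welfare is a convex function of the marginals (a sum of maxima of linear functionals, preserved under the expectation over $v \sim \D$), and invoke the fact that a convex function on a polytope attains its maximum at a vertex, each vertex being some $x^f$. Your closing caveat about infinite $\Omega$ (replacing the polytope-vertex step with Bauer's maximum principle) is a sensible refinement that the paper's appendix glosses over, but it does not change the substance of the argument.
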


\subsection{Communication Constraints and Welfare}
When the number of items is finite, we observe a simple bound the number of signals needed to recover the maximum possible social welfare in the known valuations model.
\begin{fact}\label{app:fact:boundknown}
  Consider an $n$ player and $m$ item signaling problem with known valuations. There is a signaling scheme with at most $\min(m,n)$ signals, and welfare equal to that of the optimal unconstrained scheme.
\end{fact}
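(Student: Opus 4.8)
The plan is to exhibit two deterministic signaling schemes, one using $m$ signals and one using $n$ signals, each achieving the unconstrained-optimal welfare
$\textrm{OPT} := \sum_{\omega \in \Omega} p_\omega \max_{i \in [n]} v_i(\omega)$, and then take whichever of the two uses fewer signals. This immediately yields a scheme with $\min(m,n)$ signals, and it is welfare-\emph{optimal} (not merely welfare-competitive) because no scheme can beat $\textrm{OPT}$: by Equation~\eqref{eq:welfare}, $welfare(x,v) = \sum_{s} \max_i \sum_\omega x(\omega,s)\hat v_i(\omega) \le \sum_\omega \big(\sum_s x(\omega,s)\big) \max_i \hat v_i(\omega) = \textrm{OPT}$.

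The $m$-signal scheme is the full-information scheme: map each of the $m$ items $\omega$ to its own distinct signal. Each induced information set is a singleton, so by the discussion in Section~\ref{sec:prelim} the second-price auction on it awards $\omega$ to a player attaining $\max_i v_i(\omega)$, giving total welfare exactly $\textrm{OPT}$.

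The $n$-signal scheme partitions the items by their favorite player. For $i \in [n]$ let $B_i \subseteq \Omega$ be the set of items $\omega$ with $v_i(\omega) = \max_{j} v_j(\omega)$, breaking ties among the maximizers in a fixed way so that $\{B_i\}_{i\in[n]}$ is a genuine partition of $\Omega$; let $f$ be the deterministic scheme with $f(\omega)=s_i$ for $\omega \in B_i$. The key point is that, conditioned on signal $s_i$, player $i$ has the (weakly) largest expected value and therefore wins: for every $j$, pointwise domination $v_i(\omega)\ge v_j(\omega)$ on $B_i$ gives $\sum_{\omega\in B_i} p_\omega v_i(\omega)\ge \sum_{\omega\in B_i} p_\omega v_j(\omega)$, i.e. $v_i\mid s_i \ge v_j\mid s_i$ after dividing by $x(s_i)$. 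Hence by Equation~\eqref{eq:welfare} the contribution of $s_i$ is $\sum_{\omega\in B_i} p_\omega v_i(\omega) = \sum_{\omega\in B_i} p_\omega \max_j v_j(\omega)$, and summing over $i$ recovers $\sum_{\omega\in\Omega} p_\omega \max_j v_j(\omega) = \textrm{OPT}$.

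There is essentially no obstacle here; the only points that need care are the tie-breaking in the definition of the $B_i$ (so that $f$ is well defined and uses at most $n$ signals) and the observation that pointwise domination on $B_i$ is preserved under the probability-weighted averaging that defines the conditional valuations $v_i\mid s_i$, which is what guarantees the ``correct'' winner in each induced second-price auction.
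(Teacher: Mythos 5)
Your proof is correct and matches the paper's argument, which is stated in one line: the full-information scheme ($m$ signals, one per item) and the scheme announcing the identity of the player who values the realized item most ($n$ signals) are both welfare-optimal. Your write-up just fills in the details the paper leaves implicit (the upper bound via Equation~\eqref{eq:welfare}, the tie-breaking, and the fact that pointwise domination on $B_i$ survives the conditional averaging).
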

This follows from the fact that that the scheme that announces the identity of the item, and the scheme that announces the identity of the player who values the item most, are both optimal.

Next, we observe that imposing a communication constraint of $k$ signals reduces the expected welfare of the optimal scheme by a factor of $k/\ell$, when $\ell$ is the number of signals used in the optimal scheme. Invoking Fact \ref{app:fact:boundknown}, this implies that  the best $k$-signal scheme recovers at least a $k/\min(n,m)$ fraction of the welfare of the best unconstrained scheme in the known valuations model.

\begin{lemma}\label{app:lem:scalingsignals}
  Consider an $n$-player and $m$-item signaling problem of unknown valuations. For every $\ell$-signal scheme $x$ and $k \leq \ell$, there is a $k$-signal scheme $y$ such that $welfare(y) \geq \frac{k}{\ell} welfare(x)$.
\end{lemma}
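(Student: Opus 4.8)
The plan is to take an arbitrary $\ell$-signal scheme $x$ and extract from it a $k$-signal scheme by keeping only the $k$ ``best'' signals (and routing everything else to a common dummy signal), arguing that the retained signals carry at least a $k/\ell$ fraction of the welfare. First I would recall from Equation~\eqref{eq:welfare} that $welfare(x) = \sum_{s \in \S} \bigl( \max_{i} \sum_{\omega \in \Omega} x(\omega,s)\hat v_i(\omega) \bigr)$, so the welfare decomposes additively as a sum over the (at most $\ell$) signals, where the contribution of signal $s$ is $c_s := \max_i \sum_\omega x(\omega,s)\hat v_i(\omega) \ge 0$. Since each $c_s$ is nonnegative and there are at most $\ell$ of them, by averaging the $k$ signals with the largest contributions account for at least a $k/\ell$ fraction of $\sum_s c_s = welfare(x)$.

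The construction of $y$ is then: let $s_1,\dots,s_k$ be the $k$ signals of largest contribution $c_{s}$ in $x$; define $y$ to send $\omega$ to $s_j$ with probability $x(\omega,s_j)$ for $j \le k$, and to send $\omega$ to a designated ``merge'' signal $s^\star$ (say $s^\star = s_1$, folding the discarded mass into it) with the remaining probability. I would then check two things. First, $y$ is a valid $k$-signal scheme: its support has size at most $k$, and I should confirm it is still a legal randomized signaling scheme (a distribution over deterministic maps) — this is where Lemma~\ref{app:lem:det_welfare}'s viewpoint of schemes as points in the convex hull of $\{x^f\}$ is convenient, since collapsing signals corresponds to a coordinate-wise operation that stays in the hull. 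Second, $welfare(y) \ge \sum_{j=1}^k c_{s_j}$: for each retained signal $s_j$ with $j \ge 2$ the contribution is unchanged, and for the merged signal $s^\star$ the winner's value can only increase when extra items are folded in (the inner $\max_i$ over a larger weighted sum dominates $c_{s_1}$, since the quantity being maximized is the value $c_{s_1}$ plus a nonnegative term, using $\hat v_i \ge 0$). Combining, $welfare(y) \ge \sum_{j=1}^k c_{s_j} \ge \frac{k}{\ell}\, welfare(x)$.

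The main obstacle — really the only subtlety — is making the merging step rigorous in the \emph{unknown valuations} setting, where $welfare(x)$ is an expectation over $v \sim \D$ of $welfare(x,v)$. I would handle this by noting that the decomposition into per-signal, nonnegative contributions and the ``merging only helps'' monotonicity both hold pointwise for every fixed $v$ (each $c_s(v) \ge 0$ since $\hat v_i \ge 0$), and the choice of which $k$ signals to keep must be made independent of $v$. So I would instead pick the $k$ signals maximizing the \emph{expected} contribution $\Ex_{v \sim \D}[c_s(v)]$; then $\sum_{j=1}^k \Ex[c_{s_j}(v)] \ge \frac{k}{\ell} \sum_s \Ex[c_s(v)] = \frac{k}{\ell} welfare(x)$ by the same averaging argument, and $welfare(y) = \Ex_v[welfare(y,v)] \ge \Ex_v[\sum_{j=1}^k c_{s_j}(v)]$ by pointwise monotonicity of merging. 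This gives the claimed bound and completes the proof.
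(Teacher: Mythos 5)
Your proposal is correct and matches the paper's own argument: both decompose $welfare(x)$ via Equation~\eqref{eq:welfare} into nonnegative per-signal contributions, keep the $k$ signals of greatest (expected) contribution, fold the remaining items into the retained signals, and use nonnegativity of $\hat v_i$ plus averaging to get the $k/\ell$ bound. Your explicit treatment of the unknown-valuations case (choosing the retained signals by expected contribution, then using pointwise monotonicity of merging) is a careful spelling-out of a detail the paper's terse proof leaves implicit, not a different route.
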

Lemma \ref{lem:scalingsignals} follows easily from Equation \eqref{eq:welfare}, which expresses the welfare of a signaling scheme as the sum of the contributions of all its signals. To see this, let $\S'$ be the set of $k$ signals with the greatest contribution to $welfare(x)$, and let $y$ be any scheme with signal set $\S'$, satisfying $y(j) = x(j)$ whenever $x(j) \in \S'$.

\subsection{Relating Revenue and Welfare}
We now mention a useful, though elementary, upper bound on the optimal revenue achievable via a signaling scheme. This bound follows immediately from the fact that the revenue of a second price auction is at most the welfare of the same auction after excluding an arbitrary player, though we present a proof here for completeness.
\begin{lemma}
  \label{app:lem:rev_ub}
Fix an arbitrary constrained signaling problem with unknown valuations. Let $i'$ be an arbitrary player. The revenue of the revenue-optimal signaling scheme is at most the welfare of the welfare-optimal signaling scheme for all players other than $i'$.
\end{lemma}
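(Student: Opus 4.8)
The plan is to fix an arbitrary player $i'$ and an arbitrary valid signaling scheme $x$, and show that the revenue of the second-price auction induced by $x$ is at most the welfare of the same scheme $x$ restricted to the players $[n] \setminus \{i'\}$; this immediately gives the claimed bound since taking the welfare-optimal scheme for the remaining players on the right-hand side only increases it. By Lemma~\ref{lem:det_welfare} (and the remarks in Section~\ref{sec:prelim}) we may assume $x$ is deterministic and work signal-by-signal, since both revenue and welfare decompose additively over signals weighted by $x(s)$.

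First I would recall the standard single-item second-price fact: conditioned on a signal $s$, each player bids $v_i \mid s, x$, the highest bidder wins, and the price paid is the \emph{second}-highest of the conditional values, i.e. the revenue contributed by signal $s$ is $x(s) \cdot \operatorname{max2}_{i} (v_i \mid s,x)$, where $\operatorname{max2}$ denotes the second-largest value. The welfare of $x$ for the player set $[n]\setminus\{i'\}$, conditioned on $s$, is $x(s)\cdot \max_{i \neq i'} (v_i \mid s, x)$. So the per-signal claim reduces to the elementary inequality $\operatorname{max2}_{i \in [n]} a_i \le \max_{i \neq i'} a_i$ for any nonnegative reals $a_1,\dots,a_n$ and any fixed index $i'$: the second-largest of the full list is always at most the largest of the list with one element deleted, because deleting $i'$ removes at most one element, so what remains still contains an element that is $\ge$ the second-largest of the original list (either the original maximum, if $i' $ is not the argmax, or the original second-largest, if $i'$ is the argmax). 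Summing $x(s)\cdot \operatorname{max2}_i(v_i\mid s,x) \le x(s) \cdot \max_{i\neq i'}(v_i \mid s, x)$ over all $s \in \S$ gives $\mathrm{rev}(x) \le \mathrm{welfare}_{[n]\setminus\{i'\}}(x)$.

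Finally I would take $x$ to be the revenue-optimal valid signaling scheme and note the right-hand side is at most the welfare of the welfare-optimal valid scheme for players other than $i'$, since the latter is a maximum over all valid schemes including $x$ itself. (In the unknown-valuations case one additionally takes expectations over $v \sim \D$ throughout, which preserves every inequality by linearity of expectation.) There is essentially no obstacle here — the only mild subtlety is being careful that the second-price auction induced by a signal is genuinely a standard independent-private-values second-price auction (so that the textbook revenue characterization applies), which is exactly the content of the dominant-strategy discussion in Section~\ref{sec:prelim}; the combinatorial inequality on $\operatorname{max2}$ is the heart of the argument and it is a one-line case analysis.
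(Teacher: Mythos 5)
Your proof is correct and follows essentially the same route as the paper's: express the revenue signal-by-signal as the second-highest weighted conditional value, bound $\operatorname{max2}$ over all players by $\max$ over players other than $i'$, and conclude by optimizing each side separately. The only cosmetic difference is your invocation of determinism via Lemma~\ref{lem:det_welfare}, which is harmless but unnecessary since the inequality holds pointwise for any (possibly randomized) scheme $x$ directly from Equation~\eqref{eq:welfare} and its revenue analogue.
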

\begin{proof}
First, we observe that the revenue can be expressed as
%And similarly, the revenue in the second price auction is given by
\begin{align}
rev(x,v) &:= \sum_{s \in \S} x(s) \maxtwo_{i=1}^n v_i | s,x \notag \\ &= \sum_{s \in \S}  \maxtwo_{i=1}^n  \sum_{j \in \Omega} x(j,s) \hat{v}_i(j), \label{app:eq:revenue}
\end{align}
where $\maxtwo$ denotes selecting the second largest value. The expected  revenue of the auction over draws of the players' valuations, which we denote by  $rev(x)$, is the expectation of $rev(x,v)$  over $v \sim \D$.

It suffices to show that $rev(x,v)$ (equation \eqref{app:eq:revenue}) is at most $welfare(x,v_{-i'})$ (equation \eqref{eq:welfare}).
  \begin{align*}
    rev(x,v) &= \sum_{s \in \S}  \maxtwo_{i=1}^n  \sum_{j \in \Omega} x(j,s) \hat{v}_i(j) \\
    &\leq \sum_{s \in \S}  \max_{i\neq i'}  \sum_{j \in \Omega} x(j,s) \hat{v}_i(j) \\
&= welfare(x,v_{-i'})
  \end{align*}
\end{proof}

%\input{conclusion.tex}

%%% Local Variables:
%%% mode: latex
%%% TeX-master: "signalling"
%%% End:

\end{document}